\documentclass[sigconf, nonacm]{acmart}

\usepackage{array}
\usepackage[dvipsnames]{xcolor}
\usepackage{hyperref}
\usepackage{cleveref}
\usepackage{mathtools}
\usepackage{amsmath}
\usepackage{thmtools}
\usepackage{amsthm}
\usepackage[linesnumbered,ruled,vlined]{algorithm2e}
\usepackage{enumitem}
\usepackage{stackengine}
\usepackage[table]{xcolor}
\usepackage[most]{tcolorbox}
\usepackage{scalerel}
\usepackage{multirow}
\usepackage[skip=0.5pt]{subcaption}
\usepackage{multicol}
\usepackage{tabularx}
\usepackage{balance}
\usepackage{etoolbox}
\usepackage{listings}

\newcommand{\paratitle}[1]{\vspace{1mm}\noindent\textbf{{#1}.}}

\newcommand{\cut}[1]{}

\SetCommentSty{mycommfont}

\newif\ifpaper
 \paperfalse   

\newcounter{BalanceAtReference}
\ifpaper
\else
\fi
\newcounter{ReferenceIndexForBalancing}

\makeatletter

\global\@ACM@balancefalse

\def\@balancelastpageonce{%
  \ifnum\value{ReferenceIndexForBalancing}=\value{BalanceAtReference}
    \newpage
  \else
    \relax
  \fi
  \stepcounter{ReferenceIndexForBalancing}
}
\pretocmd{\bibitem}{\@balancelastpageonce}
  {}
  {\@latex@error{Patching \bibitem failed}{\@ehd}}

\makeatother

\newtheorem{theorem}{Theorem}[section]
\newtheorem{definition}[theorem]{Definition}

\crefname{algocf}{algorithm}{algorithms}
\Crefname{algocf}{Algorithm}{Algorithms}

\newcommand{\ag}[1]{{\texttt{\color{blue} AG: [{#1}]}}}

\newcommand{\todo}[1]{\textbf{\textcolor{red}{(TO DO)}}}

\newcommand{\attrset}{\ensuremath{\mathcal{A}}}
\newcommand{\dom}{\ensuremath{Dom}}
\newcommand{\prob}{\ensuremath{Pr_D}}
\newcommand{\probneighbor}{\ensuremath{Pr_{D'}}}
\newcommand{\margdiff}{\ensuremath{\mathcal{MD}}}
\newcommand{\tids}{\ensuremath{tids}}
\newcommand{\range}{\ensuremath{Range}}
\newcommand{\fairnessdef}{\ensuremath{F}}
\newcommand{\constraints}{\ensuremath{\mathcal{F}}}
\newcommand{\measure}{\ensuremath{\mathcal{U}}}

\newcommand{\mutual}{\ensuremath{\mathcal{U_\mathsf{MI}}}}

\newcommand{\mutualproxybayes}{\ensuremath{\mathcal{U}^{Bayes}_\mathsf{MI}}}

\newcommand{\mutualproxytvd}{\ensuremath{\mathcal{U}^{TVD}_\mathsf{MI}}}

\newcommand{\repair}{\ensuremath{\mathcal{U_\mathsf{R}}}}
\newcommand{\repairsat}{\ensuremath{\mathcal{U^{SAT}_\mathsf{R}}}}
\newcommand{\repairsatchunked}{\ensuremath{\mathcal{U^{SAT_\text{chunk}}_\mathsf{R}}}}
\newcommand{\contribution}{\ensuremath{\mathcal{U_\mathsf{TC}}}}
\newcommand{\jdb}{\ensuremath{D_{sj}}}
\newcommand{\crossdb}{\ensuremath{D_{cross}}}

\newcommand\symdif{\ThisStyle{\mathrel{\ensurestackMath{%
  \stackengine{.2\LMex}{\SavedStyle-}{\SavedStyle\dot{}}{U}{c}{F}{F}{L}}}}}

\newcommand{\sens}{\ensuremath{\Delta}}
\newcommand{\repaircost}{\ensuremath{dist}}
\newcommand{\neighbor}{\sim}

\newcommand{\protecteda}{\ensuremath{P}}
\newcommand{\response}{\ensuremath{O}}
\newcommand{\outcome}{\ensuremath{O}}
\newcommand{\admissible}{\ensuremath{A}}
\newcommand{\protectedset}{\ensuremath{\protecteda}}
\newcommand{\responseset}{\ensuremath{\response}}
\newcommand{\outcomeset}{\ensuremath{\outcome}}
\newcommand{\admissibleset}{\ensuremath{\admissible}}
\newcommand{\protectedvalue}{\ensuremath{p}}
\newcommand{\responsevalue}{\ensuremath{y}}
\newcommand{\admissiblevalue}{\ensuremath{a}}

\newcommand\vldbdoi{XX.XX/XXX.XX}
\newcommand\vldbpages{XXX-XXX}
\newcommand\vldbvolume{19}
\newcommand\vldbissue{9}
\newcommand\vldbyear{2026}
\newcommand\vldbauthors{\authors}
\newcommand\vldbtitle{\shorttitle} 
\newcommand\vldbavailabilityurl{https://github.com/makarenko1/Measures-for-estimating-unfairness-in-data-under-DP/blob/main/}
\newcommand\vldbpagestyle{empty} 

\newcommand\independent{\protect\mathpalette{\protect\independenT}{\perp}}
\def\independenT#1#2{\mathrel{\rlap{$#1#2$}\mkern2mu{#1#2}}}

\begin{document}

\title{Measuring Database Unfairness via Dependency Quantification Under Differential Privacy}

\author{Mariia Vologdin}
\affiliation{%
  \institution{Hebrew University}
  }
\email{mariia.vologdin@mail.huji.ac.il}

\author{Yuchao Tao}
\affiliation{%
  \institution{Independent Researcher}
}
\email{harry.t.chao@gmail.com}

\author{Amir Gilad}
\affiliation{%
  \institution{Hebrew University}
  }
\email{amirg@cs.huji.ac.il}

\begin{abstract}
Differential privacy (DP) has become the de facto standard for protecting sensitive data, providing strong guarantees that published statistics or models reveal limited information about any individual. However, privacy noise and restricted data access make it increasingly difficult to assess the fairness and reliability of private datasets. 
In this paper, we propose a formal framework for quantifying data unfairness under DP. 
We identify three core desiderata for unfairness measures based on previous work: positivity, monotonicity, and DP computability. We further instantiate them through three complementary measures: (1) a mutual information–based measure with a total variation distance proxy suitable for DP, (2) a data-repair–based measure approximated via a reduction to weighted MaxSAT, and (3) a top-$k$ tuple contribution measure that isolates the most influential records in fairness violations. We design privacy-preserving algorithms and analyze their sensitivity, accuracy, and efficiency. Extensive experiments on multiple real-world datasets demonstrate that our proposed measures faithfully approximate their non-private counterparts, effectively quantify bias under privacy constraints, and provide insights for data management.
\end{abstract}

\maketitle


\pagestyle{\vldbpagestyle}
\begingroup\small\noindent\raggedright\textbf{PVLDB Reference Format:}\\
\vldbauthors. \vldbtitle. PVLDB, \vldbvolume(\vldbissue): \vldbpages, \vldbyear.\\
\href{https://doi.org/\vldbdoi}{doi:\vldbdoi}
\endgroup
\begingroup
\renewcommand\thefootnote{}\footnote{\noindent
This work is licensed under the Creative Commons BY-NC-ND 4.0 International License. Visit \url{https://creativecommons.org/licenses/by-nc-nd/4.0/} to view a copy of this license. For any use beyond those covered by this license, obtain permission by emailing \href{mailto:info@vldb.org}{info@vldb.org}. Copyright is held by the owner/author(s). Publication rights licensed to the VLDB Endowment. \\
\raggedright Proceedings of the VLDB Endowment, Vol. \vldbvolume, No. \vldbissue\ %
ISSN 2150-8097. \\
\href{https://doi.org/\vldbdoi}{doi:\vldbdoi} \\
}\addtocounter{footnote}{-1}\endgroup


\section{Introduction}\label{sec:intro}

Differential privacy (DP)~\cite{dwork2014:textbook} has become the leading standard for safeguarding sensitive information in data analysis and machine learning (ML). It gives formal guarantees that published statistics or trained models reveal only limited information about any individual, even in the presence of auxiliary knowledge. DP has been widely adopted in practice by major organizations~\cite{erlingsson2014rappor,ding2017collecting,tang2017privacy} and governmental agencies~\cite{gdpr,ccpa,10.1145/3219819.3226070}, and has been applied to a variety of tasks including private query answering~\cite{Matrix,MWEM,PMW,HDMM,PrivateSQL,dong2022r2t}, synthetic data generation~\cite{ChenOF20,LiuV0UW21,Li21,AydoreBKKM0S21,GeMHI21,LH21,Torkzadehmahani19}, and private machine learning~\cite{DBLP:conf/ccs/AbadiCGMMT016,DBLP:conf/iclr/PapernotAEGT17,DBLP:conf/uss/Jayaraman019}. 
Despite its broad success, DP introduces a fundamental challenge: while it ensures privacy, the injected noise and limited data access make it difficult to evaluate the \emph{quality} and \emph{fairness} of private data. Users and analysts must often rely on privacy-protected datasets to make decisions or train models, without being able to assess whether these datasets are accurate, representative, or equitable. Furthermore, learning from noisy or biased data can yield inaccurate and discriminatory predictions~\cite{RenggliRK0023,shah2025comprehensive}. 

Fairness and bias in data are central concerns in data management and ML, since biased data can produce unfair outcomes even when algorithms are well-designed. Extensive research has focused on defining algorithmic fairness, proposing formal notions such as Demographic Parity ~\cite{DBLP:conf/icdm/CaldersKP09} and Conditional Statistical Parity~\cite{DBLP:conf/kdd/Corbett-DaviesP17}, among many others~\cite{KleinbergMR17,DworkHPRZ12,barocas2023fairness,heidari2019moral}. 
These notions constrain model behavior to ensure equitable treatment across protected groups. 
However, such definitions focus solely on the model, neglecting the data which itself may contain bias or unfairness. 
If the dataset encodes biased relationships between protected and outcome attributes, an algorithm may reproduce or amplify those disparities.

In this work, we shift the focus from algorithmic fairness to the \emph{fairness of data}.
A key observation underlying our work is that many algorithmic fairness definitions can be expressed as forms of \emph{conditional independence}~\cite{InterFair}. 
For example, Demographic Parity requires independence between the sensitive attribute and the model outcome, while Conditional Statistical Parity introduces a conditioning set of admissible attributes. 
Using this formulation, we reinterpret fairness as a property of the \emph{data distribution} itself rather than only of the learned classifier. This is motivated by a common assumption from a prior work~\cite{InterFair,DBLP:conf/kdd/FeldmanFMSV15,DBLP:conf/nips/CalmonWVRV17}: if the labels in a dataset reflect the outcomes a reasonable classifier would produce, then dataset fairness becomes a reliable proxy for assessing the fairness of a classifier trained on it. Moreover, using the data in other ways, such as getting summarized statistics, may lead to biased conclusions. 
Consequently, testing whether a dataset satisfies such conditional independence statements provides a natural and principled way to assess its fairness. 
This connection enables the adaptation of well-established algorithmic fairness notions into \emph{database-level fairness criteria}, which serve as the foundation for the framework we propose. 
 
Our goal is to develop a principled, quantitative framework for measuring data unfairness that remains meaningful even when sufficient noise is added to it to satisfy DP.
Prior research on measuring data quality has largely examined logical inconsistencies in data through integrity constraints, both without~\cite{DBLP:conf/sum/Bertossi18,DBLP:journals/corr/abs-1904-03403,LivshitsKTIKR21} and, recently, with privacy guarantees~\cite{mohapatra2025computinginconsistencymeasuresdifferential}. 
Yet, despite the growing literature on fairness in algorithms, no existing framework directly measures the unfairness of the private data itself, even though such measurements could give a valuable signal for data equity when the data cannot be fully observed.

Here, we introduce \emph{differentially private unfairness measures} that, given a database $D$ and a set of fairness criteria $\constraints$, compute a numerical score $\measure(\constraints,D)$, quantifying how far the data deviates from satisfying those criteria. 
We formalize three key desiderata that any unfairness measure should satisfy, following principles devised for inconsistency measures~\cite{LivshitsKTIKR21}. 
First, the \emph{Positivity} property~\cite{DBLP:journals/corr/abs-1904-03403,DBLP:journals/ijar/GrantH17} ensures that the measure is non-negative and equals zero if and only if $D$ satisfies all fairness criteria.
Second, the \emph{Monotonicity} property~\cite{DBLP:journals/corr/abs-1904-03403} states that expanding the set of fairness criteria ($\constraints_1 \subseteq \constraints_2$) cannot reduce the measured unfairness.
Third, the measure must be efficiently and accurately computable under DP, maintaining interpretability and utility despite the added noise.

\begin{figure}
\hspace{-3mm}
  \begin{subfigure}[b]{0.158\textwidth}
    \includegraphics[width=0.9\linewidth]{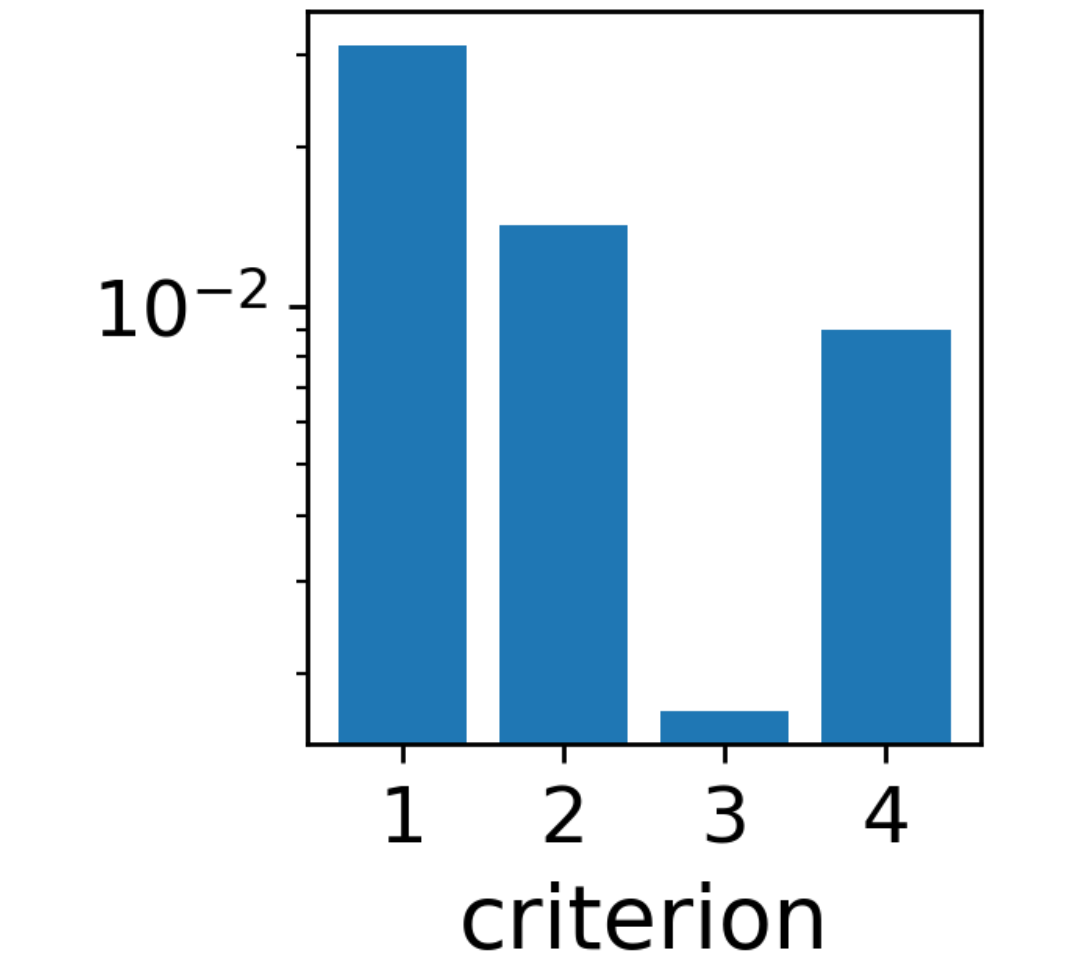}
    \caption{Values of $\mutualproxytvd$.}
    \label{fig:plot-0-1}
  \end{subfigure}
  \begin{subfigure}[b]{0.15\textwidth}
    \includegraphics[width=0.9\linewidth]{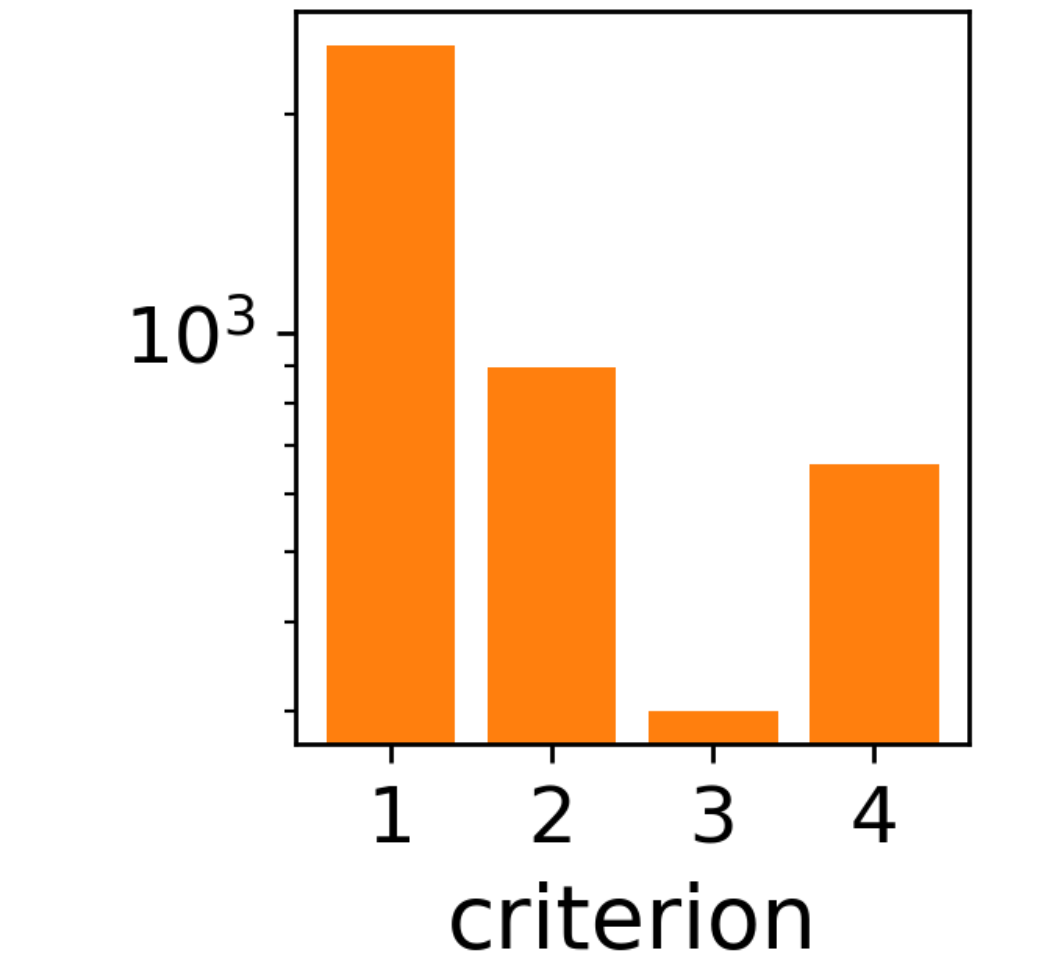}
    \caption{Values of $\repairsat$.}
    \label{fig:plot-0-2}
  \end{subfigure}
  \begin{subfigure}[b]{0.15\textwidth}
    \includegraphics[width=0.9\linewidth]{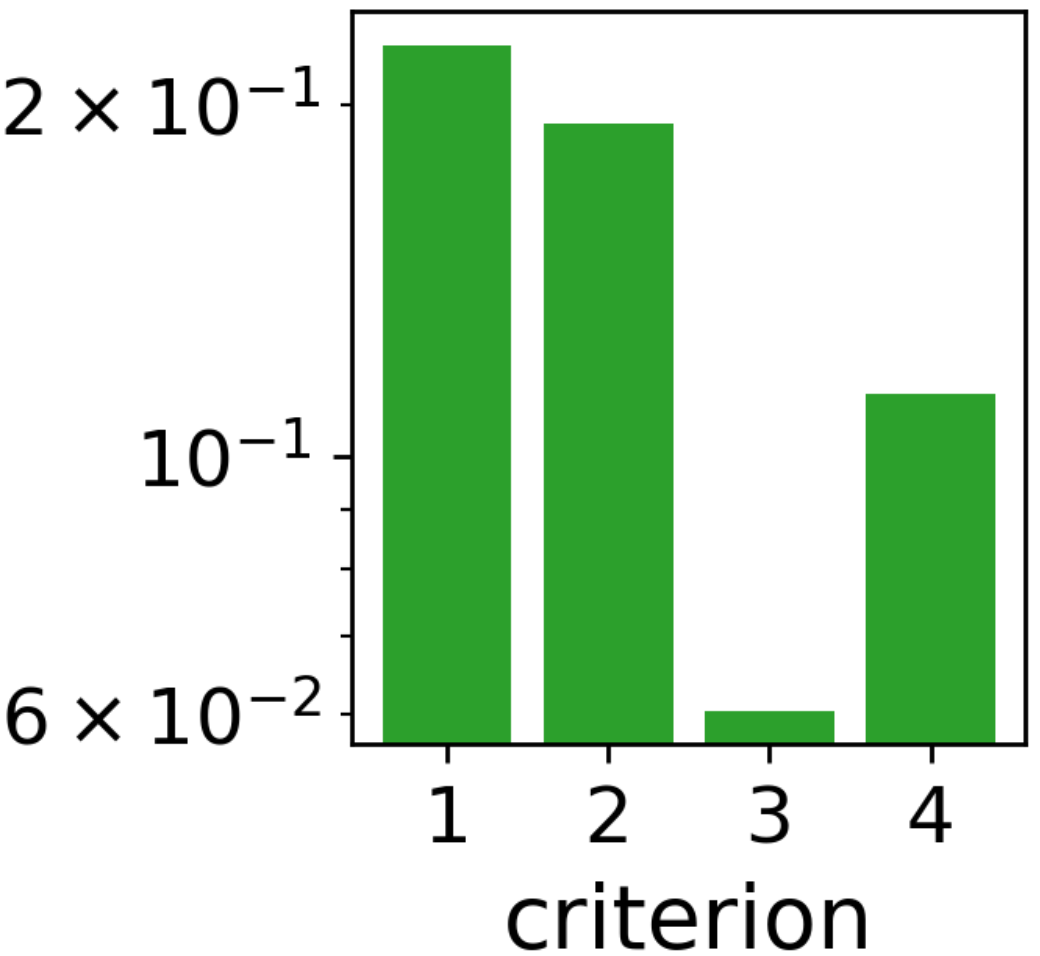}
    \caption{Values of $\contribution$.}
    \label{fig:plot-0-3}
  \end{subfigure}%
  \caption{Values of the unfairness measures (log scale) on the \texttt{Adult} dataset for the four fairness criteria in \Cref{ex:intro}.}
  \label{fig:intro-example-1}
\end{figure}

\begin{figure}
\vspace{-2mm}
  \begin{subfigure}[b]{0.45\linewidth}
    \includegraphics[width=0.8\linewidth]{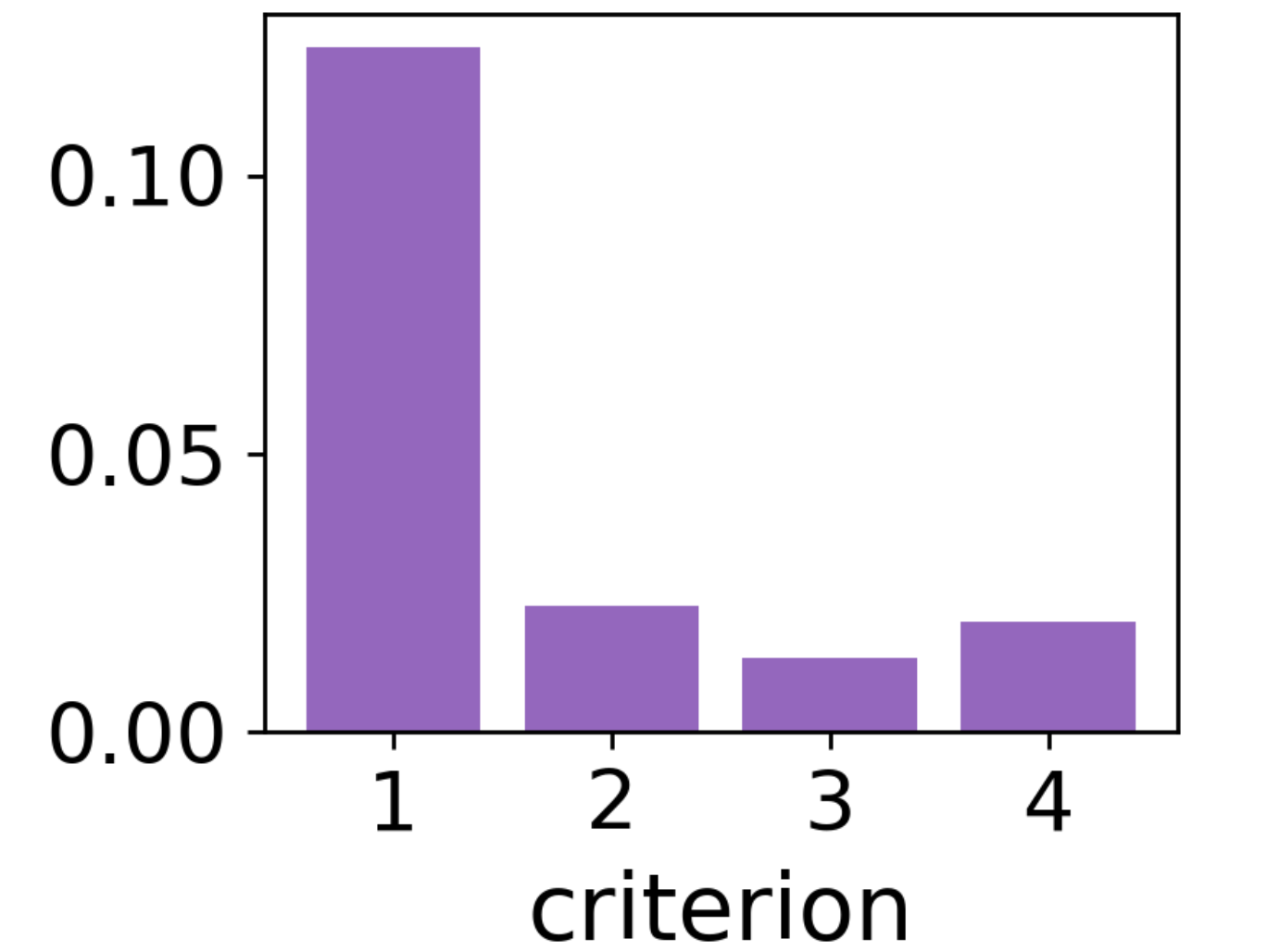}
    \caption{Random Forest.}
    \label{fig:plot-0-4}
  \end{subfigure}
  \begin{subfigure}[b]{0.41\linewidth}
    \includegraphics[width=0.8\linewidth]{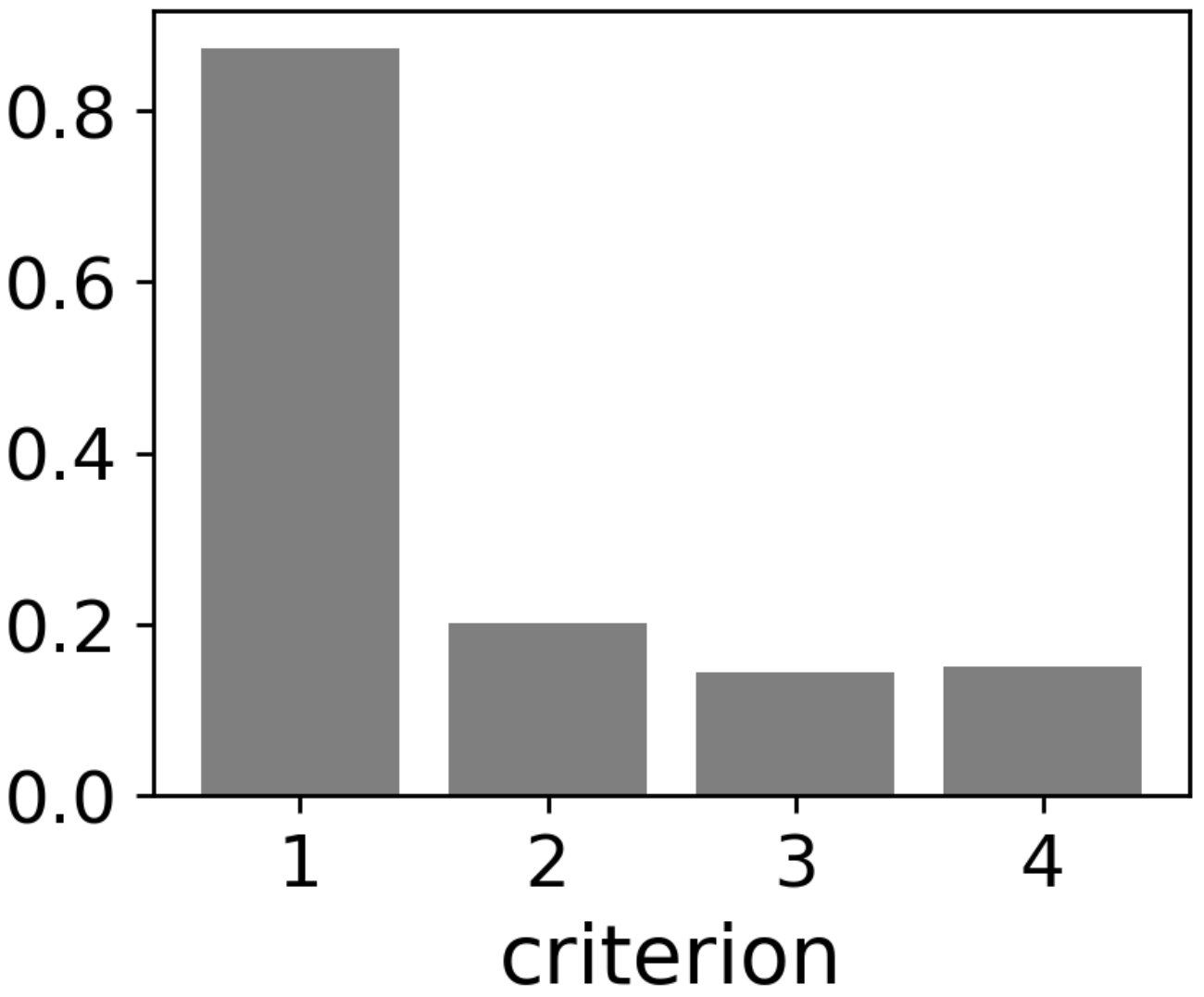}
    \caption{Neural Network.}
    \label{fig:plot-0-5}
  \end{subfigure}%
  \caption{Fairness values (Demographic Parity and Conditional Statistical Parity gaps) of the two privately-trained ML models on the \texttt{Adult} dataset in~\Cref{ex:intro}.}
  \label{fig:intro-example-2}
\end{figure}

We present three complementary measures grounded in probabilistic dependence. 
The first, $\mutualproxytvd$, uses mutual information (MI)~\cite{cover_thomas_2005} to quantify dependence between sensitive and outcome attributes. 
Because MI is highly sensitive and thus unsuitable for DP~\cite{PrivBays}, we develop a Total Variation Distance (TVD) proxy that closely approximates MI in both theory and practice. 
The second measure is based on data repair~\cite{LivshitsKR20,InterFair,ChuIP13}, quantifying the minimal number of tuple modifications required to make a dataset fair. 
Since this is computationally hard, we propose a proxy measure, $\repairsat$, reducible to the weighted MaxSAT problem~\cite{vazirani2001approximation}, following~\cite{InterFair}. 
Finally, inspired by tuple contribution analysis~\cite{wu2013scorpion,MeliouGMS11,DeutchFGS21,LivshitsBKS21}, we introduce a top-$k$ contribution measure, $\contribution$, that captures how much the most influential tuples contribute to fairness violations. 

While $\mutualproxytvd$ approximates MI directly, $\repairsat$ and $\contribution$ introduce novel tuple-level unfairness notions tailored for DP, offering restricted sensitivity and greater interpretability than a distribution-level formula. 

We prove that all three measures satisfy our desiderata, are close approximations to their originals, and exhibit low sensitivity relative to their range. 
We then design privacy-preserving algorithms for them that rely on the Laplace mechanism~\cite{LM}, and analyze their utility and complexity. 
Extensive experiments on five real-world datasets across diverse fairness settings confirm that our measures reliably capture comparative unfairness, remain faithful to their non-private baselines, and can be computed effectively under DP.

\begin{example}\label{ex:intro}
Consider the \texttt{Adult} dataset
~\footnote{See the descriptions and links to the datasets in \Cref{sec:setup}.\label{footnote:datasets}}
containing personal information of individuals in the US including whether their income is larger than $50$K (we sample $10,000$ tuples with $15$ attributes). 
\Cref{fig:intro-example-1} depicts the results of applying our three measures to four fairness criteria: (1) a person's income should not depend on their years of education, (2) a person's income should not depend on their sex, (3) a person's income should not depend on their race, and (4) a person's income should not depend on their sex, given that they work a certain number of hours per week (the criteria are phrased as independence statements in~\Cref{tab:fairness-criteria} in the sequel).
We expect the first criterion to clearly be violated, as more education often leads to higher salaries, while the latter three criteria are natural and desired for most applications to avoid discriminatory conclusions and decision-making. Furthermore, it is known that \texttt{Adult} has disparity in income for different sexes and races~\cite{DBLP:conf/kdd/ThanhRT11,DBLP:conf/eurosp/TramerAGHHHJL17}. 
Each measure was evaluated using our DP algorithms with a privacy budget of $\varepsilon=1$, with the experiment repeated ten times and the results averaged for each criterion.
All the measures show a similar trend for criteria (1), (2), (3) and (4).

We additionally privately trained two models on the same sample using a budget of $\varepsilon = 10$ over all attributes to predict the outcome: (a) a DP version of \texttt{RandomForest}, 
which achieved $78.9\%$ accuracy on average, and (b) a neural network with one hidden layer with $32$ nodes and ReLU activation, trained with \texttt{DP-SGD}, which achieved $84.1\%$ accuracy on average.
\cut{
\begin{itemize}
    \item{DP-version of \texttt{RandomForest}, which achieved $78.9\%$ accuracy on average.}
    \item A neural network with one hidden layer with $32$ nodes and ReLU activation, trained with \texttt{DP-SGD}, which achieved $84.1\%$ accuracy on average.
\end{itemize}
}
\Cref{fig:intro-example-2} shows the Demographic Parity gaps for criteria (1), (2) and (3), based on the classifier's predictions:
{\small $|\max_{\protectedvalue \in \protectedset} \prob\left( \outcome = 1 \mid \protectedset = \protectedvalue \right) - \min_{\protectedvalue \in \protectedset} \prob\left( \outcome = 1 \mid \protectedset = \protectedvalue \right)|$},
where $\outcome = 1$ is a prediction of the income being bigger than $50$K. 
Intuitively, this measures the disparity in positive outcomes for the most privileged group and the most discriminated one in a population consisting of more than two groups.
The figure also shows the Conditional Statistical Parity gap for the conditional criterion (4), which is calculated as the expectation of the Demographic Parity over the conditioned attribute~\cite{InterFair,DBLP:journals/pvldb/PujolGM23}.
We can see that the trend of our measures matches the trend of the `fairness measures' commonly applied to ML models, indicating that the three measures are able to estimate the underlying unfairness based on the criteria, and that they might also be useful in predicting whether private data can be effectively used for various applications.
\end{example}

\paratitle{Contributions}\label{par:contribution}
This is the first work that provides a practical framework for quantifying private data unfairness through three different notions, specifically adapted for DP. 
In summary, the paper makes the following contributions:
\begin{enumerate}[wide, labelwidth=!, labelindent=0pt]
    \item We introduce a \textbf{formal framework for measuring data unfairness}, focusing on quantifying bias directly at the data level rather than at the algorithmic level, and defining general desiderata of \emph{positivity}, \emph{monotonicity}, and \emph{DP computability}.
    \item We propose \textbf{three concrete unfairness measures}, grounded respectively in (a) \emph{mutual information} and its \emph{total variation distance} proxy suitable for DP, \mutualproxytvd, (b) a \emph{data repair} inspired measure, \repairsat, approximated via a reduction to \emph{weighted MaxSAT}, and (c) \emph{tuple-level contribution} measure, \contribution, identifying the top-$k$ most influential records in fairness violations.
    \item We design \textbf{DP algorithms} for computing the measures, and give formal analysis of their guarantees, error, and complexity.
    \item We conduct an \textbf{extensive experimental evaluation} on five real-world datasets, demonstrating that our proposed measures can be used in data analysis scenarios, 
    demonstrate that they provide a complementary, stable, and reliable alternative to estimating unfairness via machine learning models, and that they follow trends of increasing data unfairness. Finally, we show that the measures faithfully approximate their non-private counterparts and effectively quantify data unfairness under DP, and scale to large datasets. 
\end{enumerate}


\section{Preliminaries}\label{sec:prelim}

We now give the necessary background for the paper, including basic database and probabilistic notions, fairness concepts, and essential definitions for our use of differential privacy.


\subsection{Databases and Probabilities}
We consider a single-relation schema $\attrset = (A_1, \ldots, A_m)$, which is a vector of distinct attribute names $A_i$, each associated with a domain $\dom(A_i)$ of values. A database $D$ over $\attrset$ is associated with a set $\tids(D)$ of \emph{tuple identifiers}, and it maps every identifier $i\in\tids(D)$ to a tuple $D[i]=(a_1,\dots,a_m)$ in $A_1\times\dots\times A_m$. 
We denote the size of the database by $|D| = n$. 
A tuple $t_i \in D$ has a specific value in each of its attributes, denoted by $t_i[A] = a$ where $a\in \dom(A)$. 
We consider bag semantics where $t_i, t_j$ may share the same values across all attributes except their identifiers. 



The computation of empirical probabilistic quantities in databases is often practically performed by measuring database statistics~\cite{heidari2019holodetect,fariha2019example,zhang2020statistical}. 
Let $D$ be a dataset of size $n$, the {\em empirical marginal} of a value $A = a$ is defined as follows: 
$\prob(A = a) = \frac{|\{t \mid t[A] = a\}|}{n}$ 
Similarly, the {\em conditional marginal} of $A = a$ conditioned on $C = c$ is 
{\small
$$\prob(A = a \mid C = c) = \frac{|\{t \mid t[A] = a,t[C] = c\}|}{|\{t \mid t[C] = c\}|}$$
}
When a set of attributes $\protectedset$ equals a set of corresponding values $p$, we abuse notation and write $\protectedset = p$ when clear from context.



Employing these notations, we review the definition of mutual information, which is commonly used to quantify dependencies between variables and will be the basis for one of our measures.  

\begin{definition}[Mutual Information (MI) \cite{cover_thomas_2005}]\label{def:mutual-information}
\cut{
The mutual information between two attribute set $A$ and $B$ over their domain in a database $D$ is defined as follows.
\begin{small}
\begin{equation*}
\begin{split}
\mathcal{MI}&_D(A, B) = \\ 
    &\sum_{a \in \dom(A)}\quad \quad\sum_{\mathclap{b \in \dom(B)}} \prob(A = a, B = b) \log\left(\frac{\prob(A = a, B = b)}{\prob(A = a)\prob(B = b)}\right)
\end{split}
\end{equation*}
\end{small}
}

\cut{
Similarly, the conditional mutual information between two attributes $A$ and $B$ conditional on attribute $C$ over their domain in a database $D$ is defined as follows.

\begin{small}
\begin{equation*}
\begin{split}
& \mathcal{MI}_D(A, B \mid C) =  \sum_{c \in \dom(C)} \prob(C = c) \sum_{a \in \dom(A)} \sum_{b \in \dom(B)} \\
& \quad \prob(A = a, B = b \mid C = c) \log\left(\frac{\prob(A = a, B = b \mid C = c)}{\prob(A = a \mid C = c)\prob(B = b \mid C= c)}\right)
\end{split}
\end{equation*}
\end{small}
If $C$ is absent, it falls back to the non-conditional case.
}

The conditional mutual information between two attributes $A$ and $B$ conditioned on attribute $C$ over their domain in a database $D$ is defined as follows.
\begin{small}
\begin{equation*}
\begin{split}
& \mathcal{MI}_D(A, B \mid C) =  \sum_{c \in \dom(C)} \prob(C = c) \sum_{a \in \dom(A)} \sum_{b \in \dom(B)} \\
& \quad \prob(A = a, B = b \mid C = c) \log\left(\frac{\prob(A = a, B = b \mid C = c)}{\prob(A = a \mid C = c)\prob(B = b \mid C= c)}\right)
\end{split}
\end{equation*}
\end{small}
where $C$ may be absent for the unconditional case.
\end{definition} 

We say that two attributes $A, B \in \attrset$ in a database $D$ are independent if $\prob(A = a, B = b) = \prob(A = a)\cdot \prob(B = b)$ for all $a \in \dom(A)$ and $b \in \dom(B)$, or, equivalently, if $MI_D(A, B) = 0$, and we use the standard notation of independence $A \independent B$. 


The notion of database probabilities can be used to express associational fairness definitions, as we discuss next.

\subsection{Fairness as Independence Statements}\label{sec:fairness}

Multiple fairness definitions that have been considered by previous work can be described as (conditional) independence statements. 
Demographic Parity~\cite{DBLP:conf/icdm/CaldersKP09} measures whether the favorable prediction $\hat{\outcomeset}$ of an outcome (e.g., an income will be above a threshold) is affected by which demographic group the person belongs to, as captured by a protected attribute $\protectedset \in \attrset$ (e.g., Race, Sex). This statement can also be expressed by $\protectedset \independent \hat{\outcomeset}$. 
Conditional Statistical Parity~\cite{DBLP:conf/kdd/Corbett-DaviesP17} further relaxes this definition by allowing certain admissible attributes $\admissibleset \in \attrset$ (e.g., education level, work experience) to explain the unfairness, and can be written as $\protectedset \independent \hat{\outcomeset} \mid \admissibleset$. Equality of Opportunity~\cite{heidari2019moral} defines fairness by stating that for each case of the ground truth (e.g., the loan is indeed approved), as captured by the outcome attribute $\responseset \in \attrset$, the probability of a favorable outcome should be equal across all demographic groups, i.e., $\protectedset \independent \hat{\outcomeset} \mid \responseset$. Predictive Parity~\cite{Chouldechova17} measures whether predicted positives are equally accurate across demographic groups, or equivalently, whether the ground truth is independent of the protected attribute given the predicted outcome, i.e., $\protectedset \independent \responseset \mid \hat{\outcomeset}$.

We will simplify the discussion for the purpose of defining database fairness and distinguish between a protected attribute $\protectedset \in \attrset$, an outcome attribute $\responseset \in \attrset$, and an admissible attribute $\admissibleset \in \attrset$. 
Thus, we can generally define a {\em database fairness criterion} as follows.

\begin{definition}[Fairness criterion]\label{def:fairness}
Given a schema $\attrset$ with a protected attribute $\protectedset \in \attrset$, outcome attribute $\responseset \in \attrset$, and admissible attribute $\admissibleset \in \attrset$, 
a {\em fairness criterion} is an independence statement $\protectedset \independent \responseset \mid \admissibleset$, where $\admissibleset$ may be absent. 
\end{definition}



A set of multiple fairness criteria will be denoted by $\constraints$. 


We now consider the fact that the databases we wish to analyze are protected with differential privacy.

\subsection{Differential Privacy}\label{sec:privacy}
We next give the necessary definitions for DP.
\begin{definition}[Neighboring Databases]\label{def:neighboring-databases}

Two databases $D,D'$ with the same schema are called neighboring if they have the same size and differ in exactly one tuple, 
denoted by $D' \neighbor D$.
\end{definition}



We often use the notion of neighboring databases to distinguish the impact of any particular individual's input on the output of a function. We likewise measure the maximum change in any function 
due to the replacement of a single tuple in the database, often calling it the sensitivity of the function.

\begin{definition}[Sensitivity]\label{def:sensitivity}
Given a function $f$ the sensitivity of $f$ is $max_{D' \neighbor D}{|f(D)-f(D')|}$ and is denoted by $\sens_f$.
\end{definition}

Differential privacy (DP) \cite{DiffPriv} protects the private information of individuals in the data by ensuring similar results for similar databases with high probability.

\begin{definition}[Differential Privacy \cite{DiffPriv}]
Given a privacy budget $\varepsilon > 0$, an algorithm $\mathcal{M}$ is said to satisfy $\varepsilon$-DP if for all $S \subseteq \mbox{Range}(\mathcal{M})$ and for all $D \neighbor D'$,
    $$Pr(\mathcal{M}(D)\in S) \leq e^\varepsilon Pr(\mathcal{M}(D')\in S)\,$$
\end{definition}

The Laplace mechanism~\cite{LM} allows us to enforce DP by adding calibrated noise to algorithm results and is a common building block in DP mechanisms. 


\begin{definition}[Laplace Mechanism~\cite{LM}]\label{def:LM}
Given a database $D$, a function $f$ : $\mathcal{D} \rightarrow \mathbb{R}$, and a privacy budget $\varepsilon$, the Laplace mechanism $\mathcal{M}_L$ returns $f(D) + \nu_q$, where $\nu_q \sim \mathrm{Lap}(\Delta_f/\varepsilon)$. 
\end{definition}

\begin{theorem}[Privacy of the Laplace Mechanism~\cite{dwork2014:textbook}]\label{thm:LM-DP}
Let $f:\mathcal{D}\to\mathbb{R}$ be a query with the sensitivity $\sens_f$, and fix $\varepsilon>0$. The Laplace mechanism $\mathcal{M}_L(D)=f(D)+\nu$, with $\nu \sim \mathrm{Lap}(\Delta_f/\varepsilon)$, satisfies $\varepsilon$-differential privacy.
\end{theorem}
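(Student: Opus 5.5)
The plan is to bound the privacy-loss ratio of output densities directly and then integrate over the event $S$. Fix neighboring databases $D \neighbor D'$ and write $b = \sens_f/\varepsilon$.

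The output $\mathcal{M}_L(D)$ is a continuous random variable on $\mathbb{R}$ with density
\[
p_D(z) = \frac{1}{2b}\exp\!\left(-\frac{|z - f(D)|}{b}\right),
\]
and $p_{D'}$ is defined analogously with $f(D')$ in place of $f(D)$.

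\textbf{Pointwise ratio bound.} For every $z \in \mathbb{R}$, the ratio of densities is
\[
\frac{p_D(z)}{p_{D'}(z)} = \exp\!\left(\frac{|z-f(D')| - |z-f(D)|}{b}\right).
\]
By the reverse triangle inequality, $|z-f(D')| - |z-f(D)| \le |f(D)-f(D')|$. By \Cref{def:sensitivity}, $|f(D)-f(D')| \le \sens_f$ because $D$ and $D'$ are neighbors. Hence the exponent is at most $\sens_f/b = \varepsilon$, and so $p_D(z) \le e^{\varepsilon} p_{D'}(z)$ for all $z$.

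\textbf{Integration.} For any measurable $S \subseteq \mathbb{R}$, integrating the pointwise bound over $S$ gives
\[
\Pr(\mathcal{M}_L(D)\in S) = \int_S p_D(z)\,dz \le e^{\varepsilon}\int_S p_{D'}(z)\,dz = e^{\varepsilon}\Pr(\mathcal{M}_L(D')\in S),
\]
which is exactly the $\varepsilon$-DP condition.

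\textbf{Degenerate case and main obstacle.} There is essentially no hard step. The only subtlety is the case $\sens_f = 0$, where the scale $b = 0$ is degenerate. In that case $f$ is constant across neighbors, so the output distributions coincide (one may interpret $\mathrm{Lap}(0)$ as a point mass, or note that any positive scale also works), and the DP inequality holds trivially. Beyond this, one must make sure the reverse triangle inequality is applied in the correct direction; since the bound is symmetric in $D$ and $D'$, both directions of the inequality hold.
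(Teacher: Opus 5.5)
Your proof is correct. The paper states this theorem without proof and cites Dwork and Roth, and your argument is the standard one from that textbook: bound the ratio of the two Laplace densities pointwise by $e^{\varepsilon}$ using the triangle inequality and $|f(D)-f(D')|\le \Delta_f$, then integrate over $S$.
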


\section{Privately Evaluating Unfairness}\label{sec:model}

{\small
\begin{table*}
\centering
\caption{Properties of the discussed unfairness measures along with their computation and utility costs. Green rows denote the chosen measures and red rows denote measures that are unsuitable either due to relatively high sensitivity ($\mutual$), unsatisfactory faithfulness to the original measure ($\mutualproxybayes$), or high computation costs (\repair). All measures satisfy the positivity and monotonicity properties outlined in~\Cref{sec:model}, except $\mutualproxybayes$ that violates positivity.}
\label{tbl:measure-summary}
\renewcommand{\arraystretch}{1.75}
\begin{tabular}{|>{\centering\arraybackslash}p{5.2cm}|>{\centering\arraybackslash}p{1cm}|>{\centering\arraybackslash}p{1.5cm}|>{\centering\arraybackslash}p{1.6cm}|>{\centering\arraybackslash}p{2cm}|c|c|c|c|c|}
\hline
\textbf{Motivation} & \textbf{Measure} & \textbf{Range for \fairnessdef} & 
\textbf{$\Delta_\measure$ for $\fairnessdef$} & 
\textbf{$\Delta_\measure$ for $\constraints$} & \textbf{Complexity} & \textbf{Error Bound} \\
\hline
\multirow{3}{*}{Attribute correlation} & 
\cellcolor{red!18} $\mutual$ &\cellcolor{red!18} $[0, \infty)$ & \cellcolor{red!18} $\mathcal{O}\left(\frac{\log n}{n}\right)$ &\cellcolor{red!18} - &\cellcolor{red!18} - &\cellcolor{red!18} - \\
& \cellcolor{red!18} $\mutualproxybayes$ &\cellcolor{red!18} $[-1, 0]$ &\cellcolor{red!18} $\mathcal{O}\left(\frac{1}{n}\right)$ &\cellcolor{red!18} - &\cellcolor{red!18} - &\cellcolor{red!18} -\\
&\cellcolor{green!18} $\mutualproxytvd$ &\cellcolor{green!18} $[0, 2]$ &\cellcolor{green!18} $\mathcal{O}\left(\frac{1}{n}\right)$ &\cellcolor{green!18} $\mathcal{O}\left(\frac{|\constraints|}{n}\right)$ &\cellcolor{green!18} $\mathcal{O}\left(|\constraints| n\right)$ &\cellcolor{green!18} $\frac{16|\constraints|}{n \varepsilon}$ \\
\hline
\hline
\multirow{2}{*}{Distance to a fair database} & 
\cellcolor{red!18} $\repair$ 
&\cellcolor{red!18} $[0, n]$ &\cellcolor{red!18} $1$ 
&\cellcolor{red!18} - 
&\cellcolor{red!18} Exp. in $n$ &\cellcolor{red!18} -\\
&\cellcolor{green!18} $\repairsat$ &\cellcolor{green!18} $[0, n]$ &\cellcolor{green!18} $2$ &\cellcolor{green!18} $2|\constraints|$ &\cellcolor{green!18} $\mathcal{O}\left(|\constraints|\left(n^4 + SAT\right)\right)$ &\cellcolor{green!18} $\frac{2|\constraints|}{\varepsilon}$ \\
\hline
\hline
Contribution of the top-$k$ tuples to unfairness &\cellcolor{green!18} $\contribution$ &\cellcolor{green!18} $[0, \min\{\frac{k}{4},2\}]$ &\cellcolor{green!18} $\mathcal{O}\left(\frac{k}{n}\right)$ &\cellcolor{green!18} $\mathcal{O}\left(\frac{|\constraints|k}{n}\right)$ &\cellcolor{green!18} $\mathcal{O}\left(|\constraints|n \log n\right)$ &\cellcolor{green!18} $\frac{7k|\constraints|}{n\varepsilon}$ \\
\hline
\end{tabular}
\end{table*}
}

We formulate the problem of privately evaluating the unfairness of a database, starting with a discussion about measure requirements. 

\paratitle{Measure desiderata}
We define several requirements for private unfairness measures. The first two are inspired by the properties outlined for inconsistency measures in the context of logical integrity constraints~\cite{LivshitsKTIKR21}, and the third one is specific to the DP setting. 
We aim to design an unfairness measure that takes a private database $D$, a set of fairness criteria $\constraints$, and a privacy budget $\varepsilon$, and returns a non-negative number, i.e., $\measure(\constraints, D) \in [0,\infty)$.  
We want $\measure$ to have the following desiderata: 
\begin{enumerate}[leftmargin=*]
\item {\bf Positivity:} $\measure(\constraints, D) \geq 0$ and $\measure(\constraints, D) = 0$ iff $D$ satisfies $\fairnessdef$ for every $\fairnessdef \in \constraints$.
\item {\bf Monotonicity:} $\measure(\constraints, D) \leq \measure(\constraints', D)$ if $\constraints \subseteq \constraints'$. 
\item {\bf Computability under DP:} We can efficiently compute $\measure$ with an $\varepsilon$-DP algorithm with relatively small error.
\end{enumerate}

The first two properties have already been suggested for inconsistency measures. Positivity was proposed as a fundamental property in~\cite{DBLP:journals/corr/abs-1904-03403,
DBLP:journals/ijar/GrantH17} and as an axiom in~\cite{DBLP:conf/ecsqaru/MartinezPSSP07}. Monotonicity was proposed in different variations by previous work. 
\citet{DBLP:journals/corr/abs-1904-03403} proposed measure monotonicity for database containment, i.e., $D \subseteq D'$ implies $\measure(\constraints, D) \leq \measure(\constraints, D')$, yet, in our case, database containment does not guarantee an increase of fairness, and in fact, may decrease it. 
\citet{LivshitsKTIKR21} suggested monotonicity for logical implication of integrity constraints, i.e., if $\Sigma \models \Sigma'$ for two sets of integrity constraints $\Sigma, \Sigma'$, then $\measure(\Sigma', D) \leq \measure(\Sigma, D)$. 
Since comparing fairness criteria (see \Cref{def:fairness}) through logical implication is not possible, we define a new notion of monotonicity for fairness criteria {\em set containment}. Intuitively, when more fairness criteria are used, the requirements from the database are stricter and, therefore, the value of the unfairness measure should only increase. 
The third property will be shown in two parts. First, showing that our measures have low sensitivity (\Cref{sec:model}) and then providing a DP algorithm for computing them (\Cref{sec:algorithms}).

In the sequel, we will first define the measures for a single fairness criterion, $\measure(\fairnessdef,D)$, 
and then extend them to a set of criteria as 
a sum $\measure(\constraints, D) = \sum_{\fairnessdef \in \constraints} \measure(\fairnessdef, D)$. 
We will then prove the properties for this extension. 
As we will see, measures and techniques from previous work in the non-private setting can be unsuitable for computation under DP, motivating our third desideratum and leading us to search for proxies or replacements.

The other two properties listed in~\cite{LivshitsKTIKR21} are tailored for dynamic settings and are discussed in detail at the end of this section.

The remainder of \Cref{sec:model} presents our three measures, outlining their motivation and theoretical properties. We begin with $\mutualproxytvd$, which approximates mutual information between the attributes appearing in the fairness criteria (\Cref{sec:mutual}). We then introduce $\repairsat$, a proxy for the minimal number of tuple insertions or deletions required to transform the dataset into one that satisfies the criteria (\Cref{sec:repair}). 
Finally, we describe $\contribution$, which reports the cumulative contribution of the top-$k$ most influential tuples to the fairness criteria violation (\Cref{sec:contribution}). 
Unlike $\mutualproxytvd$, which approximates MI at the distribution level, $\repairsat$ and $\contribution$ are grounded in tuple-level notions of unfairness, compatible with DP, and greater interpretability.
\Cref{tbl:measure-summary} provides a consolidated overview of the properties of all three measures.

\subsection{Unfairness as Mutual Information}\label{sec:mutual}

A standard way to measure dependence between variables is computing the mutual information (MI) between them~\cite{kraskov2004estimating,RyanMnist,PrivBays}.  Thus, for a fairness criterion $\protectedset \independent \responseset \mid \admissibleset$, we can measure the MI between $\protectedset$ and $\responseset$ given $\admissibleset$ to obtain our first measure notion. 
We abuse the notation from \Cref{def:mutual} and denote the sum of (conditional) mutual information between the attributes in the fairness criteria contained in $\constraints$ by $MI_D(\constraints)$, i.e., $MI_D(\constraints) = \sum_{\fairnessdef \in \constraints} MI_D(\fairnessdef)$.

\begin{definition}[Mutual Information Unfairness]\label{def:mutual}
Given a database $D$ and a fairness criterion of the form $\fairnessdef = \protectedset \independent \responseset \mid \admissibleset$ where $\admissibleset$ can be an empty set, the mutual information unfairness measure is defined as $\mutual(\fairnessdef,D) = MI_D(\protectedset,  \responseset \mid \admissibleset)$.
\end{definition}

Previous work~\cite{PrivBays} has already shown that $\mutual$ is not suitable as an accurate measure in the DP setting due to its relatively high sensitivity of $\mathcal{O}\left(\log n / n\right)$. 
Such sensitivity means that Laplace noise of scale proportionate to $\log n / (\varepsilon n)$ should be added to achieve $\varepsilon$-differential privacy. However, the range of $\mutual$ is $[0,\infty)$, and the closer we get to $\fairnessdef$ being satisfied (that is, the more independent the attributes in the $\fairnessdef$ are), the smaller $\mutual$ becomes. Thus, when $\fairnessdef$ nearly holds and $\mutual$ is close to $0$, the added Laplace noise can severely distort the original $\mutual$ value, making it unusable.
The range of $\mutual$ also poses a problem. Since it is unbounded, it is difficult to interpret its values and distinguish which values indicate low dependence score between the sensitive and outcome attributes, and which values indicate high dependence score. 
Therefore, we need to find an alternative with lower sensitivity.

To mitigate these issues, one can use a proxy function for $\mutual$ that has low sensitivity and a bounded range. 
A possible proxy function, presented in~\cite{PrivBays} for a single fairness criterion, is $\mutualproxybayes(\fairnessdef, D) = -\frac{1}{2}min_{Pr^\diamond \in \mathcal{P}^\diamond} \|Pr^\diamond[\protectedset, \responseset \mid \admissibleset] - Pr[\protectedset, \responseset \mid \admissibleset]\|$, where $\Pr^{\diamond}[\protectedset, \responseset \mid \admissibleset]$ is the maximum joint distribution defined as the one that maximizes the mutual information between $\protectedset$ and $\responseset$, given $\admissibleset$. 
\citet{PrivBays} also showed that the sensitivity of $\mutualproxybayes$ is $\frac{1}{n}$ for an unconditional fairness criterion, and is therefore suitable for DP.


However, $\mutualproxybayes$ lacks the desired positivity property. Furthermore, the relation between $\mutual$ and $\mutualproxybayes$ can be difficult to visually comprehend. 
\ifpaper
In the full version~\cite{fullversion} we demonstrate the faithfulness between $\mutual$ and $\mutualproxybayes$ for the \texttt{Adult}, \texttt{Stackoverflow} survey, and \texttt{Compas} datasets (see~\Cref{footnote:datasets}) with fairness criteria detailed in \Cref{tab:fairness-criteria} in the sequel.
\else
\Cref{fig:mi-proxies-comparison} demonstrates the faithfulness between $\mutual$ (\Cref{fig:mutual-comparison-1}) and $\mutualproxybayes$ (\Cref{fig:mutual-comparison-2}) for the \texttt{Adult}, \texttt{Stackoverflow} survey, and \texttt{Compas} datasets
(see~\Cref{footnote:datasets}) 
with fairness criteria from \Cref{tab:fairness-criteria}, numbered for each dataset. We also plotted $\mutualproxybayes$ with an offset of $\frac{1}{2}$ (\Cref{fig:mutual-comparison-3}) to make its values positive so the reader could easily see the variation between $\mutual$ and $\mutualproxybayes$ across all datasets and criteria.
\fi

This faithfulness gap motivates us to define a novel proxy function for $\mutual$ that has {\em low sensitivity and high correlation with $\mutual$} both theoretically and in practice.
This proxy measure is based on the concepts of Total Variation Distance (TVD), that is defined as $\mathrm{TVD}(P, Q) = \frac{1}{2} \sum_{x} |P(x) - Q(x)|$ for two distributions $P$ and $Q$.

\begin{definition}[$\mutualproxytvd$ as a proxy for \mutual]\label{def:proxy-tvd}
Given a database $D$ and a fairness criterion of the form $\fairnessdef = \protectedset \independent \responseset \mid \admissibleset$ where $\admissibleset$ can be an empty set, the $\mutualproxytvd$ proxy for $\mutual$ is defined as:
\begin{small}
\begin{equation*}
\mutualproxytvd(\protectedset \independent \responseset | \admissibleset, D) = 2 \cdot \left( \mathrm{TVD}(\prob(\protectedset, \responseset | \admissibleset), \prob(\protectedset | \admissibleset)\prob(\responseset | \admissibleset)) \right)^2  
\end{equation*}
\end{small}
\end{definition}

We first show that $\mutualproxytvd$ is a bounded approximation of $\mutual$ and further show that it leads to a better approximation of $\mutual$ in practical scenarios.

\begin{proposition}[$\mutualproxytvd$ is bounded by $\mutual$]\label{prop:tvd-proxy}
Let $D$ be a database such that the schema of $D$ is a superset of $\protectedset \cup \responseset$. Given a fairness constraint $\protectedset \independent \responseset$, the following holds: $\alpha \cdot \mutual(\protectedset,\responseset) \leq \mutualproxytvd(\protectedset \independent \responseset, D) \leq \mutual(\protectedset,\responseset)$ where $X = \prob(\protectedset,\responseset)$, $Y = \prob(\protectedset) \prob(\responseset)$, and $\alpha = min_{z\in Z, Y(z) > 0} Y(z)$.
\end{proposition}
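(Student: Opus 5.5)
The plan is to write the mutual information as a KL divergence and then sandwich it with two standard inequalities. By \Cref{def:mutual-information} with $\admissibleset$ absent, $\mutual(\protectedset,\responseset)=\sum_z X(z)\log\frac{X(z)}{Y(z)}=\mathrm{KL}(X\,\|\,Y)$, where $z=(p,y)$ ranges over $Z=\dom(\protectedset)\times\dom(\responseset)$ and the logarithm is natural. By \Cref{def:proxy-tvd}, $\mutualproxytvd(\protectedset\independent\responseset,D)=2\,\mathrm{TVD}(X,Y)^2$. Both bounds then become statements relating $\mathrm{KL}(X\|Y)$ to $\mathrm{TVD}(X,Y)^2$.

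First I would record a support fact. If $Y(z)=\prob(\protectedset=p)\prob(\responseset=y)=0$, then one of the marginals is zero, hence $X(z)=0$. So $X$ is absolutely continuous with respect to $Y$. This makes the KL divergence finite, and all sums can be restricted to $\{z: Y(z)>0\}$, which is exactly the set over which $\alpha$ is a minimum.

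\emph{Upper bound.} Pinsker's inequality gives $\mathrm{KL}(X\|Y)\ge 2\,\mathrm{TVD}(X,Y)^2$ in nats. This is exactly $\mutualproxytvd\le\mutual$, and I would simply cite it.

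\emph{Lower bound.} This is the reverse-Pinsker direction and the main step; I would argue through the $\chi^2$ divergence.
\begin{enumerate}
\item Since $\log u\le u-1$, we get $\mathrm{KL}(X\|Y)\le \sum_z X(z)\bigl(\frac{X(z)}{Y(z)}-1\bigr)=\sum_z \frac{(X(z)-Y(z))^2}{Y(z)}=\chi^2(X\|Y)$. The last equality uses $\sum_z X(z)=\sum_z Y(z)=1$.
\item Every term has $Y(z)\ge\alpha$, so $\chi^2(X\|Y)\le \frac{1}{\alpha}\sum_z d(z)^2$ with $d=X-Y$.
\item The crude bound $\sum_z d(z)^2\le(\sum_z|d(z)|)^2=4\,\mathrm{TVD}^2$ loses a factor of $2$. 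I would instead use that $\sum_z d(z)=0$: the positive parts of $d$ sum to $\mathrm{TVD}$, and so do the negative parts. Hence $|d(z)|\le\mathrm{TVD}$ for every $z$.
\item Combining, $\sum_z d(z)^2\le \max_z|d(z)|\cdot\sum_z|d(z)|\le \mathrm{TVD}\cdot 2\,\mathrm{TVD}=2\,\mathrm{TVD}^2$.
\end{enumerate}
Chaining these steps gives $\alpha\,\mathrm{KL}(X\|Y)\le 2\,\mathrm{TVD}(X,Y)^2=\mutualproxytvd$, which is the claimed lower bound.

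The only delicate point is getting the constant exactly right in step 3. The $\max\cdot\sum$ argument there is what I expect to close the gap between the factor $4$ from the crude bound and the factor $2$ in the statement.
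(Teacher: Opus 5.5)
Your proof is correct and follows the paper's route: Pinsker's inequality gives the upper bound, and a reverse Pinsker inequality $\mathrm{TVD}(P,Q)^2 \ge \frac{1}{2}\min_z Q(z)\, D_{\mathrm{KL}}(P\|Q)$ gives the lower bound. The difference is that the paper simply cites the reverse Pinsker inequality. You derive it with the same sharp constant, via $D_{\mathrm{KL}}\le\chi^2$ and $\sum_z d(z)^2\le \max_z|d(z)|\sum_z|d(z)|\le 2\,\mathrm{TVD}^2$, and you also justify restricting to $\{z : Y(z)>0\}$ (where $\alpha$ is defined), which the cited form, assuming $Q(z)>0$ everywhere, leaves implicit.
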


\ifpaper
We defer all proofs to the full version of the paper~\cite{fullversion}.
\else
The proofs for all propositions and lemmas can be found in \Cref{sec:appendix}.
\fi 

Though the lower bound could be loose when $\alpha$ is small given a skewed dataset, in practice $\mutualproxytvd$ approximates $\mutual$ tightly. 
\ifpaper
In the full version~\cite{fullversion} we show 
\else
\Cref{fig:mutual-comparison-4} shows 
\fi
the values of $\mutualproxytvd$ compared to $\mutual$, $\mutualproxybayes$, and its offset version. The trend of $\mutualproxytvd$ values emulates the trend of $\mutual$ values better than the other measures. Furthermore, \Cref{fig:mutual-comparison-5} (\Cref{sec:case-studies}) shows a similar faithfulness trend of $\mutualproxytvd$ to $\mutual$ on a synthetic dataset with increasing unfairness. 

We now show that $\mutualproxytvd$ satisfies the desired properties. The first two properties can be shown directly, while computability under DP will be proven in a two-part fashion: (1) bounding the sensitivity and range of the measure in the following proposition and (2) providing an $\varepsilon$-DP algorithm for computing it in \Cref{sec:algorithms}. 

\begin{proposition}[$\mutualproxytvd$ satisfies the desired properties]~\label{prop:tvd-properties}
\begin{enumerate}
    \item $\mutualproxytvd$ satisfies the Positivity property.
    \item $\mutualproxytvd$ satisfies the Monotonicity property. 
    \item The range of $\mutualproxytvd$ is $[0,2|\constraints|]$ for a set of criteria $\constraints$.
    \item The sensitivity of $\mutualproxytvd$ is $\frac{16|\constraints|}{n}$ for a set of criteria $\constraints$ and a database of size $n$.
\end{enumerate}
\end{proposition}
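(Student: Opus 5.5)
Since $\mutualproxytvd(\constraints,D)=\sum_{\fairnessdef\in\constraints}\mutualproxytvd(\fairnessdef,D)$, we prove each item for a single criterion $\fairnessdef=\protectedset\independent\responseset\mid\admissibleset$ and then lift to $\constraints$. Throughout we read the conditional TVD as the $\prob(\admissibleset)$-weighted average over $a\in\dom(\admissibleset)$, matching the convention of \Cref{def:mutual-information}. That is,
$T_D=\sum_a \prob(\admissibleset=a)\cdot\frac12\sum_{p,y}|\prob(p,y\mid a)-\prob(p\mid a)\prob(y\mid a)|$
and $\mutualproxytvd=2T_D^2$. Writing $N_a,N_{pa},N_{ya},N_{pya}$ for the corresponding counts, this becomes
$T_D=\frac{1}{2n}\sum_{a}\sum_{p,y}\bigl|N_{pya}-N_{pa}N_{ya}/N_a\bigr|$,
and this count form will drive the sensitivity argument.

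\textbf{Positivity and range.} Nonnegativity is immediate. Also $\mutualproxytvd(\fairnessdef,D)=0$ iff every inner $\ell_1$ term vanishes on each $a$ with $\prob(a)>0$, i.e.\ iff $\prob(p,y\mid a)=\prob(p\mid a)\prob(y\mid a)$ for all $p,y,a$. This is exactly $\protectedset\independent\responseset\mid\admissibleset$. A sum of nonnegative terms is zero iff every term is, which gives positivity for $\constraints$. For the range, a TVD between two distributions is at most $1$, and a weighted average of values in $[0,1]$ stays in $[0,1]$. Hence each term lies in $[0,2]$ and the sum in $[0,2|\constraints|]$.

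\textbf{Monotonicity.} This is trivial: if $\constraints\subseteq\constraints'$, the extra summands are nonnegative by positivity.

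\textbf{Sensitivity (main obstacle).} The sensitivity is subadditive over the sum, so it suffices to show $|\mutualproxytvd(\fairnessdef,D)-\mutualproxytvd(\fairnessdef,D')|\le 16/n$. We factor $2T^2-2T'^2=2(T+T')(T-T')$ and use $T,T'\le1$, so it is enough to prove $|T_D-T_{D'}|\le 4/n$. Replacing a tuple is a deletion followed by an insertion, so it suffices to show that one insertion or deletion changes $\sum_{p,y}|N_{pya}-N_{pa}N_{ya}/N_a|$ by at most $4$ in total. Only the stratum $a$ of the touched tuple (or the two strata, for a replacement) changes.

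Within a stratum we write the residual matrix $R_{py}=N_{pya}-N_{pa}N_{ya}/N_a$ and track how it changes when a tuple $(p_0,y_0)$ is added:
\begin{itemize}
\item $N_{p_0y_0a}$ increases by $1$, contributing at most $1$ to $\sum|\Delta R|$.
\item The product term changes to $(N_{pa}+[p=p_0])(N_{ya}+[y=y_0])/(N_a+1)$.
\end{itemize}
The total $\ell_1$ change of the product matrix is bounded by the change in the row mass, plus the change in the column mass, plus the change from renormalizing by $N_a$. Each of these contributes at most $1$, so roughly $3$ in total, hence at most $4$ per stratum operation.

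Dividing by $2n$ gives a per-operation change of about $2/n$, and a replacement is two operations, so $|T_D-T_{D'}|\le 4/n$. The factor of $16$ in the statement leaves slack for the crude bounds $T+T'\le 2$ and for the way the counts are bounded. The delicate step is this $\ell_1$ bound on the change of the outer-product term. I would verify it by writing $\frac{(u+e_i)(v+e_j)^\top}{m+1}-\frac{uv^\top}{m}$ and bounding each piece with the row and column sums $\|u\|_1=\|v\|_1=m$.
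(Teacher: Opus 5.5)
Your proof is correct. For positivity, monotonicity and the range it is the paper's argument. For the sensitivity you take a genuinely different route.

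The paper works with probabilities, in two separate cases:
\begin{enumerate}
\item Unconditional: the reverse triangle inequality gives $|\mathrm{TVD}(D)-\mathrm{TVD}(D')|\le 3/n$ (joint term $1/n$, product term $2/n$ via the two marginal changes). A squaring lemma then gives $12/n$.
\item Conditional: the measure is written as $\sum_a \prob(\admissibleset=a)\,f_D(a)$ with $f_D(a)=2\,\mathrm{TVD}_a^2$. The change of the weights costs at most $4/n$. The per-stratum change is bounded by reusing the unconditional bound at scale $n_a$, costing at most $12/n$, for $16/n$ in total.
\end{enumerate}

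You instead observe that the weights $\prob(\admissibleset=a)=N_a/n$ cancel the conditional normalization. So $T_D=S(D)/(2n)$, where $S$ is a pure count functional and $n$ is the same for $D$ and $D'$. Passing through the intermediate database of size $n-1$ is therefore harmless, one $\ell_1$ estimate on the outer-product residual handles insertions and deletions uniformly, and you square only once at the end.

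What each route buys:
\begin{enumerate}
\item The paper's route is modular and gives a sharper $12/n$ for unconditional criteria.
\item Yours never divides by $N_a$. This sidesteps the step where the paper applies a fixed-size replacement bound to a stratum that only gains or only loses a tuple, and asserts half of that bound without separate justification.
\item The two arguments analyze different readings of \Cref{def:proxy-tvd}. The paper's conditional case treats the average of per-stratum squared TVDs. Yours treats the square of the weighted TVD, which is exactly what \Cref{alg:tvd-proxy} computes and perturbs.
\end{enumerate}

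To make your key step airtight, consider inserting $(p_0,y_0)=(i,j)$ into a stratum of size $m$ with count vectors $u,v$, where $\|u\|_1=\|v\|_1=m$. Write
\[
\Delta R=\tfrac{m}{m+1}\,e_ie_j^\top+\tfrac{1}{m+1}\bigl(\tfrac{1}{m}uv^\top-ue_j^\top-e_iv^\top\bigr),
\]
so $\|\Delta R\|_1\le 4m/(m+1)<4$. The remaining cases follow:
\begin{enumerate}
\item A deletion is the reverse of an insertion into a stratum of size $m-1$.
\item A stratum of size $1$ has zero residual, so creating or emptying a stratum changes $S$ by $0$.
\end{enumerate}
Then $|S(D)-S(D')|\le 8$ gives $|T_D-T_{D'}|\le 4/n$, and $2(T_D+T_{D'})\cdot 4/n\le 16/n$ exactly. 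Your remark about the constant leaving slack is therefore unnecessary, though harmless.
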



\subsection{Unfairness as Data Repair Cost}\label{sec:repair}
Inspired by the field of data repair~\cite{Afrati2009,ChuIP13,BertossiKL13,LivshitsKR20,GeertsMPS13,InterFair} and previous work on inconsistency measures for integrity constraints~\cite{DBLP:journals/corr/abs-1904-03403,LivshitsKTIKR21}, including a recent work that allows for their DP computation~\cite{mohapatra2025computinginconsistencymeasuresdifferential}, we use a similar idea to define another unfairness measure.

\begin{definition}[Data Repair Unfairness]\label{def:repair}
Given a database $D$ and a fairness criterion of the form $\fairnessdef = \protectedset \independent \responseset \mid \admissibleset$ where $\admissibleset$ can be empty, the repair unfairness measure is defined as $\repair(\fairnessdef,D) = |D \symdif D_R|$, where $D_R$ is the database with the smallest number of removed and added tuples to $D$ that satisfies $\fairnessdef$ and $\symdif$ is the symmetric difference.
\end{definition}

It follows that $\repair$ satisfies the Positivity and Monotonicity properties, its range is $[0,n|\constraints|]$, and its sensitivity is $|\constraints|$. 
However, the problem of finding a minimum data repair is known to be a particularly challenging one, as we shall next elaborate.


\paratitle{Computational challenge}
The problem of data repair has been thoroughly studied with different intervention models, such as tuples deletions~\cite{LivshitsKR20,GiladDR20}, value update~\cite{RekatsinasCIR17,ChuIP13,GeertsMPS13}, and combinations of tuple additions and deletions~\cite{InterFair}. Its computational hardness is well established, even under tuple deletion alone~\cite{LivshitsKR20}. 
Although our goal is to measure the size of the repair with deletions and additions and not to find the repair itself, the two problems are computationally equivalent. 
To address this challenge, 
in \Cref{sec:algorithms}, we adapt an algorithm from previous work~\cite{InterFair} that reduces the problem of computing $\repair$ to the Max-3SAT problem, and then uses a SAT solver to obtain a repair. 
Therefore, for $\repair$ we resort to a proxy measure called $\repairsat$ that is based on prior work~\cite{InterFair}, not for privacy purposes, but for computational purposes.


\paratitle{Review of the approach from~\cite{InterFair}}
To overcome intractability, \citet{InterFair} reduce the computation of an optimal repair satisfying $\protectedset\independent \responseset \mid \admissibleset$ to weighted MaxSAT (which can then be solved by a SAT solver), where $\attrset = \{\protectedset,\responseset,\admissibleset\}$.\footnote{In practice, the database schema may contain additional attributes that do not appear in the fairness criterion. We therefore compute the repair on the projection to the criterion's attributes and, when we lift this repaired projection back to a full repair of the original schema, 
we keep all other attributes as in the original database. Since these additional attributes are unchanged, all propositions that we prove in the sequel with the reduced schema assumption carry over to the full schema.}

Define the self-join database $\jdb = \Pi_{\protectedset,\admissibleset}(D) \bowtie \Pi_{\responseset,\admissibleset}(D)$. 
Then, any minimal repair $D'$ of $D$ satisfies $D' \subseteq \jdb$. The reduction converts the tuples in $\jdb$ to a CNF formula by adding a clause $x_t$ for every $t \in D$ and a clause $\neg x_t$ for every $t \in \jdb \setminus D$. These are the `soft clauses' of the formula. It also adds the clauses $\neg x_{t_1} \lor \neg x_{t_2} \lor x_{t_3}$ for tuples of the form $t_1[p_1, y_1, a], t_2[p_2, y_2, a]$ and $t_3[p_1, y_2, a]$. Intuitively, $t_1$ and $t_2$ are the lineage of $t_3$. Thus, these are the `hard clauses' of the formula, i.e., they must be satisfied by the obtained assignment.
\citet{InterFair} uses a SAT solver to find an assignment that maximizes the additive weight of the satisfied soft clauses of the constructed CNF formula. The assignment directly corresponds to the tuples that have to be removed and added to satisfy the fairness criterion. Next, we show how we utilize and extend this approach to define $\repairsat$, which is our proxy for $\repair$.

\subsubsection{Defining $\repairsat$}\label{sec:repair-sat}
Following previous work~\cite{InterFair}, we define $\repairsat$ in terms of an assignment to the relevant CNF formula. 
We begin by defining this CNF formula that $\jdb$ is mapped to. 

\begin{definition}[CNF formula for \jdb~\cite{InterFair}]\label{def:cnf}
Given a database $D$ whose schema contains $\protectedset \cup \responseset \cup \admissibleset$, and a fairness criterion $\fairnessdef = \protectedset \independent \responseset \mid \admissibleset$, the self-join database CNF formula is defined as 
\begin{small}
\begin{equation*}
\varphi(D,\jdb) = \mathcal{H}(\jdb) \land \bigwedge_{t \in D} x_t \land \bigwedge_{t \in \jdb \setminus D} (\neg x_t),
\end{equation*}
\end{small}
where each $x_t$ represents a soft clause for a tuple $t$ and $\mathcal{H}(\jdb)$ is the set of hard clauses constructed as follows. 
Let $C(\protectedset_1, \responseset_1, \protectedset_2, \responseset_2, \admissibleset) := \jdb(\protectedset_1, \responseset_1, \admissibleset) \land \jdb(\protectedset_2, \responseset_2, \admissibleset)$. That is, $C$ is the set of all tuples formed by joining $\jdb$ with itself on the attribute $\admissibleset$, pairing tuples with the same $\admissibleset$-values and extracting the relevant attributes. 
For each tuple $t \in C$, a clause of the form $(\neg x_{t_1} \lor \neg x_{t_2} \lor x_{t_3})$ is added to $\mathcal{H}(\jdb)$, where
$
t_1 = (\protectedset_1, \responseset_1, \admissibleset), 
t_2 = (\protectedset_2, \responseset_2, \admissibleset), 
t_3 = (\protectedset_1, \responseset_2, \admissibleset)
$.
\end{definition}

We treat the soft clauses $\bigwedge_{t \in D} x_t \land \bigwedge_{t \in \jdb \setminus D} (\neg x_t)$, as a bag, meaning that soft clauses for two identical tuples (except their IDs) will both appear in $\varphi(D,\jdb)$.

A {\em feasible assignment} for a CNF formula $\varphi(D,\jdb)$ is an assignment, $\alpha$, that satisfies all the hard clauses $\mathcal{H}(\jdb)$.

We now define the repair cost of a database based on an assignment from~\Cref{def:cnf}. We will then link this notion with the notion of data repair by means of tuple deletions and additions. 

\begin{definition}[Repair cost under assignment]\label{def:repair-cost-for-assignment}
Given a database $D$, a fairness criterion $\fairnessdef$, a CNF formula $\varphi(D,\jdb)$ and $\alpha$ a feasible assignment for $\varphi$, let $D_R$ denote the set of tuples $t \in \jdb$ such that $x_t$ is assigned \texttt{True} under the assignment $\alpha$. We define the cost of repairing $D$ under the assignment $\alpha$ as $\repaircost(\varphi(D,\jdb), \alpha) = |D \symdif D_R|$. That is, the size of the set of tuples in $D$ whose corresponding variables are assigned to \texttt{False}, or are not present in $D$ and assigned to \texttt{True}.
\end{definition}

The following lemma discusses the direct translation between the number of soft clauses satisfied by an assignment to the CNF formula and the number of changes in the database required for it to satisfy the fairness criterion. 
In particular, a larger number of satisfied clauses means a smaller change in the original database. 

\begin{lemma}\label{lem:repair-cost-dominance}
Given a database $D$ and a fairness criterion, let $\jdb$ be the self-join database. Let $\varphi(D,\jdb)$ be a CNF formula defined according to \Cref{def:cnf} and let $\alpha_1$ and $\alpha_2$ be two feasible assignments for $\varphi$. Finally, let $\repaircost(\varphi(D,\jdb), \alpha_1)$ and $\repaircost(\varphi(D,\jdb), \alpha_2)$ be defined according to \Cref{def:repair-cost-for-assignment}. If $\alpha_1$ satisfies more soft clauses than $\alpha_2$, then:
$
\repaircost(\varphi(D,\jdb), \alpha_1) < \repaircost(\varphi(D,\jdb), \alpha_2)
$.
\end{lemma}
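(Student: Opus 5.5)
The plan is to show that, for every feasible assignment $\alpha$, the repair cost and the number of satisfied soft clauses add up to a constant that does not depend on $\alpha$. Once this holds, the strict inequality in \Cref{lem:repair-cost-dominance} follows at once.

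Let $S$ denote the bag of soft clauses of $\varphi(D,\jdb)$ from \Cref{def:cnf}. It has one clause $x_t$ for each $t \in D$, counted with multiplicity, and one clause $\neg x_t$ for each $t \in \jdb \setminus D$. Write $N = |S| = |D| + |\jdb \setminus D|$. Fix a feasible assignment $\alpha$ and let $D_R$ be the set of tuples of $\jdb$ whose variable is \texttt{True} under $\alpha$, as in \Cref{def:repair-cost-for-assignment}.

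\textbf{Step 1.} I would show that each soft clause is violated by $\alpha$ exactly when its tuple contributes one unit to $D \symdif D_R$. A clause $x_t$ with $t \in D$ is violated iff $\alpha(x_t)=\texttt{False}$, that is, iff $t \notin D_R$, which means $t$ is a deleted tuple in $D \setminus D_R$. A clause $\neg x_t$ with $t \in \jdb\setminus D$ is violated iff $\alpha(x_t)=\texttt{True}$, that is, iff $t \in D_R \setminus D$, which means $t$ is an added tuple. Since $D_R \subseteq \jdb$ by definition, the set $D_R\setminus D$ lies inside $\jdb \setminus D$, so every tuple in the symmetric difference is of one of these two kinds. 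This gives a bijection between the violated soft clauses and the elements of $D\symdif D_R$, counted as a bag. Hence
\[
\repaircost(\varphi(D,\jdb),\alpha) = N - \mathrm{sat}(\alpha),
\]
where $\mathrm{sat}(\alpha)$ is the number of satisfied soft clauses.

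\textbf{Step 2.} Apply this identity to both $\alpha_1$ and $\alpha_2$. If $\mathrm{sat}(\alpha_1) > \mathrm{sat}(\alpha_2)$, then
\[
\repaircost(\varphi(D,\jdb),\alpha_1) = N-\mathrm{sat}(\alpha_1) < N-\mathrm{sat}(\alpha_2) = \repaircost(\varphi(D,\jdb),\alpha_2).
\]
Feasibility plays no role in the counting. It is needed only so that both costs are defined as in \Cref{def:repair-cost-for-assignment}.

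\textbf{Main obstacle.} The hard part is the bag semantics. Identical tuples in $D$ with different identifiers share the same attribute values, yet each carries its own soft clause, so it must be clear whether they have separate variables. I would treat each tuple identifier as its own variable $x_t$, and read $D \symdif D_R$ as a bag symmetric difference over identifiers. With that reading the bijection in Step 1 is exact. If instead identical tuples share one variable, the argument still works: a clause of multiplicity $m$ is either satisfied or violated in full, and the corresponding tuple contributes exactly $m$ copies to the bag symmetric difference, so the counts still agree.
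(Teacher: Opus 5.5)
Your proof is correct and follows the paper's own argument: both identify a violated soft clause $x_t$ (with $t\in D$) with a deletion and a violated soft clause $\neg x_t$ (with $t\in\jdb\setminus D$) with an insertion. Your explicit identity $\repaircost(\varphi(D,\jdb),\alpha)=|S|-\mathrm{sat}(\alpha)$ makes the strict inequality rigorous where the paper's proof of this lemma argues only informally; the paper states that identity later, in its proof of \Cref{prop:repair-sat-cnf}. One small fact you use without stating it, when you equate $\alpha(x_t)=\texttt{False}$ with $t\notin D_R$ for $t\in D$, is $D\subseteq\jdb$, which holds because each tuple of $D$ is reproduced by joining its own two projections.
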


By applying \Cref{lem:repair-cost-dominance}, we can define a proxy for $\repair$, $\repairsat$, as the repair with the minimum cost based on~\Cref{def:repair-cost-for-assignment}. 

\begin{definition}[$\repairsat$ as a proxy for $\repair$]\label{def:repair-sat-proxy}
Given a database $D$, a fairness criterion $\fairnessdef$, and the CNF formula $\varphi(D,\jdb)$, the cost of an optimal repair of $D$ through $\varphi(D,\jdb)$ is $$\repairsat(\fairnessdef,D) := \min_{\alpha \models \mathcal{H}(\jdb)} \repaircost(\varphi(D,\jdb), \alpha)$$
\end{definition}

\cut{
We now establish the connection between $\repairsat$ and the assignment to the CNF formula. This will drive the algorithm design.

\begin{proposition}[Expressing $\repairsat$ in terms of \Cref{def:cnf}]\label{prop:repair-sat-cnf}
Let $D$ be a database, let $\fairnessdef$ be a fairness criterion, and let $\varphi(D,\jdb)$ be the CNF from \Cref{def:cnf}. Denote by $\mathcal{H}(\jdb)$ and $\mathcal{S}(\jdb)$ the sets of hard and soft clauses in $\varphi(D,\jdb)$, respectively. The following holds.
$$
\repairsat(\fairnessdef,D) = \left|\mathcal{S}(\jdb)\right| - \min_{\alpha \models \mathcal{H}(\jdb)}
|\{\text{soft clauses not satisfied by }\alpha\}|
$$
\end{proposition}
}

We detail the properties of $\repairsat$, showing that it satisfies the required desiderata and its bounded sensitivity relative to its range. 

\begin{proposition}[$\repairsat$ satisfies the desired properties]\label{prop:repairsat-properties}~
\begin{enumerate}
\item $\repairsat$ satisfies the Positivity property.
\item $\repairsat$ satisfies the Monotonicity property. 
\item The range of $\repairsat$ is $[0,n|\constraints|]$ for a set of criteria \constraints.
\item The sensitivity of $\repairsat$ is $2|\constraints|$ for a set of criteria \constraints.
\end{enumerate}
\end{proposition}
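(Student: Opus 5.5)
We treat the four items separately, reducing each to a single criterion $\fairnessdef$ and then summing over $\constraints$, since $\repairsat(\constraints,D)=\sum_{\fairnessdef\in\constraints}\repairsat(\fairnessdef,D)$.

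\textbf{Positivity and monotonicity.} $\repaircost$ is the cardinality of a symmetric difference, so $\repairsat(\fairnessdef,D)\ge 0$.

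For the ``if'' direction, suppose $D$ satisfies $\fairnessdef$. Then for every $a$, the set of $(p,y)$ pairs occurring in $D$ with $\admissibleset=a$ is a product set, so $\jdb$ coincides with $D$ as a set. We take $\alpha$ assigning \texttt{True} exactly to the tuples of $D$. A hard clause $\neg x_{t_1}\lor\neg x_{t_2}\lor x_{t_3}$ with $t_1,t_2\in D$ has $t_3=(p_1,y_2,a)\in\jdb=D$, so $\alpha$ is feasible and $\repaircost=0$.

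For the ``only if'' direction, suppose the optimum is $0$. Then the optimal $D_R$ equals $D$ (all soft clauses satisfied) and is closed under the hard clauses. Hence, within each $\admissibleset=a$ group, the support of $D$ is a product set $S_P\times S_Y$.

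The main obstacle is that closure of the support is not the same as probabilistic independence under bag semantics: multiplicities must also factor. I would handle this the way \cite{InterFair} does, namely that a model of the hard clauses corresponds to a database satisfying the (set-level) independence, which is the notion of satisfaction implicit in the reduction. I would state this interpretation explicitly and cite \cite{InterFair}. If a strict multiset argument is demanded, I would note that soft clauses are a bag, so a repair corresponds to keeping or dropping whole tuple-classes, and argue via that correspondence.

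For the summed measure, positivity follows termwise: the sum is zero iff every term is zero, iff $D$ satisfies every $\fairnessdef\in\constraints$. Monotonicity is immediate because $\constraints\subseteq\constraints'$ only adds nonnegative terms.

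\textbf{Range.} For one criterion we exhibit a cheap feasible assignment: the all-\texttt{False} assignment satisfies every hard clause, since each contains $\neg x_{t_1}$. Its cost is $|D|=n$, because only the tuples of $D$ are flipped and tuples of $\jdb\setminus D$ stay \texttt{False}. Hence $\repairsat(\fairnessdef,D)\le n$, and summing gives the range $[0,n|\constraints|]$. The lower end is attained by fair databases.

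\textbf{Sensitivity.} Let $D'\neighbor D$, obtained by replacing $t$ with $t'$. Take an optimal assignment $\alpha$ for $D$ with repair $D_R$. The plan is to build a feasible assignment for $D'$ of cost at most $\repairsat(\fairnessdef,D)+2$, and then use symmetry.

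A candidate is $D_R$ itself, viewed over $\jdb'$. The difficulty is that $D_R$ must lie in $\jdb'$, and $\jdb'$ may have lost tuples whose lineage used $t$. We therefore instead use $D_R\setminus\{\text{tuples of }\jdb\text{ not in }\jdb'\}$ and argue the following:
\begin{itemize}
\item Deleting $t$'s contribution can be done by removing only $t$'s whole $(p,\cdot,a)$ or $(\cdot,y,a)$ line, which is costly, so this needs care.
\item Alternatively, any repair $D_R$ satisfying the independence is itself a fair database with $|D'\symdif D_R|\le|D\symdif D_R|+2$.
\end{itemize}
If the proxy's optimum equals the true minimum repair over fair subsets of $\jdb'$, we only need $D_R\subseteq\jdb'$. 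When it is not, we take the fair database $D_R\cap\jdb'$ or an adjusted product set.

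This containment issue is the crux, and I expect it to be the hardest step. First I would try showing that any optimal $D_R$ can be chosen with its support inside the product closure of $D_R\cap D$, which lies in $\jdb'$ unless $t$ is essential. In that essential case, dropping the unique supporting line costs at most one extra change. Summing over criteria then gives sensitivity $2|\constraints|$.
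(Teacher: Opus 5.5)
Your monotonicity argument matches the paper. Your range argument is correct and more explicit than the paper, which treats items 1--3 as immediate from the definition: the all-\texttt{False} assignment is feasible because every hard clause contains $\neg x_{t_1}$, and it costs exactly $n$.

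The genuine gap is item 4. You correctly locate the crux: an optimal repair $D_R\subseteq\jdb$ for $D$ need not be a subset of $\jdb'$, so it is not an assignment for $\varphi(D',\jdb')$. But you leave it unresolved. Your plan to choose $D_R$ inside the product closure of $D_R\cap D$, and to handle an ``essential'' $t$ by dropping a supporting line, is neither proved nor needed. Your worry that discarding $t$'s line is costly is also misplaced.

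The paper's key idea is to work over the domain cross-product $\crossdb=\dom(\protectedset)\times\dom(\responseset)\times\dom(\admissibleset)$, which is identical for $D$ and $D'$. The optimum is the same over $\jdb$ and over $\crossdb$, for two reasons:
\begin{itemize}
\item Extending an assignment by \texttt{False} outside $\jdb$ preserves feasibility and cost. A hard clause with $x_{t_1},x_{t_2}$ true has $t_1,t_2\in\jdb$, hence $t_3\in\jdb$ and the clause is already in $\mathcal{H}(\jdb)$.
\item Restricting to $\jdb$ preserves feasibility and only drops inserted tuples.
\end{itemize}
Once the hard clauses are common to both neighbours, every feasible $\alpha$ satisfies $|\repaircost(\varphi(D,\crossdb),\alpha)-\repaircost(\varphi(D',\crossdb),\alpha)|\le 2$. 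Only the soft clauses of the removed and the added tuple change. Minimising over the same feasible set then gives sensitivity $2$ per criterion.

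Alternatively, the candidate you name, $D_R\cap\jdb'$, already works once you check two facts you omitted.
\begin{itemize}
\item Feasibility: each $\admissibleset=a$ slice of $\jdb'$ is the product of the $\protectedset$- and $\responseset$-projections of that slice of $D'$. Intersecting a rectangle-closed set with a product set keeps it rectangle-closed, so $D_R\cap\jdb'$ satisfies $\mathcal{H}(\jdb')$.
\item Cost: every tuple of $\jdb\setminus\jdb'$ is absent from $D'$, because one of its $(\protectedset,\admissibleset)$ or $(\responseset,\admissibleset)$ projections no longer occurs in $D'$. Discarding such tuples can only lower the cost relative to $D'$, so $|D'\symdif(D_R\cap\jdb')|\le|D'\symdif D_R|\le|D\symdif D_R|+2$.
\end{itemize}
Symmetry then finishes the argument.

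Your positivity concern is well founded, and the paper never addresses it. Under bag semantics the ``only if'' direction actually fails. Take $\fairnessdef=\protectedset\independent\responseset$ and $D=\{(0,0),(0,1),(1,0),(1,1),(1,1)\}$. The support is all of $\{0,1\}^2$, so the all-\texttt{True} assignment costs $0$. Yet the joint frequency of $(1,1)$ is $2/5\neq 9/25$, the product of the marginals. So the set-level reading of satisfaction you propose is genuinely required. Your ``whole tuple-classes'' fallback cannot rescue the probabilistic reading, since keeping whole classes never changes the multiplicities within them.
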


\cut{
\begin{proof}[Proof sketch of \Cref{prop:repairsat-properties}]
We focus on item (4) since the first three follow directly from \Cref{def:repair-sat-proxy}. 
The full analysis that includes a series of lemmas can be viewed in 
\ifpaper
~\cite{fullversion}.
\else
~\Cref{sec:appendix}.
\fi 

We begin the analysis by defining the concept of {\em assignment extension} for the purposes of analyzing the connection between the join and cross-product databases, where the {\em cross-product database} is $\crossdb = Domain(\protectedset) \times Domain(\responseset) \times Domain(\admissibleset)$.
Given a database $D$ and a CNF formula $\varphi(D,\jdb) = \mathcal{H}(\jdb) \land \bigwedge_{t \in D} x_t \land \bigwedge_{t \in \jdb - D} (\neg x_t)$, an assignment $\alpha$ for $\varphi(D,\jdb)$, and a CNF formula $\varphi' = \mathcal{H}(\crossdb) \land \bigwedge_{t \in D} x_t \land \bigwedge_{t \in \crossdb - D} (\neg x_t)$ such that $\jdb \subseteq \crossdb$, an extension of the assignment $\alpha$ for $\varphi(D,\crossdb)$ is defined as follows:
$$
\alpha'(x_t) =
\begin{cases}
\alpha(x_t) & \text{if } t \in \jdb \\
\texttt{False} & \text{if } t \in \crossdb \setminus \jdb\end{cases}
$$
With this, we show that a minimum repair of the self-join and cross-product databases by their CNF formulae have identical sizes. We then prove that extending a CNF formula for the self-join database to a CNF formula for the cross-product database preserves hard clauses satisfaction.
Finally, we show that extending a CNF formula for the self-join database to a CNF formula for the cross-product database cannot decrease the number of satisfied soft clauses. 

As a precursor to the sensitivity analysis, we show that the repair cost difference between neighboring databases is bounded by $2$. Formally, Let $D \neighbor D'$, let $\crossdb$ and $\crossdb'$ be their corresponding cross-product databases, and let $\fairnessdef = \protectedset \independent \responseset \mid \admissibleset$ be a fairness criterion. Let $\varphi(D,\crossdb)$ and $\varphi(D',\crossdb')$ be CNF formulae defined according to \Cref{def:cnf}, and let $\hat{\alpha}$ be a feasible assignment for $\varphi(D,\crossdb)$. Let $\repaircost(\varphi(D,\crossdb), \hat{\alpha})$ and $\repaircost(\varphi(D',\crossdb'), \hat{\alpha})$ be defined according to \Cref{def:repair-cost-for-assignment}. 
It holds that:
$$
|\repaircost(\varphi(D,\crossdb), \hat{\alpha}) - \repaircost(\varphi(D',\crossdb'), \hat{\alpha})| \leq 2
$$
It follows that for a single fairness criterion, it holds that the difference between the soft clauses in the CNFs for two neighboring databases is at most one, and also that the difference in the cost of repair is at most $2$. So for a set of fairness criteria $\constraints$ it holds that the difference between the soft clauses in the CNFs is at most $2|\constraints|$.
\end{proof}
}

\subsection{Unfairness as Top Contributions}\label{sec:contribution}

Previous work has measured tuple contributions as both measuring inconsistency~\cite{LivshitsKTIKR21} and as means of explaining results~\cite{MeliouGMS11,DeutchFGS21,LivshitsK21,LivshitsBKS21}. 
We adapt this notion to define an unfairness measure that 
measures individual tuple contribution to unfairness. 
First, we define the notion of marginal difference for a fairness criterion $\fairnessdef = \protectedset \independent \responseset \mid \admissibleset$ where $\admissibleset$ can be absent, as follows:
\begin{small}
\begin{equation*}
\begin{aligned}
\margdiff(\fairnessdef, D, t)
&= \prob\!\bigl(\admissibleset=t[\admissibleset]\bigr)
   \Bigl|
   \prob\!\bigl(
      \protectedset=t[\protectedset],\,
      \responseset=t[\responseset]
      \mid
      \admissibleset=t[\admissibleset]
   \bigr)
 \\
&
   -\;
   \prob\!\bigl(
      \protectedset=t[\protectedset]
      \mid
      \admissibleset=t[\admissibleset]
   \bigr)
   \prob\!\bigl(
      \responseset=t[\responseset]
      \mid
      \admissibleset=t[\admissibleset]
   \bigr)
   \Bigr|,
\end{aligned}
\end{equation*}
\end{small}
where $\prob(\admissibleset=t[\admissibleset]) = 1$ when $\admissibleset$ is absent.

We now define the $\contribution$ measure using the $\margdiff$ notion. 

\begin{definition}[Top-$k$ Tuple Contribution Unfairness]\label{def:contribution}
Given a database $D$, a natural number $k$, and a fairness criterion of the form $\fairnessdef = \protectedset \independent \responseset \mid \admissibleset$, define top-$k$ as the set of 
as the set of $k$ tuples with the largest $\margdiff$ values. 
We assume that each tuple in the database has a unique ID, so that tuples with the same values but different IDs can both appear in the top-$k$. Then, the tuple contribution unfairness measure is defined as $\contribution(\fairnessdef,D) = \sum_{t \in top-k} \margdiff(\fairnessdef,D,t)$.
\end{definition}

While \Cref{def:contribution} resembles \Cref{def:proxy-tvd} as a cumulative sum of residual tuple differences, they diverge when databases have similar sums but different contribution distributions, e.g., a `long tail' of outliers. We demonstrate this distinction empirically in
\ifpaper
\cite{fullversion}
\else
\Cref{sec:tvd-contribution-comparison}
\fi.

In the following proposition, we assume that if $\protectedset \independent \responseset \mid \admissibleset \in \constraints$, all values of $\admissibleset$ occur more than once in the dataset. 
This was the case in all real-world datasets included in our experiments.

\begin{proposition}[$\contribution$ satisfies the desired properties]\label{prop:contribution-properties}
Given a natural number $k$, the following holds for $\contribution$:
\begin{enumerate}
    \item $\contribution$ satisfies the Positivity property.
    \item $\contribution$ satisfies the Monotonicity property. 
    \item The range of $\contribution$ is $[0,\min\{\frac{k}{4},2\}|\constraints|]$ for a set of criteria $\constraints$.
    \item The sensitivity of $\contribution$ is 
    \begin{small}
    $\frac{3k}{n}|\constraints|$ 
    \end{small}
    for a set of unconditional $\constraints$, and 
    \begin{small}
    $\frac{7k}{n}|\constraints|$
    \end{small}
    for a set of conditional $\constraints$. 
\end{enumerate}
\end{proposition}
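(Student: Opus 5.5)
Throughout, write $\measure$ for $\contribution$ and $\measure(\constraints,D)=\sum_{\fairnessdef\in\constraints}\contribution(\fairnessdef,D)$. Items (1)--(3) reduce to a single criterion, because the sum of non-negative terms behaves well; for (4) the sensitivities simply add over the $|\constraints|$ criteria. So the plan works criterion by criterion.

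For \emph{Positivity}, each $\margdiff(\fairnessdef,D,t)$ is an absolute value scaled by a probability, so $\contribution(\fairnessdef,D)\ge 0$. If $D$ satisfies $\protectedset\independent\responseset\mid\admissibleset$, then every tuple has $\margdiff=0$, and the sum is zero. Conversely, suppose $\contribution(\fairnessdef,D)=0$. The top-$k$ tuples have the largest $\margdiff$ values, so every tuple has $\margdiff=0$. I then need to pass from "zero at the value combinations realized by tuples" to "zero at every $(p,y,a)$". For a pair $(p,y)$ with $\prob(p,y\mid a)=0$ but $\prob(p\mid a),\prob(y\mid a)>0$, I would sum over realized $y'$. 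This gives $\sum_{y'}\prob(p,y'\mid a)=\prob(p\mid a)$ and also $\sum_{y'}\prob(p\mid a)\prob(y'\mid a)=\prob(p\mid a)$. Equality on the realized pairs therefore forces the unrealized pairs with $p$ to have total product mass $0$, which is a contradiction. Hence independence holds conditionally on every $a$.

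\emph{Monotonicity} is immediate, since $\constraints\subseteq\constraints'$ only adds non-negative summands.

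For the \emph{range}, I use two bounds per criterion. First, each term is at most $\prob(a)\cdot\max|xy'-xy|\le 1/4$-type bounds: writing $\prob(p,y\mid a)-\prob(p\mid a)\prob(y\mid a)$, the value lies in $[-1/4,1/4]$. The argument is $\prob(p,y)\le\min(\prob(p),\prob(y))$, and $\min(u,v)-uv\le 1/4$ together with $uv\le 1/4$. So the top-$k$ sum is at most $k/4$. Second, summing $\margdiff$ over all tuples is a weighted sum. Bounding it by the total $\sum_{p,y,a}\prob(a)\,|\prob(p,y\mid a)-\prob(p\mid a)\prob(y\mid a)|\le 2$ would need care, because each tuple is counted with multiplicity. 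Instead I would bound the top-$k$ sum by grouping tuples by value and applying the triangle inequality $\sum|X-Y|\le\sum X+\sum Y=2$. I expect to need the observation that a tuple's $\margdiff$ is at most its cell's $\prob(p,y,a)+\prob(a)\prob(p\mid a)\prob(y\mid a)$, and that $k$ tuples cover mass summing to at most $2$.

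\emph{Sensitivity} is the main obstacle. Fix neighbors $D\neighbor D'$ differing in a tuple $t_0\mapsto t_0'$. I would first bound, for every tuple $t$ present in both databases, $|\margdiff(\fairnessdef,D,t)-\margdiff(\fairnessdef,D',t)|$.

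\begin{itemize}
\item \emph{Unconditional case.} The joint and both marginals each change by at most $1/n$. Then $|\Delta(xy)|\le|\Delta x|+|\Delta y|$, so $|\Delta\margdiff|\le 3/n$.
\item \emph{Conditional case.} I would work with unnormalized forms. Write $\prob(a)\prob(p,y\mid a)=\prob(p,y,a)$ and $\prob(a)\prob(p\mid a)\prob(y\mid a)=\prob(p,a)\prob(y,a)/\prob(a)$. I then bound the change of the ratio $uv/w$ with $u,v\le w$ when each count moves by one. This is where the assumption that every $a$-value occurs at least twice is used, so that $w$ stays positive. I hope to get a bound of about $6/n$ from this piece and $1/n$ from the joint, giving $7/n$.
\end{itemize}

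Next comes the top-$k$ comparison. Let $S$ be the top-$k$ set of $D$. Consider the set $S$ with $t_0$ swapped for $t_0'$ if needed, evaluated in $D'$. Its sum is within $k\cdot c/n$ of $\contribution(D)$, where $c\in\{3,7\}$, with the swapped tuple handled by the same per-value bound since $t_0'$ differs from $t_0$. Since the top-$k$ of $D'$ is at least this sum, we get $\contribution(D')\ge\contribution(D)-kc/n$. Symmetry finishes the argument.

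The delicate point is getting the exact constant $7$ in the conditional ratio bound, including the case where $t_0$ and $t_0'$ share or differ in their $a$-value. I would handle this by splitting into these cases and using $w\ge 2/n$.
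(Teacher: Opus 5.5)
Items (1) and (2) and your per-value sensitivity bounds for $\margdiff$ are fine. Your positivity argument, which uses row sums to pass from realized value combinations to all of them, is more careful than the paper's one-line assertion. Writing the conditional term as $\prob(p,y,a)-\prob(p,a)\prob(y,a)/\prob(a)$ is also a valid alternative to the paper's route through the $1/(n_a-1)$ sensitivity of conditional probabilities; carried out, it even gives a constant below $7$.

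There is one small slip in the $k/4$ bound. The inequality $uv\le 1/4$ holds only when $u+v\le 1$; take $u=v=1$ as a counterexample. Otherwise you need the lower Fr\'echet bound $q\ge u+v-1$, which gives $q-uv\ge-(1-u)(1-v)\ge-1/4$, as the paper does.

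The real problems are the two steps that pass from per-value quantities to a sum over top-$k$ \emph{tuples}. They cannot be repaired: under the per-tuple definition, where distinct IDs with equal values both count, the claims themselves are false.

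\textbf{The bound $2$ in item (3).} The step ``$k$ tuples cover mass summing to at most $2$'' fails whenever several top-$k$ tuples share a value combination, because each copy contributes the full cell value. Take binary $\protectedset,\responseset$ with $n/2$ copies of $(0,0)$ and $n/2$ copies of $(1,1)$. Every tuple has $\margdiff=|1/2-1/4|=1/4$, so $\contribution(\fairnessdef,D)=k/4$ for $k\le n$, which exceeds $2$ once $k>8$. The paper's proof does not avoid this either. It asserts $\sum_{t\in D}|q_t-p_tr_t|=2\,\mathrm{TVD}$, which is exactly the multiplicity conflation you flagged. Only the $k/4$ bound survives.

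\textbf{The swap step in item (4).} The claim that the swapped tuple is ``handled by the same per-value bound since $t_0'$ differs from $t_0$'' does not follow. Your per-value bound compares a single value combination across $D$ and $D'$. It says nothing about $\margdiff(\fairnessdef,D,t_0)$ versus $\margdiff(\fairnessdef,D',t_0')$. If $D$ has no copy of $t_0$'s values outside $S$, you lose all of $\margdiff(\fairnessdef,D,t_0)$, which can be close to $1/4$.

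Concretely, let $k=1$ and $\protectedset,\responseset\in\{0,1,2\}$. Let $D$ contain one tuple $(0,0)$ and $m$ tuples in each of $(0,1),(0,2),(1,0),(2,0)$, so $n=4m+1$. Then $\contribution(\fairnessdef,D)=\margdiff(\fairnessdef,D,(0,0))=4m^2/n^2$. Replacing the $(0,0)$ tuple by a $(0,1)$ tuple gives $\contribution(\fairnessdef,D')=2m(m+1)/n^2$. The change $2m(m-1)/n^2$ tends to $1/8$, far above $3k/n$. With a constant admissible attribute, the same example defeats $7k/n$.

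The paper's proof has the same hole. After extending $\margdiff$ by $0$ to value combinations absent from $D'$, it bounds every top-$k$ term by the per-tuple lemma. But for the removed singleton the difference is $\margdiff(\fairnessdef,D,t_0)$, not $O(1/n)$.

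\textbf{A possible repair.} Both the bound $2$ and the $ck/n$ sensitivity do hold if the top-$k$ ranges over value combinations in $\dom(\protectedset)\times\dom(\responseset)\times\dom(\admissibleset)$, with $\margdiff$ given by its formula and set to $0$ when $\prob(\admissibleset=a)=0$. The candidate $k$-sets are then identical for $D$ and $D'$, no swap is needed, and your comparison argument closes.
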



\noindent\paratitle{Practical use of the measures}\label{par:practical-use-of-the-measures}
$\mutualproxytvd$, $\repairsat$, and $\contribution$ satisfy monotonicity with a lower bound of $0$ (absolute fairness) and an upper bound (\Cref{prop:tvd-properties,prop:repairsat-properties,prop:contribution-properties}), enabling detection of both fair and extremely unfair cases, demonstrated by one of our use cases in \Cref{sec:case-studies}. For nuanced assessment, users can calibrate against a likely-satisfied baseline criterion; 
e.g., in \Cref{tab:query-measures} we show the values of the three unfairness measures with the criteria $\texttt{sex} \independent \texttt{income>50K}$ and $\texttt{race} \independent \texttt{income>50K}$ in  
the first two rows and indicate a more obvious bias for income w.r.t. race.


\paratitle{Relation to other suggested properties}\label{par:continuity-and-progression}
Previous work~\cite{LivshitsKTIKR21} has also proposed the \emph{continuity} and \emph{progression} properties for inconsistency measures. 
Underlying both properties is a dynamic data repair process with an associated operation set $O$, e.g., a single tuple deletion and/or modification, and a parameter $\delta \geq 1$. 

Informally, continuity limits the rate of inconsistency change caused by a single operation from $O$. In our setting, for every two databases $D_1,D_2$, criteria $\constraints$, and operation $o_1 \in O$, there exists $o_2 \in O$ such that 
\begin{small}
$\left|\measure(\constraints, D_1) - \measure(\constraints, o_1(D_1)\right| \le \delta \left|\measure(\constraints, D_2) - \measure(\constraints, o_2(D_2)\right|$. 
\end{small}
Assuming $O$ allows the modification, deletion, or insertion of a single tuple, 
we can derive a similar property from the sensitivity analysis of the measures (\Cref{tbl:measure-summary}), which guarantees that the rate of change is bounded for neighboring datasets, where $D$ and $D' = o_1(D)$ are neighbors: $\left|\measure(\constraints, D_1) - \measure(\constraints, o_1(D_1)\right| \le \Delta_\measure$. 

Progression states that in any case where a database violates the fairness criteria, the repair process always allows for some path towards more database consistency, i.e., there is always an operation $o\in O$ such that inconsistency is reduced after applying $o$. 
As opposed to the integrity constraints considered in~\cite{LivshitsKTIKR21} which are anti-monotonic, fairness criteria do not have this property.
Therefore, if $O$ only allows for tuple deletions for example, progression does not necessarily hold. 
To see this, consider the criterion $\texttt{sex} \independent \texttt{income>50K}$ and a database where all individuals have $\texttt{sex} = M$ and $\texttt{income>50K} = 1$. This database violates the fairness criterion but any tuple deletion will not reduce the unfairness expressed via \mutualproxytvd\ and \contribution. 
For \repairsat, consider the criteria $\constraints = \{A \independent B, C \independent D\}$ and the database $D$ with the schema $(A,B,C,D)$ and tuples  
$(1, 1, a, x), (1, 0, b, y), (0, 1, b, y), (0, 0, a, x)$. 
$D$ satisfies $A \independent B$ but violates $C \independent D$. Any single tuple deletion will make the database violate $A \independent B$, necessitating at least a single increment of the \repairsat\ measure. 
\section{Private Measure Computation}\label{sec:algorithms}

We describe the algorithms that compute the unfairness measures and comply with DP, showing the third desideratum from \Cref{sec:model}. 

\subsection{Computing $\mutualproxytvd$}\label{sec:compute-tvd-proxy}
The pseudocode for computing $\mutualproxytvd$ is summarized in \Cref{alg:tvd-proxy}. 
In line \ref{line:1_1}, the algorithm initializes to zero the variable that will accumulate $\mutualproxytvd(\fairnessdef, D)$ for all $\fairnessdef$ in the set of criteria $\constraints$.
For each fairness criterion of the form $\fairnessdef = \protectedset \independent \responseset \mid \admissibleset$ in the set of criteria $\constraints$, in line \ref{line:1_2-4} the algorithm computes empirical probabilities derived from the dataset $D$: the conditional joint probability $\prob(\protectedset,\responseset \mid \admissibleset)$ and the conditional marginal probabilities $\prob(\protectedset \mid \admissibleset)$ and $\prob(\responseset \mid \admissibleset)$. In case $\admissibleset$ is not given, we assume that $\prob(\protectedset,\responseset \mid \admissibleset)$ is equal to $\prob(\protectedset,\responseset)$, $\prob(\protectedset \mid \admissibleset)$ is equal to $\prob(\protectedset)$, and $\prob(\responseset \mid \admissibleset)$ is equal to $\prob(\responseset)$.

The algorithm splits into two cases: unconditional (no $\admissibleset$), computing TVD directly in line~\ref{line:1_5}; and conditional, computing $\prob(\admissibleset=\admissiblevalue)$ for each $\admissiblevalue$ in line~\ref{line:1_6}, then summing the probability-weighted conditional TVDs in line~\ref{line:1_7}. 
Finally, the algorithm computes $\mutualproxytvd(\fairnessdef, D)$ as $2 \cdot \mathrm{TVD}^2$ (\Cref{def:proxy-tvd}) and updates the cumulative sum with this value in line \ref{line:1_8}. 
After computing \mutualproxytvd, the algorithm applies the Laplace mechanism in line \ref{line:1_9} by adding noise according to the sensitivity $\frac{16|\constraints|}{n}$ (Item 4 in \Cref{prop:tvd-properties}), and the privacy budget $\varepsilon$. The resulting value $\widetilde{\mutualproxytvd}(\constraints, D)$ is then returned. 

The complexity of \Cref{alg:tvd-proxy} is $\mathcal{O}\left(|\constraints|n\right)$. 
\ifpaper
We defer the detailed complexity analysis to the full version of the paper~\cite{fullversion}.
\else
For the detailed analysis of complexity, see \Cref{sec:complexity}.
\fi

\begin{algorithm}
\caption{Compute $\mutualproxytvd$ under DP }\label{alg:tvd-proxy}
\KwIn{Database $D$; set of fairness criteria $\constraints$; privacy budget $\varepsilon$}
\KwOut{$\widetilde{\mutualproxytvd}(\constraints, D)$}
$\mutualproxytvd(\constraints, D) \gets 0$\;\label{line:1_1}
\ForEach{$\fairnessdef = (\protectedset \independent \responseset \mid \admissibleset) \in \constraints$}{
    Compute $\prob\left(\protectedset=\protectedvalue,\responseset=\responsevalue \mid \admissibleset=\admissiblevalue\right)$, $\prob\left(\protectedset=\protectedvalue \mid \admissibleset=\admissiblevalue\right)$, $\prob\left(\responseset=\responsevalue \mid \admissibleset=\admissiblevalue\right)$\; \label{line:1_2-4}
    \tcc{Unconditional criterion}
    \If{$\admissibleset$ is $\emptyset$}
    {
        $\mathrm{TVD} \gets \frac{1}{2}\sum_{\protectedvalue \in \protectedset,\responsevalue \in \responseset} \bigl|\prob\left(\protectedset=\protectedvalue,\responseset=\responsevalue\right)$ \nonumber
        $\quad - \prob\left(\protectedset=\protectedvalue\right)\prob\left(\responseset=\responsevalue\right)\bigr|$\; \label{line:1_5}
    }
    \tcc{Conditional criterion}
    \Else{
        Compute $\prob\left(\admissibleset=\admissiblevalue\right)$ for every $\admissiblevalue \in \admissibleset$\; \label{line:1_6}
        $\mathrm{TVD} \gets \sum_{\admissiblevalue \in \admissibleset} \prob(\admissibleset=\admissiblevalue) \cdot \Bigl(\frac{1}{2}\sum_{\protectedvalue \in \protectedset,\responsevalue \in \responseset} \bigl|\prob\left(\protectedset=\protectedvalue,\responseset=\responsevalue \mid \admissibleset=\admissiblevalue\right)$ \nonumber
        \\
        $\quad - \prob\left(\protectedset=\protectedvalue \mid \admissibleset=\admissiblevalue\right)\prob\left(\responseset=\responsevalue \mid \admissibleset=\admissiblevalue\right)\bigr|\Bigr)$\; \label{line:1_7}
    }
    $\mutualproxytvd(\constraints, D) \gets \mutualproxytvd(\constraints, D) + 2 \cdot \mathrm{TVD}^2$\; \label{line:1_8}
}
\Return $\mutualproxytvd(\constraints, D) + \mathrm{Lap}\left(0, \frac{16|\constraints|}{n\varepsilon}\right)$\; \label{line:1_9}
\end{algorithm}

We next show that \Cref{alg:tvd-proxy} satisfies DP with bounded error, and thus $\mutualproxytvd$ satisfies the third property in \Cref{sec:model}.

\begin{proposition}[DP and error bound of \Cref{alg:tvd-proxy}]~
\begin{enumerate}
    \item \Cref{alg:tvd-proxy} is $\varepsilon$-DP. 
    \item For a database $D$ and a set of fairness criteria $\constraints$, \Cref{alg:tvd-proxy} returns $\widetilde{\mutualproxytvd}(\constraints, D)$ such that for any $\varepsilon > 0$, it holds that {\footnotesize$$\mathbb{E}\left[ \left| \widetilde{\mutualproxytvd}(\constraints, D) - \mutualproxytvd(\constraints, D) \right| \right] = 
    \frac{16|\constraints|}{n\varepsilon}$$}
\end{enumerate}
\end{proposition}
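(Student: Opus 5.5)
The plan is to treat \Cref{alg:tvd-proxy} as a direct instance of the Laplace mechanism applied to the deterministic function $f(D) = \mutualproxytvd(\constraints, D) = \sum_{\fairnessdef \in \constraints} \mutualproxytvd(\fairnessdef, D)$. Lines \ref{line:1_1}--\ref{line:1_8} never release anything: they only compute this exact value. The only output is in line \ref{line:1_9}, which is $f(D)$ plus noise drawn from $\mathrm{Lap}(0, \frac{16|\constraints|}{n\varepsilon})$.

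\textbf{Privacy.} I will invoke Item 4 of \Cref{prop:tvd-properties}, which gives $\Delta_f \le \frac{16|\constraints|}{n}$ for neighboring databases of size $n$. One small point needs care: the noise scale depends on $n$. Under \Cref{def:neighboring-databases}, neighbors have the same size, so $n$ is public and identical for $D$ and $D'$. The noise scale is therefore exactly $\Delta_f/\varepsilon$ (or an upper bound on it, which only strengthens privacy). By \Cref{def:LM} and \Cref{thm:LM-DP}, the algorithm is $\varepsilon$-DP. A final check is that the conditional and unconditional branches both compute the same quantity $f$ covered by Item 4, so the per-criterion branching does not affect the argument.

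\textbf{Error.} The output minus the true value is exactly the noise $\nu \sim \mathrm{Lap}(b)$ with $b = \frac{16|\constraints|}{n\varepsilon}$. The expected absolute value of a Laplace variable is
\[
\mathbb{E}|\nu| = \int_{-\infty}^{\infty} \frac{|x|}{2b} e^{-|x|/b}\,dx = \int_0^\infty \frac{x}{b} e^{-x/b}\,dx = b,
\]
which gives the claimed equality.

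I expect no real obstacle here, since everything reduces to the sensitivity bound already established in \Cref{prop:tvd-properties}. The only subtlety worth stating is that $n$ is public under the bounded (replace-one) neighboring notion, which is what makes a data-dependent-looking noise scale legitimate.
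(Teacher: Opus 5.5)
Your proposal is correct and is the argument the paper relies on; the paper gives no separate proof of this proposition. \Cref{alg:tvd-proxy} is the Laplace mechanism calibrated to the bound $\frac{16|\constraints|}{n}$ from Item~4 of \Cref{prop:tvd-properties}, with $n$ public under the same-size neighboring relation, and the expected absolute error is just the Laplace scale $\frac{16|\constraints|}{n\varepsilon}$.

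One caution on your ``final check'': the paper's proof of Item~4 in the conditional case bounds $\sum_{\admissiblevalue}\prob(\admissibleset=\admissiblevalue)\cdot 2\,\mathrm{TVD}_{\admissiblevalue}^2$, not the quantity $2\bigl(\sum_{\admissiblevalue}\prob(\admissibleset=\admissiblevalue)\,\mathrm{TVD}_{\admissiblevalue}\bigr)^2$ that the algorithm computes. The bound still holds for the algorithm's quantity. In terms of counts, the weighted TVD equals $\frac{1}{2n}\sum_{\admissiblevalue,\protectedvalue,\responsevalue}\bigl|c_{\protectedvalue\responsevalue\admissiblevalue}-c_{\protectedvalue\admissiblevalue}c_{\responsevalue\admissiblevalue}/n_{\admissiblevalue}\bigr|$, and one replacement moves this count-scale sum by at most $8$, so the weighted TVD has sensitivity at most $\frac{4}{n}$; \Cref{lem:sensitivity-of-square} then gives $2\cdot 2\cdot\frac{4}{n}=\frac{16}{n}$.
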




\subsection{Computing $\repairsat$}

\Cref{alg:repairsat} combines the weighted CNF conversion from~\cite{InterFair} with a SAT solver to solve the weighted MaxSAT problem, using the noise scale determined in~\Cref{sec:repair-sat} to ensure DP.

In line \ref{line:2_1}, the algorithm initializes the variable that will accumulate $\repairsat(\fairnessdef, D)$ for all $\fairnessdef$ in the set $\constraints$. 
For each fairness criterion $\fairnessdef = \protectedset \independent \responseset \mid \admissibleset$ in the set $\constraints$, the code block \ref{block:2_2} runs the algorithm from~\cite{InterFair}. This block first initializes $\varphi$ to an empty set, and then adds soft and hard clauses as in~\cref{def:cnf}. Then, the algorithm calls a SAT solver in line \ref{line:2_3} to obtain an assignment $\alpha$ that satisfies all the hard clauses and maximizes the number of satisfied soft clauses in $\varphi$. From this assignment, the algorithm constructs the repaired database in line \ref{line:2_4}, which has all the tuples that $\alpha$ assigns \texttt{True}. In line \ref{line:2_5} it computes $\repairsat(\fairnessdef, D)$ according to \Cref{def:repair-sat-proxy}, 
and updates the cumulative sum with this value.

Finally, the algorithm applies the Laplace mechanism in line \ref{line:2_6} and adds noise according to the sensitivity $2|\constraints|$ (see Item 4 in \Cref{prop:repairsat-properties}). The noisy version of $\repairsat$ is then returned. 

The complexity of \Cref{alg:repairsat} is $\mathcal{O}\left(|\constraints|\left(n^4 + SAT\right)\right)$. 
\ifpaper
We defer the detailed complexity analysis to the full version~\cite{fullversion}.
\else
For the detailed analysis of complexity, see \Cref{sec:complexity}.
\fi 

\begin{algorithm}
\caption{Compute $\repairsat$ under DP}
\label{alg:repairsat}
\KwIn{Database $D$; set of fairness criteria $\constraints$; privacy budget $\varepsilon$}
\KwOut{$\widetilde{\repairsat}(\constraints, D)$}

$\repairsat(\constraints, D) \gets 0$\; \label{line:2_1}

\ForEach{$\fairnessdef=(\protectedset \independent \responseset \mid \admissibleset) \in \constraints$}{
    \tcc{Algorithm from~\cite{InterFair}}
    \colorbox{gray!15}{\parbox{\dimexpr\linewidth-9\fboxsep\relax}{%
        $\jdb(\protectedset_1,\responseset_2,\admissibleset) 
        \gets D(\protectedset_1,\responseset_1,\admissibleset)
        \bowtie D(\protectedset_2,\responseset_2,\admissibleset)$\; \label{l:join} 
        $\varphi \gets \emptyset$\;
        \ForEach{$t \in \jdb$}{
            \If{$t \in D$}{Add the soft clause $x_t$ to $\varphi$\;}
            \If{$t \notin D$}{Add the soft clause $\neg x_t$ to $\varphi$\;}
        }
        $C(\protectedset_1,\responseset_1,\protectedset_2,\responseset_2,\admissibleset) 
        \gets \jdb(\protectedset_1,\responseset_1,\admissibleset)
        \land \jdb(\protectedset_2,\responseset_2,\admissibleset)$\; \label{l:compute-costs} 
        \ForEach{$t \in C$}{
            $t_1 \gets t(\protectedset_1,\responseset_1,\admissibleset)$, $t_2 \gets t(\protectedset_2,\responseset_2,\admissibleset)$, $t_3 \gets t(\protectedset_1,\responseset_2,\admissibleset)$\;
            Add the hard clause $(\neg x_{t_1} \lor \neg x_{t_2} \lor x_{t_3})$ to $\varphi$\; \label{block:2_2}
        }
    }} 

    $\alpha \gets Solver(\varphi)$\; \label{line:2_3}
    $D_R \gets \{t \mid \alpha(x_t) = \texttt{True}\}$ \label{line:2_4}
    $\repairsat(\constraints, D) \gets \repairsat(\constraints, D) + |D \symdif D_R|$\; \label{line:2_5}
}

\Return $\repairsat(\constraints, D) + \mathrm{Lap}\left(0, \frac{2|\constraints|}{\varepsilon}\right)$\; \label{line:2_6}
\end{algorithm}


\begin{proposition}[DP and error bound on \Cref{alg:repairsat}]\label{prop:repairsat-alg-properties}~
\begin{enumerate}
    \item \Cref{alg:repairsat} is $\varepsilon$-DP. 
    \item For a database $D$ and a set of fairness criteria $\constraints$, \Cref{alg:repairsat} returns $\widetilde{\repairsat}(\constraints, D)$ such that for any $\varepsilon > 0$, it holds that 
    {\footnotesize$\mathbb{E}\left[ \left| \widetilde{\repairsat}(\constraints, D) - \repairsat(\constraints, D) \right| \right] = \frac{2|\constraints|}{\varepsilon}$}.
\end{enumerate}
\end{proposition}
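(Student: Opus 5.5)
The plan is to treat \Cref{alg:repairsat} as a single application of the Laplace mechanism to the deterministic function $f(D) = \repairsat(\constraints, D) = \sum_{\fairnessdef \in \constraints} \repairsat(\fairnessdef, D)$, and then read off both items.

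First I will argue that the accumulated value returned before noise (line~\ref{line:2_5}) equals $f(D)$ exactly. For each criterion, block~\ref{block:2_2} builds $\varphi(D,\jdb)$ as in \Cref{def:cnf}. The solver in line~\ref{line:2_3} returns a feasible assignment maximizing the number of satisfied soft clauses. By \Cref{lem:repair-cost-dominance}, such a maximizer minimizes $\repaircost(\varphi(D,\jdb),\alpha)$ among feasible assignments. Otherwise, some feasible assignment with strictly smaller cost would have to satisfy at least as many soft clauses; by the lemma's contrapositive and the fact that cost and unsatisfied soft clauses correspond one-to-one under \Cref{def:repair-cost-for-assignment}, this would be a contradiction. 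Hence $|D \symdif D_R| = \repairsat(\fairnessdef, D)$ by \Cref{def:repair-sat-proxy}. The step I consider most delicate is this link between the solver's output and the proxy. It relies on the solver being exact; if only an approximate MaxSAT solver were used, the privacy argument would break, so I will state exactness as the assumption.

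Second, for privacy, Item~4 of \Cref{prop:repairsat-properties} gives $\sens_f = 2|\constraints|$. Line~\ref{line:2_6} adds $\mathrm{Lap}(2|\constraints|/\varepsilon) = \mathrm{Lap}(\sens_f/\varepsilon)$, so \Cref{thm:LM-DP} yields $\varepsilon$-DP. Nothing else leaves the algorithm: the intermediate $\varphi$, $\alpha$ and $D_R$ are internal and are not released.

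Third, for the error bound, note that $\widetilde{\repairsat}(\constraints,D) - \repairsat(\constraints,D) = \nu$ with $\nu\sim\mathrm{Lap}(b)$ and $b = 2|\constraints|/\varepsilon$. Then $\mathbb{E}|\nu| = \int_{-\infty}^{\infty} \frac{|x|}{2b}e^{-|x|/b}\,dx = b$, which gives exactly $\frac{2|\constraints|}{\varepsilon}$ for every $\varepsilon>0$.
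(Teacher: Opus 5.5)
Your proposal is correct and takes the same route as the paper. The paper gives no separate proof of this proposition, but it treats \Cref{alg:repairsat} as the Laplace mechanism calibrated to the sensitivity bound $2|\constraints|$ from Item~4 of \Cref{prop:repairsat-properties}, so privacy follows from \Cref{thm:LM-DP} and the error from $\mathbb{E}|\mathrm{Lap}(b)| = b$. Your extra step, that the pre-noise sum equals $\repairsat(\constraints,D)$ only when the MaxSAT solver is exact, is a worthwhile clarification the paper leaves implicit; the identity between repair cost and the number of unsatisfied soft clauses already gives it without \Cref{lem:repair-cost-dominance}.
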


\subsection{Computing $\contribution$}
We describe the pseudocode in \Cref{alg:topk-contribution} for computing $\contribution$. 
In line \ref{line:3_1}, the algorithm initializes to zero the variable that will accumulate $\contribution(\fairnessdef, D)$ for all $\fairnessdef$ in the set of criteria $\constraints$. 
For each fairness criterion of the form $\fairnessdef = \protectedset \independent \responseset \mid \admissibleset$ in the set of criteria $\constraints$, in line \ref{line:3_2-4} the algorithm computes empirical probabilities derived from the dataset $D$: the conditional joint probability $\prob(\protectedset,\responseset \mid \admissibleset)$ and the conditional marginal probabilities $\prob(\protectedset \mid \admissibleset)$ and $\prob(\responseset \mid \admissibleset)$. 
In case $\admissibleset$ is not given, we assume that $\prob(\protectedset,\responseset \mid \admissibleset)$ is equal to $\prob(\protectedset,\responseset)$, $\prob(\protectedset \mid \admissibleset)$ is equal to $\prob(\protectedset)$, and $\prob(\responseset \mid \admissibleset)$ is equal to $\prob(\responseset)$. 
Then, the algorithm iterates over every tuple $t$ in $D$ and in line \ref{line:3_5} it computes the marginal difference $\margdiff(\fairnessdef, D, t)$ (defined in \Cref{sec:contribution}). 
The resulting value quantifies the contribution of each tuple $t$ to the deviation of the observed joint probability from the joint probability in the case when the independence defined by $\fairnessdef$ would hold.  
After computing marginal differences for all tuples in the dataset, the algorithm sorts them in descending order in line \ref{line:3_6}. In line \ref{line:3_7} it adds the sum of the $k$ largest marginal differences to the global value $\contribution(\constraints, D)$. 
Intuitively, for each criterion in $\constraints$, it adds the violation influence of the $k$ most influential tuples to the total score. 
The algorithm performs all the previous stages for every $\fairnessdef$ in $\constraints$. After iterating over all of them, 
the algorithm adds Laplace noise in line \ref{line:3_8} according to the sensitivity 
{$\frac{7k}{n}|\constraints|$} 
\normalsize{if there is a conditional $\fairnessdef \in \constraints$, and} 
{$\frac{3k}{n}|\constraints|$} \normalsize{otherwise} (Item 4 in \Cref{prop:contribution-properties}). 
This noisy value is returned as the DP version of \contribution.

The complexity of \Cref{alg:topk-contribution} is $\mathcal{O}\left(|\constraints|n \log n\right)$. 
\ifpaper
We defer the detailed complexity analysis to the full version~\cite{fullversion}.
\else
For the detailed analysis of complexity, see \Cref{sec:complexity}.
\fi 

\begin{algorithm}
\caption{Compute $\contribution$ under DP}\label{alg:topk-contribution}
\KwIn{Database $D$; set of fairness criteria $\constraints$; accuracy parameter $k \in \mathbb{N}$; privacy budget $\varepsilon$}
\KwOut{$\widetilde{\contribution}(\constraints, D)$}

$\contribution(\constraints, D) \gets 0$\; \label{line:3_1}

\ForEach{$\fairnessdef=(\protectedset \independent \responseset \mid \admissibleset) \in \constraints$}{ 
  Compute $\prob[\protectedset=\protectedvalue,\responseset=\responsevalue \mid \admissibleset=\admissiblevalue]$, $\prob[\protectedset=\protectedvalue \mid \admissibleset=\admissiblevalue]$, $\prob[\responseset=\responsevalue \mid \admissibleset=\admissiblevalue]$\; \label{line:3_2-4}
  
  \ForEach{$t \in D$}{
    $\margdiff(\fairnessdef,D,t) \gets 
      \prob[\admissibleset = \admissiblevalue] \Bigl|
        \prob[\protectedset=\protectedvalue,\responseset=\responsevalue \mid \admissibleset=\admissiblevalue]
        - \prob[\protectedset=\protectedvalue \mid \admissibleset=\admissiblevalue]\,
          \prob[\responseset=\responsevalue \mid \admissibleset=\admissiblevalue]
      \Bigr|$\; \label{line:3_5}
  }
  
  Sort $\margdiff(\fairnessdef,D,t)$ for all $t \in D$ in descending order\;\label{line:3_6}
  $\contribution(\constraints, D) \gets \contribution(\constraints, D) + \sum_{t \in \text{top-}k} \margdiff(\fairnessdef,D,t)$\; \label{line:3_7}
}

\If{$\exists \fairnessdef = (\protectedset, \responseset \mid \admissibleset) \in \constraints$ such that $\admissibleset$ is given}
{
    $\Delta \gets \frac{7k}{n}$\;
}
\Else{
    $\Delta \gets \frac{3k}{n}$\;
}
\Return $\contribution(\constraints, D) + \mathrm{Lap}\left(0, \frac{|\constraints| \cdot \Delta}{\varepsilon}\right)$\; \label{line:3_8}
\end{algorithm}

\begin{proposition}[DP and error bound of \Cref{alg:topk-contribution}]~
\begin{enumerate}
    \item \Cref{alg:topk-contribution} is $\varepsilon$-DP. 
    \item For a database $D$, \Cref{alg:topk-contribution} returns $\widetilde{\contribution}(\constraints, D)$ such that for any $\varepsilon > 0$, it holds that 
    \footnotesize{$${\mathbb{E}\left[ \left| \widetilde{\contribution}(\constraints, D) - \contribution(\constraints, D) \right| \right] = \frac{3k|\constraints|}{n\varepsilon}}$$} 
    \normalsize{for a set of unconditional fairness criteria $\constraints$, and } 
    \footnotesize{$${\mathbb{E}\left[ \left| \widetilde{\contribution}(\constraints, D) - \contribution(\constraints, D) \right| \right] = \frac{7k|\constraints|}{n\varepsilon}}$$} 
    \normalsize{for a set of conditional fairness criteria.}
\end{enumerate}
\end{proposition}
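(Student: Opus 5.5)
The plan is to treat \Cref{alg:topk-contribution} as one application of the Laplace mechanism to the query $f(D)=\contribution(\constraints,D)=\sum_{\fairnessdef\in\constraints}\contribution(\fairnessdef,D)$. Everything in the algorithm before line~\ref{line:3_8} is a deterministic function of $D$: the empirical probabilities, the values $\margdiff(\fairnessdef,D,t)$, the sort, and the top-$k$ sums. The only randomness is the final Laplace draw.

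\textbf{Privacy.} First I will check that the algorithm computes exactly $\contribution(\constraints,D)$ as in \Cref{def:contribution}, extended additively over $\constraints$. Line~\ref{line:3_5} evaluates $\margdiff$ for each tuple, and lines~\ref{line:3_6}--\ref{line:3_7} add the $k$ largest values.

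Next I invoke Item~4 of \Cref{prop:contribution-properties}. This gives $\sens_f\le\frac{3k}{n}|\constraints|$ when all criteria are unconditional and $\sens_f\le\frac{7k}{n}|\constraints|$ otherwise. Since $\frac{3k}{n}\le\frac{7k}{n}$, the bound $\frac{7k}{n}|\constraints|$ also covers a mixed set containing at least one conditional criterion, which is precisely the case the algorithm's branch selects.

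The noise scale $|\constraints|\Delta/\varepsilon$ therefore equals $\sens_f/\varepsilon$ for an upper bound on $\sens_f$. By \Cref{thm:LM-DP}, $\varepsilon$-DP follows; using a larger scale than the true sensitivity only strengthens privacy.

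One point to verify here is that the branch condition itself does not leak information. It depends only on $\constraints$, which is public, and not on $D$. The value $n$ is also public under \Cref{def:neighboring-databases}, because neighboring databases have the same size.

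\textbf{Error.} The output is $f(D)+\nu$ with $\nu\sim\mathrm{Lap}(b)$. Hence $|\widetilde{\contribution}-\contribution|=|\nu|$, and for a Laplace variable $\mathbb{E}|\nu|=\int_{-\infty}^{\infty}\frac{|x|}{2b}e^{-|x|/b}\,dx=b$. Substituting $b=\frac{3k|\constraints|}{n\varepsilon}$ or $b=\frac{7k|\constraints|}{n\varepsilon}$ gives the two stated equalities.

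\textbf{Main obstacle.} There is no real technical obstacle, because all the difficulty lies in the sensitivity bound that is already assumed. The one subtlety is the standing assumption of \Cref{prop:contribution-properties} that every value of $\admissibleset$ appears more than once. Item~4 is proved under that assumption, so the DP guarantee inherits it. I will state this explicitly; otherwise the guarantee should be read as holding over the class of databases satisfying that assumption.
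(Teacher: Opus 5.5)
Your reduction is correct and follows the argument the paper intends; the paper gives no separate proof. \Cref{alg:topk-contribution} is the Laplace mechanism calibrated to Item~4 of \Cref{prop:contribution-properties}, the error identity is just $\mathbb{E}|\mathrm{Lap}(b)|=b$, and your remarks on mixed criteria sets, on the branch condition and $n$ being public, and on the multiplicity assumption are more careful than the paper.

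Be aware, however, that the sensitivity bound you take as given fails for $\contribution$ as defined (top-$k$ over the tuples of $D$). Its appendix proof sets $\margdiff(\fairnessdef,D,t)=0$ for tuples absent from $D$, yet still applies the per-cell $\frac{3}{n}$ bound of \Cref{lem:marginal-difference-sensitivity} to them. For example, take $k=1$, one tuple $(p_1,y_1)$, and $h$ copies each of $(p_1,y_2),(p_1,y_3),(p_2,y_1),(p_3,y_1)$, so $n=4h+1$. Replacing $(p_1,y_1)$ by $(p_1,y_2)$ lowers $\contribution$ from $\frac{4h^2}{n^2}$ to $\frac{2h(h+1)}{n^2}$, a change of about $\frac{1}{8}$ against the claimed $\frac{3}{n}$. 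So Item~1, in your write-up and the paper's alike, holds only once the noise is recalibrated to the true sensitivity.
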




\section{Experiments}\label{sec:experiments}
We have experimentally studied our measures and the algorithms computing them by answering the following questions: (1) What is the usefulness of the measures in different scenarios and what is their ability to detect different levels of unfairness? (2) How do our algorithms scale? (3) What is the effect of the privacy budget on measure accuracy? (4) How do the measures perform in terms of faithfulness and under parameter variation? 


\paratitle{Summary of our findings}\label{par:summary-of-findings}
Our experiments confirm that the proposed unfairness measures are practical for real-world fairness assessment, serving both as pre-query trust indicators and as alternatives to model-based estimation. The measures detect unfairness correctly and proportionately (\Cref{fig:experiment-7}), all three increase monotonically with the Demographic Parity gap, with \repairsat increasing near-linearly and \mutualproxytvd, \contribution\ increasing approximately logarithmically. Their relative ordering agrees with the observed query output disparities (\Cref{sec:case-studies}), and \mutualproxytvd\ closely tracks \mutual\ (\Cref{fig:mutual-comparison-5}). All algorithms scale steadily with tuple count and fairness criteria (\Cref{fig:experiments-1-2}), where \Cref{alg:topk-contribution} is the fastest, followed by \Cref{alg:tvd-proxy}, both completing in seconds on large datasets, while \Cref{alg:repairsat} is $10^2-10^3$ times slower due to its weighted MaxSAT solver, yet remains valuable for capturing the most nuanced unfairness notion.

Additional experiments  
\ifpaper
detailed in~\cite{fullversion} 
\else
detailed in~\Cref{sec:drill-down} 
\fi
show that (i) $\mutualproxytvd$ is a good proxy for $\mutual$, (ii) $\contribution$ increases monotonically with $k$ up to some value, indicating that a dominant subset of tuples drives unfairness, and (iii) our heuristic for \repairsat\ improves runtime by over 50\% while preserving relative measure values.

\subsection{Experimental Setup}\label{sec:setup}
All algorithms and experiments were implemented in Python using standard packages such as \texttt{networkx}, \texttt{numpy}, \texttt{pandas}, 
and \texttt{z3-solver}. 
The experiments ran on a machine with \texttt{Intel Core i5-12400F}. 
For each data point in every experiment, we performed $10$ independent repetitions, unless stated otherwise, and averaged the result. If
one of these repetitions took longer than 24 hours to complete, we
did not include this data point in the graph. 
For experiments with fixed data size, we randomly sampled up to $100$K tuples from each dataset, or used the entire dataset if it is smaller, ensuring that the assumption of multiplicity in the $\admissibleset$ values holds (\Cref{prop:contribution-properties}).

\paratitle{Datasets} We used the following five datasets in our experiments.
\begin{itemize}[wide, labelwidth=!, labelindent=0pt]
    \item {\bf Adult}: A dataset downloaded from~\footnote{\url{https://archive.ics.uci.edu/dataset/2/adult}}, containing 15 attributes and 48,843 tuples. It contains personal data about individuals, including their sex, race, marital status, whether their income is over 50K, etc. Some attributes were renamed for clarity (e.g., the binary attribute \texttt{income} was renamed to \texttt{income>50K}).
    \item {\bf IPUMS-CPS}: A population survey dataset published by the U.S. Census Bureau~\cite{flood2021ipums} and downloaded from ~\footnote{\url{https://cps.ipums.org/cps}}, containing 8 attributes and 1,048,576 tuples, with data from 2011 till 2019. The dataset, like the \texttt{Adult} dataset, contains personal data about individuals.
    \item {\bf Stackoverflow}~\footnote{\url{https://kaggle.com/datasets/berkayalan/stack-overflow-annual-developer-survey-2024}}: A dataset containing 114 attributes and 65,438 tuples. The dataset contains the results of the Stackoverflow developers survey for 2024, and includes languages they use, where they learned to code, information on the companies they work at, etc.
    \item {\bf Compas}~\footnote{\url{https://github.com/propublica/compas-analysis/blob/master/compas-scores.csv}}: A dataset containing 47 attributes and 11,757 tuples. It contains the criminal history, jail and prison time, demographics and Compas risk scores for defendants from Broward County from 2013 and 2014. The data is usually used to predict whether a person will re-offend within the next 2 years.
    \item {\bf Healthcare}~\footnote{\url{https://github.com/stefan-grafberger/mlinspect/tree/master/example_pipelines}}: A dataset containing 11 attributes and 1000 tuples. It contains personal data of patients, such as the income, number of complications, county, race, age. 
\end{itemize}
All datasets were preprocessed so that negative numerical values were replaced by zeros since their domains are nonnegative, categorical values were encoded, with missing or undefined values being treated as a separate category. 
For the \texttt{IPUMS-CPS} dataset, the attribute \texttt{AGE} was discretized as 10 years per range, e.g., $[0,10]$ is considered a single value. We only included tuples with attribute \texttt{INCTOT} (total income) smaller than $200$K to avoid outliers, as in~\cite{DBLP:journals/pvldb/TaoGMR22}.

\begin{table}
\centering
\caption{Fairness criteria per dataset.}
\label{tab:fairness-criteria}
\begin{footnotesize}
\begin{tabular}{|c|l|}
\hline
\textbf{Dataset} & \textbf{Fairness criteria} \\ \hline
Adult &
\begin{tabular}[l]{@{}l@{}}
(1) \texttt{education-num} $\independent$ \texttt{income>50K} \\
(2) \texttt{sex} $\independent$ \texttt{income>50K} \\
(3) \texttt{race} $\independent$ \texttt{income>50K} \\
(4) \texttt{sex} $\independent$ \texttt{income>50K} $\mid$ \texttt{hours-per-week}
\end{tabular} \\ \hline
IPUMS-CPS & \begin{tabular}[l]{@{}l@{}}
(1) \texttt{HEALTH} $\independent$ \texttt{INCTOT} $\mid$ \texttt{EDUC} \\
(2) \texttt{HEALTH} $\independent$ \texttt{OCC} $\mid$ \texttt{EDUC} \\
(3) \texttt{HEALTH} $\independent$ \texttt{MARST} $\mid$ \texttt{AGE} \\
(4) \texttt{HEALTH} $\independent$ \texttt{INCTOT} $\mid$ \texttt{AGE}
\end{tabular} \\ \hline
Stackoverflow & \begin{tabular}[l]{@{}l@{}}
(1) \texttt{Country} $\independent$ \texttt{RemoteWork} $\mid$ \texttt{Employment} \\
(2) \texttt{Age} $\independent$ \texttt{PurchaseInfluence} $\mid$ \texttt{OrgSize} \\
(3) \texttt{Country} $\independent$ \texttt{MainBranch} $\mid$ \texttt{YearsCodePro} \\
(4) \texttt{Age} $\independent$ \texttt{MainBranch} $\mid$ \texttt{EdLevel}
\end{tabular} \\ \hline
Compas & \begin{tabular}[l]{@{}l@{}}
(1) \texttt{race} $\independent$ \texttt{is\_recid} $\mid$ \texttt{age\_cat} \\
(2) \texttt{sex} $\independent$ \texttt{is\_recid} $\mid$ \texttt{priors\_count} \\
(3) \texttt{race} $\independent$ \texttt{decile\_score} $\mid$ \texttt{c\_charge\_degree} \\
(4) \texttt{sex} $\independent$ \texttt{v\_decile\_score} $\mid$ \texttt{age\_cat}
\end{tabular} \\ \hline
Healthcare & \begin{tabular}[l]{@{}l@{}}
(1) \texttt{race} $\independent$ \texttt{complications} $\mid$ \texttt{age\_group} \\
(2) \texttt{smoker} $\independent$ \texttt{complications} $\mid$ \texttt{age\_group} \\
(3) \texttt{race} $\independent$ \texttt{income} $\mid$ \texttt{county} \\
(4) \texttt{smoker} $\independent$ \texttt{income} $\mid$ \texttt{num\_children}
\end{tabular} \\ \hline
\end{tabular}
\end{footnotesize}
\end{table}

\cut{
\paragraph{Preprocessing the data}
All datasets were preprocessed as follows: tuples with missing values in the relevant columns were dropped, negative numerical values were replaced by zeros since their domain is nonnegative, categorical values were encoded. Specifically for the \texttt{IPUMS-CPS} dataset, the  attribute \texttt{AGE} was discretized as 10 years per range, e.g., [0,10] is considered a single value. Also, we only include tuples with attribute \texttt{INCTOT} (the total income) smaller than 200k as a domain bound.
For each data point in every experiment, we performed $10$ independent repetitions and averaged the result. If one of these repetitions took longer than one hour to complete \ag{two hours?}, we skipped this data point and did not include it in the graph. 
During the experiments that do not vary the number of tuples, we randomly sampled 
up to 100,000 tuples from each dataset, or used the dataset in full if it is smaller.
For $\contribution$, during the experiments that do not change $k$, we used $k=500$. 
}

\paratitle{Fairness criteria}
Following our formalization in \Cref{sec:fairness} of fairness criteria as conditional independence statements, we list the criteria per dataset in~\Cref{tab:fairness-criteria}, numbered $1$--$4$ in the figures. The criteria reflect natural desiderata for protected attributes (e.g., sex) to be independent from the outcome attribute (e.g., income).

\paratitle{Algorithm variations and optimizations}\label{par:variations}
Since there are no baselines for unfairness measures, we compared \Cref{alg:tvd-proxy,alg:repairsat,alg:topk-contribution} with their several variations and used a heuristic for \Cref{alg:repairsat}:
\begin{itemize}[wide, labelwidth=!, labelindent=0pt]
    \item {\bf Non-private versions:} In \Cref{fig:experiment-3}, we compared \Cref{alg:tvd-proxy,alg:repairsat,alg:topk-contribution} with varying privacy budgets from $0.1$ to $10$ with their non-private counterparts, i.e., setting $\varepsilon = \infty$. For the rest of the experiments, we use a privacy budget of $\varepsilon = 1$, unless otherwise specified.
    \item {\bf $k$ in~\Cref{alg:topk-contribution}:} We fixed $k=500$ in all experiments.
    \item {\bf Heuristic for~\Cref{alg:repairsat}:} Due to scalability issues of the SAT solver in line~\ref{line:2_3} of~\Cref{alg:repairsat}, we employed a heuristic that partitions the database to consecutive chunks of $100$ tuples, then runs \Cref{alg:repairsat} separately for each chunk and each criterion, and sums up the results to yield an estimate of the $\repairsat$ value. Assuming that the tuple order is preserved in neighboring databases, as is standard for bounded DP~\cite{DBLP:journals/ftdb/NearH21}, the sensitivity of this heuristic remains identical to $\sens_{\repairsat}$, as proven in
    \ifpaper
    ~\cite{fullversion}.
    \else
    ~\Cref{prop:repairsatchunked-sensitivity}.
    \fi
\end{itemize}

\cut{
\paragraph{Sampling strategy and evaluation scheme}
During the experiments that do not vary the number of tuples, we randomly sampled up to 100,000 tuples from each dataset, or used the dataset in full if it is smaller. During the experiments that do not vary the privacy budget, unless stated otherwise in the description of the specific experiment, we assumed infinite privacy budget, $\varepsilon=\infty$. For $\contribution$, during the experiments that do not change $k$, we used $k=500$. For each data point in every experiment, we performed $10$ independent repetitions and averaged the result. If one of these repetitions took longer than one hour to complete, we skipped this data point and did not include it in the graph.
}

\paratitle{Relative $L1$ error}
In some of the experiments we measured relative $L1$ error of the algorithms. For a given $\varepsilon$, we define the relative error as $\frac{|X-Y|}{\max(Y, e^{-100})}$, where $X$ is the output of the algorithm with privacy budget $\varepsilon$, and $Y$ is the output of the algorithm without privacy. The use of $e^{-100}$ in the denominator is meant to avoid division by $0$.

\cut{
\paratitle{Examined downstream model}
As part of our study, we measured the connection between the unfairness measure values and the unfairness level of a downstream ML model, with the same fairness criterion. As our model we chose a fully connected multilayer perceptron that consists of a single hidden layer of 32 neurons, each followed by a ReLU activation to introduce non-linearity, and a final linear layer with one output unit. We used DP-SGD~\cite{DBLP:conf/ccs/AbadiCGMMT016} to train this model on the private data.
}

\subsection{Use Cases for the Measures}\label{sec:case-studies}

We next show three case studies for our unfairness measures in the context of (1) data exploration (2) ML models, and (3) increasing data unfairness. 

\definecolor{heatlight}{RGB}{255,235,235}
\definecolor{heatmid}{RGB}{255,200,200}
\definecolor{heatdark}{RGB}{255,160,160}
\begin{table}[t]
\centering
\caption{Average unfairness-measure values for the fairness criteria on the \texttt{Adult} and \texttt{Compas} datasets. Darker red indicate higher measure value, i.e., larger unfairness.}
\label{tab:query-measures}
\small
\setlength{\tabcolsep}{2pt}
\begin{tabular}{llccc}
\toprule
\textbf{Dataset} & \textbf{Criterion corresponding to (Query)} & \textbf{\mutualproxytvd} & \textbf{\repairsat} & \textbf{\contribution} \\
\midrule
\multirow{5}{*}{\texttt{Compas}} 
& $\texttt{race} \independent \texttt{decile\_sc} \mid \texttt{age\_cat}$ ($q_1$)
& \cellcolor{heatmid} 0.035
& \cellcolor{heatmid} 2517.2
& \cellcolor{heatmid} 0.263 \\
\cmidrule{2-5}
& $\texttt{age\_cat} \independent \texttt{decile\_sc} \mid \texttt{race}$ ($q_2$)
& \cellcolor{heatdark} 0.049
& \cellcolor{heatdark} 2704.8
& \cellcolor{heatdark} 0.310 \\
\cmidrule{2-5}
& $\texttt{race} \independent \texttt{decile\_sc} \mid \texttt{c\_charge\_d.}$ ($q_3$)
& \cellcolor{heatmid} 0.039
& \cellcolor{heatmid} 2662.8
& \cellcolor{heatmid} 0.282 \\
\cmidrule{2-5}
& $\texttt{c\_charge\_d.} \independent \texttt{decile\_sc} \mid \texttt{race}$ ($q_4$)
& \cellcolor{heatlight} 0.015
& \cellcolor{heatlight} 1897.7
& \cellcolor{heatlight} 0.169 \\
\midrule
\multirow{2}{*}{\texttt{Adult}} 
& $\texttt{sex} \independent \texttt{income>50K}$ ($q_5$)
& \cellcolor{heatdark} 0.015
& \cellcolor{heatdark} 4573.8
& \cellcolor{heatdark} 0.175 \\
\cmidrule{2-5}
& $\texttt{race} \independent \texttt{income>50K}$ ($q_6$)
& \cellcolor{heatlight} 0.002
& \cellcolor{heatlight} 1227.8
& \cellcolor{heatlight} 0.053 \\
\bottomrule
\end{tabular}
\end{table}

\paratitle{Assessing dataset biases in data exploration}

Suppose we are interested in understanding how estimated risk to reoffend varies across age, race, and charge-degree groups using the \texttt{Compas} dataset. 
Queries $q_1$--$q_4$ compute average or median \texttt{decile\_score} across such groups, where \texttt{decile\_score} denotes the estimated risk to reoffend. 
For example, $q_1$ computes, for each age group, the average decile score aggregated over race groups:
\begin{footnotesize}
\begin{lstlisting}
SELECT age_cat, AVG(avg_d) AS avg_d
FROM (SELECT race, age_cat, AVG(decile_score) AS avg_d
      FROM Compas GROUP BY race, age_cat) t
GROUP BY age_cat;
\end{lstlisting}
\end{footnotesize}

Each query corresponds to a criterion of the form $S \independent Y \mid A$, where $Y=\texttt{decile\_score}$, $S$ is the attribute over which disparities are measured, and $A$ is the conditioning attribute. 
The corresponding measures in \Cref{tab:query-measures} indicate disparities in the data, and the non-private results reflect them. 
For example, in $q_1$, the average \texttt{decile\_score} for individuals aged 25--45 is 3.788 without privacy, compared to 2.662 for individuals older than 45.

With $\varepsilon=1$, however, the added noise severely distorts the query results, making them misleading. 
For example, in $q_4$, for $\texttt{c\_charge\_degree}=O$, the median \texttt{decile\_score} is -6.324 with $\varepsilon=1$, compared to values close to 2 without privacy. 
This trend, observed by prior work~\cite{DBLP:journals/pvldb/TaoGMR22}, shows that differential-privacy noise can distort query results and even flip orderings, especially for small groups. 
Thus, the unfairness measures computed directly on the data with budget $\epsilon=1$, shown in \Cref{tab:query-measures}, are crucial for interpreting such noisy query results.

Now suppose we are interested in understanding how income varies across different sexes and races in the \texttt{Adult} dataset. 
Queries $q_5$ and $q_6$ compute the proportion of individuals with income above \$50K across groups defined by \texttt{sex} and \texttt{race}, respectively (see
\ifpaper
~\cite{fullversion}
\else
~\Cref{tab:eda-queries}
\fi
for their SQL descriptions). 
Before running the queries, we computed the unfairness measures for the corresponding criteria 
$\texttt{sex} \independent \texttt{income>50K}$ and 
$\texttt{race} \independent \texttt{income>50K}$. 
The last two rows of \Cref{tab:query-measures} indicate stronger income bias with respect to \texttt{sex} than \texttt{race}. 
This remains visible even with $\varepsilon=1$: in $q_5$, the proportion of males with income above \$50K is 0.304, compared to 0.110 for females, whereas in $q_6$ the largest racial disparity is smaller, between 0.118 for Amer-Indian-Eskimo and 0.261 for Asian-Pac-Islander. 
Thus, because the \texttt{Adult} dataset is much larger, the added privacy noise weakens but does not substantially obscure these trends.

Overall, the measures serve as an early warning mechanism and identify queries whose results may reflect dataset bias, as well as help interpret query results distorted by privacy noise.

\begin{table}[t]
\centering
\caption{Fairness values (Demographic Parity gaps) of privately-trained variations of Random Forest (RF) and Neural Network (NN) on the \texttt{Adult} dataset.}
\label{tab:models-unfairness}
\setlength{\tabcolsep}{4pt}
\begin{tabular}{lcccc}
\toprule
\textbf{Model} 
& \multicolumn{2}{c}{$\texttt{sex} \independent \texttt{income>50K}$} 
& \multicolumn{2}{c}{$\texttt{race} \independent \texttt{sex}$} \\
& \textbf{Accuracy} & \textbf{DP gap} & \textbf{Accuracy} & \textbf{DP gap} \\
\midrule
RF, 10 trees  & 78.91\% & 0.058 & 78.22\% & 0.213 \\
RF, 50 trees  & 78.24\% & 0.037 & 78.50\% & 0.235 \\
RF, 100 trees & 78.41\% & 0.044 & 77.54\% & 0.234 \\
RF, 200 trees & 77.91\% & 0.032 & 76.06\% & 0.188 \\
NN, 1 hidden layer  & 84.16\% & 0.197 & 80.37\% & 0.216 \\
NN, 2 hidden layers & 83.82\% & 0.201 & 80.03\% & 0.236 \\
NN, 3 hidden layers & 82.40\% & 0.225 & 79.33\% & 0.231 \\
\bottomrule
\end{tabular}
\end{table}

\paratitle{A complementary view to Machine Learning models}
To examine how our measures complement ML models in quantifying unfairness, we trained a variety of models with two different classes and hyperparameters to predict \texttt{income>50K} and \texttt{sex} based on the other attributes in the \texttt{Adult} dataset. 
We measured the Demographic Parity gap of the predictions of \texttt{income>50K} with respect to \texttt{sex}, and the gap in the predictions of \texttt{sex} with respect to \texttt{race}. The latter serves as a baseline (which did not appear in~\Cref{ex:intro}), as in real-world settings and intuitively these attributes are expected to be independent, and thus any observed dependence indicates bias.
The models, accuracy rates and gaps are detailed in \Cref{tab:models-unfairness}. We observe that the models consistently estimate a stronger dependence between $\texttt{sex}$ and $\texttt{race}$ than between $\texttt{income>50K}$ and $\texttt{sex}$. 
Moreover, despite all models achieving similar accuracy (76.06\%-84.16\%), the Demographic Parity gaps vary substantially both across model classes (0.032-0.058 for Random Forest, 0.197-0.255 for Neural Network) and within a single class (for Random Forest, 0.032-0.058 for \texttt{sex} and \texttt{income>50K}, and 0.188-0.235 for \texttt{race} and \texttt{sex}), making it difficult to reliably compare the two disparities.
Conversely, evaluating the unfairness measures directly on the \texttt{Adult} dataset for the corresponding criteria, $\texttt{sex} \independent \texttt{income>50K}$ and $\texttt{race} \independent \texttt{sex}$ generates the intuitive results consistent with prior work~\cite{DBLP:conf/kdd/ThanhRT11,DBLP:conf/eurosp/TramerAGHHHJL17} that indicate a strong dependence between sex and income in the \texttt{Adult} dataset: $0.014$ and $0.002$ for \mutualproxytvd, $4424$ and $1432$ for $\repairsat$, and $0.164$ and $0.076$ for \contribution, respectively. 

Finally, the models are limited to assessing unfairness only for the attribute they were trained for. For example, if we compute the Demographic Parity gap for \texttt{race} given \texttt{sex} using a model trained to predict \texttt{income>50K}, we obtain a value of 0.044 for a Random Forest with 200 trees. In contrast, when the same model is trained to predict \texttt{sex} (i.e., aligned with the criterion), the corresponding Demographic Parity gap is 0.188. In contrast, the unfairness measures can run over arbitrary fairness criteria and thus form a complementary alternative to model-based fairness evaluation.

\begin{figure}[h]
    \centering
    \begin{subfigure}[t]{0.25\textwidth}
    \vspace{2pt}
        \centering
        \includegraphics[width=\linewidth]{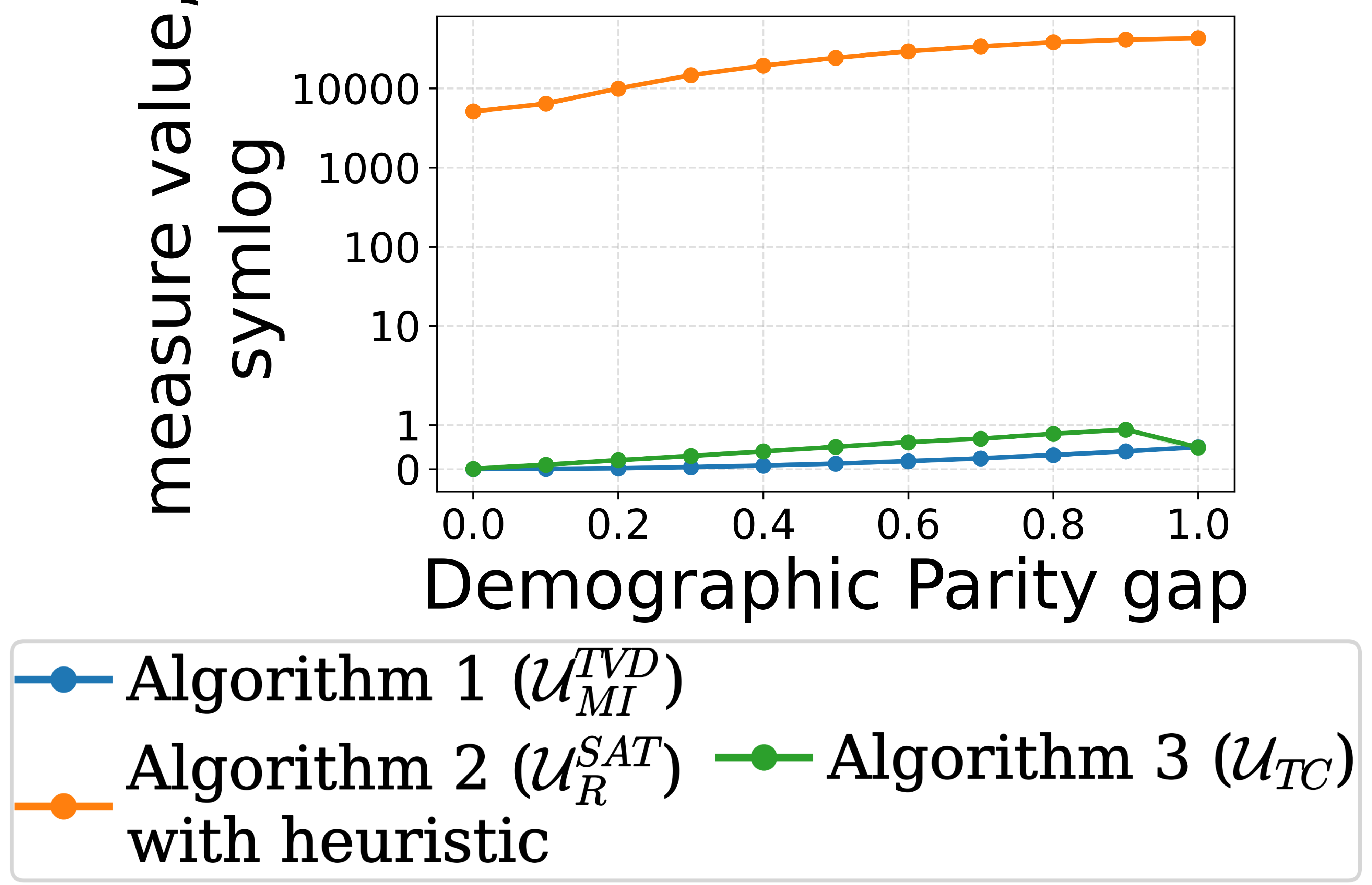}
        \caption{Measure values.}
        \label{fig:experiment-7}
    \end{subfigure}
    \hfill
    \begin{subfigure}[t]{0.22\textwidth}
    \vspace{0pt}
        \centering
        \includegraphics[width=\linewidth]{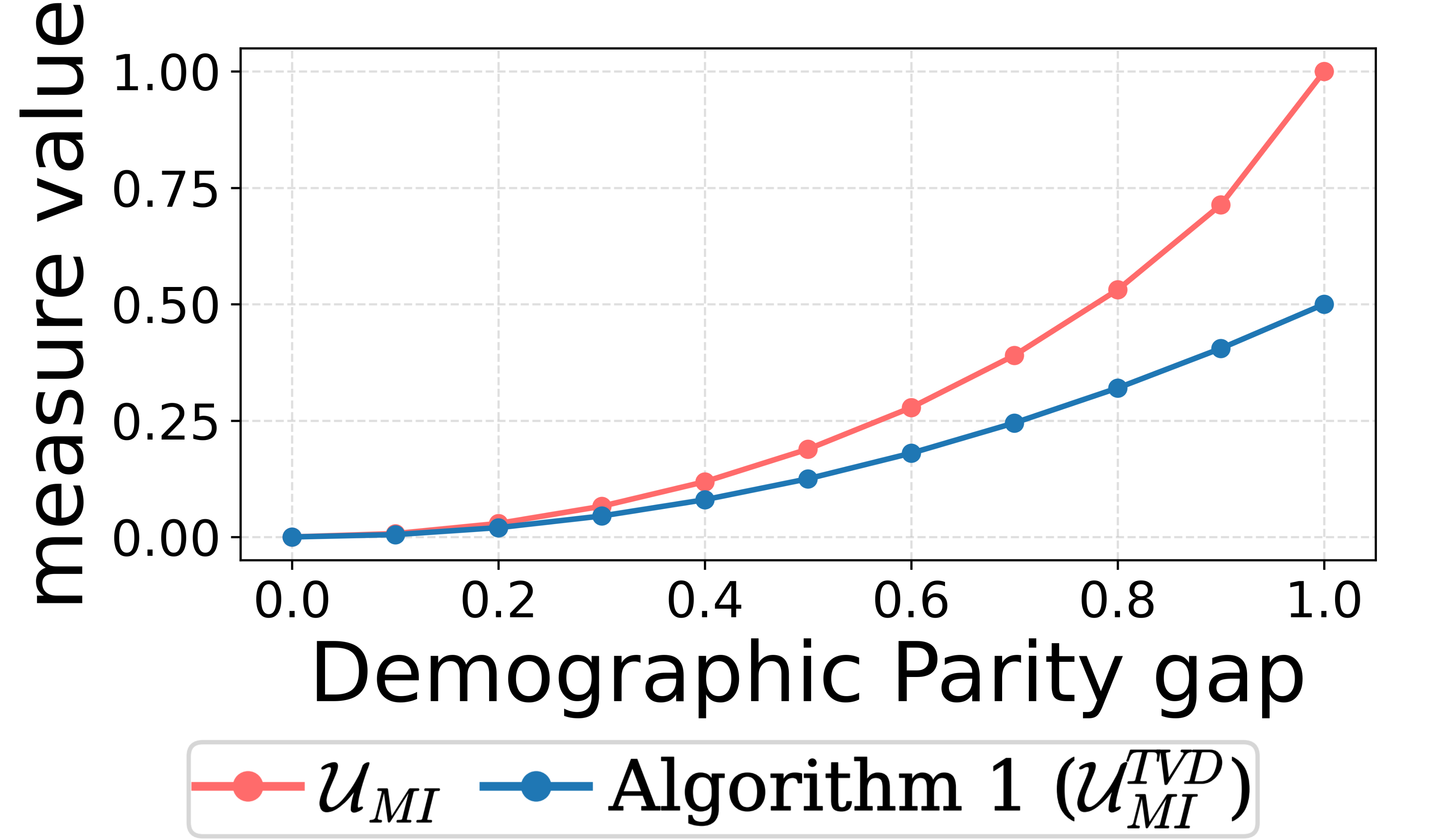}
        \vspace{8pt}
        \caption{$\mutualproxytvd$ vs. $\mutual$.}
        \label{fig:mutual-comparison-5}
    \end{subfigure}
    \caption{Behavior of unfairness measures as the Demographic Parity in the dataset gradually increases.}
    \label{fig:experiment-7-10}
\end{figure}

\paratitle{Detecting varying unfairness}
We studied how our unfairness measures react to gradually increasing unfairness in a dataset. We constructed a binary synthetic dataset with attributes \texttt{sex} and \texttt{income>50K} containing 100,000 tuples, with an equal number of male and female records. Initially, the dataset was completely fair (i.e., 50\% of each sex group had \texttt{income>50K }$ = 1$). We then progressively flipped an increasing fraction of male records to have income above 50K and female records to have income below 50K, thereby monotonically increasing the demographic parity gap
$|\prob(\texttt{income}>50K \mid \texttt{sex}=M) - \prob(\texttt{income}>50K \mid \texttt{sex}=F)|$.
In~\Cref{fig:experiment-7}, we plot the values of the measures as a function of the demographic parity gap. $\repairsat$ (as computed by~\Cref{alg:repairsat} with our heuristic) exhibits almost linear growth as the demographic parity gap increases, while $\mutualproxytvd$ (as computed by~\Cref{alg:tvd-proxy}) and $\contribution$ (as computed by~\Cref{alg:topk-contribution}) exhibit growth that appears logarithmic. $\contribution$ also decreases slightly toward the right-hand side of the graph, since when the dataset becomes fully polarized, $\margdiff$ for all tuples become similar (i.e., there are no outliers), and so the top-$k$ sum diminishes. Additionally, \Cref{fig:mutual-comparison-5} shows that on the same synthetic dataset, $\mutualproxytvd$ closely follows the trend observed for $\mutual$. 
Overall, all measures correctly detect small changes in unfairness. 

\subsection{Scalability}\label{sec:scalability}
\Cref{fig:experiments-1-2} depicts the effect of data size and the number of fairness criteria on the runtime of the algorithms.
In \Cref{fig:experiment-1}, we plot the runtime of the algorithms for each dataset as a function of the number of tuples.
Across all datasets, the computation of \Cref{alg:repairsat} is consistently the slowest. 
The runtime increase of \Cref{alg:repairsat} stems from its dependence on the weighted MaxSAT solver used to find the minimal repair. However, this step also enables it to capture a more nuanced notion of unfairness than the other measures.

\Cref{alg:topk-contribution} is frequently the fastest, with an exception being the \texttt{IPUMS-CPS} dataset, on which the runtime is higher and mostly insensitive to the number of tuples. This is attributed to the fact that, as formulated in 
\ifpaper
its complexity analysis~\cite{fullversion}
\else
\Cref{contribution-complexity}
\fi, \Cref{alg:topk-contribution} uses an empirical contingency table over the attribute sets, whose cardinalities are higher for \texttt{IPUMS-CPS} than for the other datasets. 
Specifically, it is sometimes faster than \Cref{alg:tvd-proxy} since the latter has to perform additional manipulations to compute the conditional TVD. 
Finally, \Cref{alg:tvd-proxy} is on average faster than \Cref{alg:repairsat} but slower than \Cref{alg:topk-contribution}.
The runtimes of all the algorithms increase steadily with the number of tuples. Still, runtimes for smaller numbers of tuples are sometimes higher and more volatile because they are more sensitive to randomness in the sampled data.

In \Cref{fig:experiment-2}, we show the runtime of the algorithms for each dataset as a function of the number of fairness criteria. For each dataset, we begin with only the first criterion in \Cref{tab:fairness-criteria} and then progressively add more, such that the final measurement includes all criteria for that dataset. We can see that, for all the algorithms, increasing the number of criteria inflates the runtime more gradually than increasing the number of tuples.


Still, the runtime for the \texttt{Healthcare} dataset varies significantly between two and three criteria when computing $\mutualproxytvd$ and $\contribution$ due to the fact that the third criterion groups tuples by \texttt{county}, which has much higher cardinality than \texttt{age\_group}. This increases the size of the empirical distributions computed by \Cref{alg:tvd-proxy,alg:topk-contribution}, leading to higher cost. In contrast, \Cref{alg:repairsat} is less affected, since its runtime is dominated by the MaxSAT solver.

Overall, \Cref{alg:topk-contribution} is the least affected by increasing the number of tuples and the number of criteria for moderately dense datasets, and it is the fastest algorithm in almost all settings. However, its runtime increases noticeably for sparse datasets such as \texttt{IPUMS-CPS}. As for the other algorithms, \Cref{alg:repairsat} is generally the slowest and most affected, while \Cref{alg:tvd-proxy} lies in between.

\begin{figure*}
  \centering
  \begin{subfigure}[b]{0.8\textwidth}
    \centering
    \includegraphics[width=\linewidth]{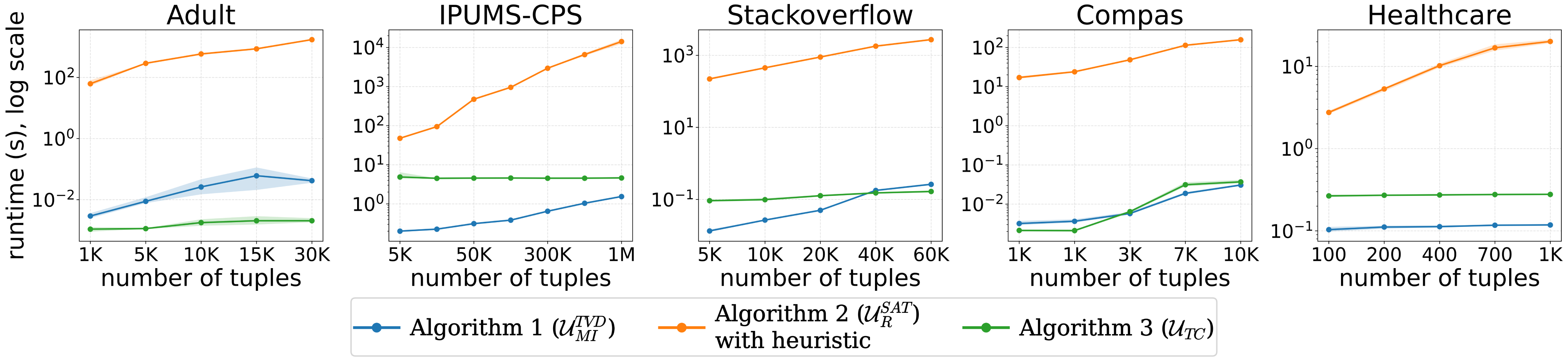}
    \caption{Runtime as function of number of tuples.}
    \label{fig:experiment-1}
  \end{subfigure}
  \hfill
  \begin{subfigure}[b]{0.8\textwidth}
    \centering
    \includegraphics[width=\linewidth]{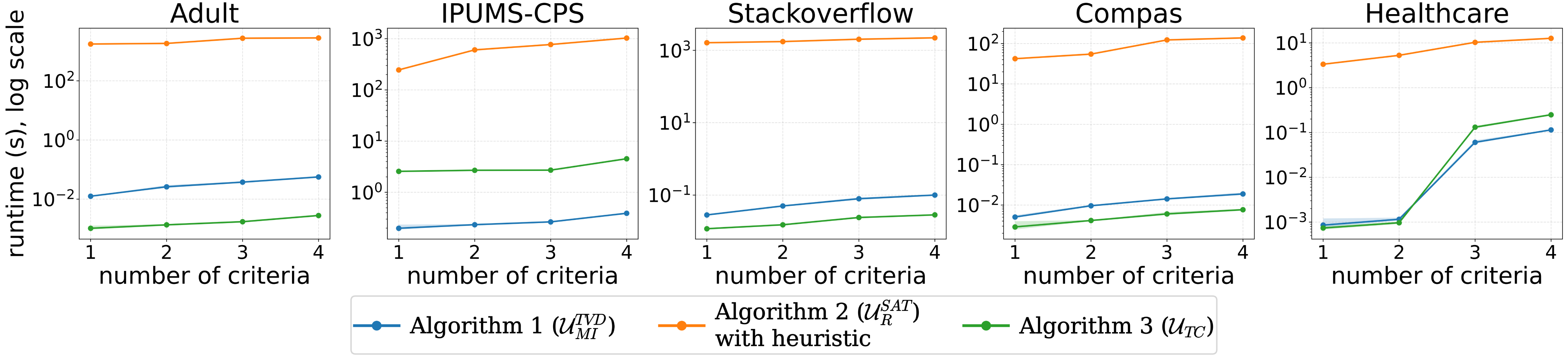}
    \caption{Runtime as function of number of fairness criteria.}
    \label{fig:experiment-2}
  \end{subfigure}
  \caption{Runtime analysis of the algorithms for the datasets and criteria in~\Cref{tab:fairness-criteria}.}
  \label{fig:experiments-1-2}
  \vspace{-4mm}
\end{figure*}

\begin{figure*}
    \centering
    \includegraphics[width=0.8\textwidth]{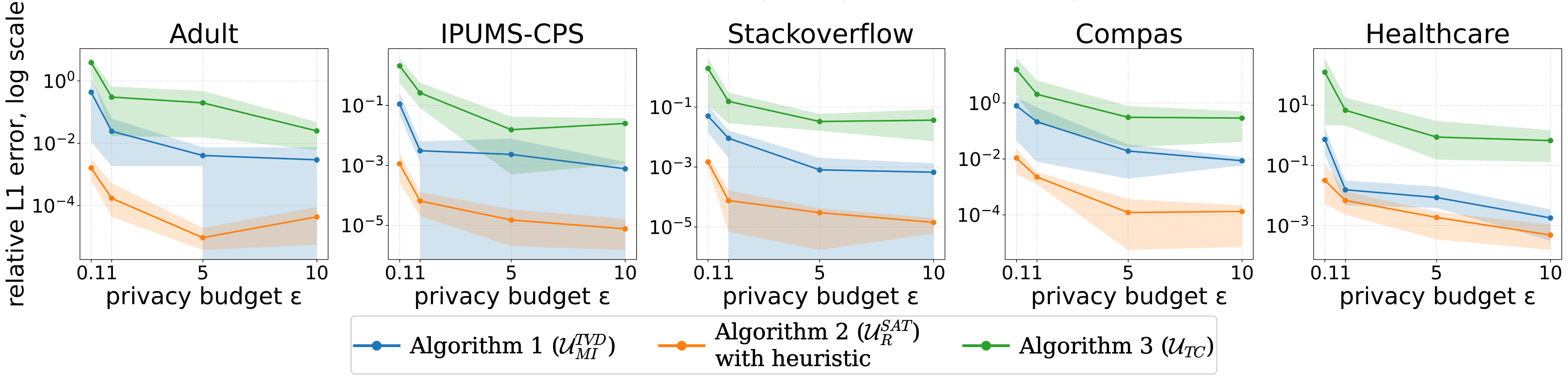}
    \caption{Relative $L1$ error as function of privacy budget for the datasets and criteria from~\Cref{tab:fairness-criteria}.}
    \label{fig:experiment-3}
    
    \vspace{-2mm}
\end{figure*}

\subsection{Accuracy and Privacy Tradeoff}
\Cref{fig:experiment-3} shows the effect of increasing the privacy budget on the relative $L1$ error of the algorithms.
\Cref{alg:topk-contribution} is consistently the least accurate of the three algorithms, due to its sensitivity being directly proportional to $k$ and inversely proportional to the minimum group size per admissible value (\Cref{prop:contribution-properties}). Consequently, for $k=500$, the noise added for all datasets is larger than that added for the other algorithms; for datasets with small minimum group sizes, such as \texttt{IPUMS-CPS} and \texttt{Healthcare}, this noise is particularly high. The parameter $k$ can be calibrated for each dataset to reduce the noise, as shown in a later experiment, thereby reducing the relative $L1$ error.
\Cref{alg:repairsat} is the most accurate of the three algorithms. This is because its values are much larger in magnitude, so the added DP noise changes the true value only by a small fraction, leading to a smaller relative error. Additionally, the relative error of \Cref{alg:repairsat} is the most stable across all datasets, since its sensitivity does not depend on the number of tuples (\Cref{prop:repairsat-properties}), causing the error to be primarily determined by $\varepsilon$.
\Cref{alg:tvd-proxy} is less noisy than \Cref{alg:topk-contribution}, which is consistent with its low sensitivity (\Cref{prop:tvd-properties}), but more noisy than \Cref{alg:repairsat}. Moreover, the relative error of \Cref{alg:tvd-proxy} exhibits large variance between runs, since its true values are small and therefore easily distorted by the added noise. However, for small values of $\varepsilon$, the noise dominates the output uniformly across runs, leading to reduced variance in the relative error.
Overall, the relative errors of all algorithms steadily decrease as the privacy budget increases, as expected, with the sharpest declining segment occurring up to $\varepsilon=1$.

\section{Related Work}
\label{sec:related}

Research on fairness in data management intersects several areas, including data valuation, private data pricing, and data inconsistency assessment. 
Although these directions are important foundations, {\em none develops dedicated measures for data-centric fairness tailored for use in the DP setting, which is the focus of this paper.}

A central line of work studies Data Valuation Assessment (DVA) in machine learning, where the goal is to estimate the contribution of individual records or subsets of data to predictive performance~\cite{DBLP:conf/iclr/KiCC23,DBLP:journals/tmlr/AlbalakEXLL0MHP24,DBLP:conf/iceis/BrennanAPNH19}. These methods quantify the marginal value of data with respect to accuracy or loss, and emphasize model driven utility rather than fairness or structural properties of the data itself.
Private data valuation methods provide mechanisms for pricing or compensating for the use of sensitive data without addressing fairness concerns~\cite{DBLP:conf/sigecom/GhoshR11,DBLP:journals/cacm/LiLMS17,DBLP:conf/uai/ZhangBL20}. These works typically quantify sensitivity, privacy loss, or economic value but do not analyze disparate impact or structural imbalances in the data.


Another major body of literature develops formal definitions of fairness
~\cite{proxyfair,InterFair,Counterfactual,VermaR18,Chouldechova17,DBLP:conf/kdd/Corbett-DaviesP17,DworkHPRZ12,Berk21,GalhotraBM17,KleinbergMR17,NabiS18} and introduces conceptual frameworks to analyze, understand, and mitigate bias in prediction systems~\cite{SelbstBFVV19,SureshG21,DBLP:conf/kdd/FeldmanFMSV15,DBLP:conf/icml/BrunetAAZ19}. These works largely focus on model behavior or causal interpretations rather than characterizing fairness properties inherent to the dataset itself. As we show in \Cref{sec:case-studies}, focusing on dataset unfairness rather than model behavior may be useful in some settings.



Inconsistency measures for relational data and integrity constraints~\cite{LivshitsKTIKR21,LivshitsK21,DBLP:journals/corr/abs-1904-03403,DBLP:conf/ecsqaru/MartinezPSSP07} offer tools for quantifying violations of integrity constraints. These measures are inspired by long standing work in Knowledge Representation and Logic~\cite{DBLP:conf/ijcai/KoniecznyLM03,DBLP:journals/jolli/Knight03,DBLP:journals/jiis/GrantH06,DBLP:journals/ai/HunterK10,DBLP:journals/kais/MuLJ11,DBLP:journals/ijar/GrantH17,DBLP:journals/ki/Thimm17}, which develops principles for reasoning under inconsistency. While conceptually related, these methods address logical coherence rather than fairness across protected groups.

Finally, fairness driven data repair methods aim to modify datasets to remove discriminatory patterns as pre- or post-processing solutions~\cite{DBLP:journals/kais/KamiranC11,DBLP:conf/kdd/FeldmanFMSV15,InterFair,DBLP:conf/icml/AgarwalBD0W18,DBLP:conf/nips/CalmonWVRV17,EqualOpp}. These interventions assume some fairness criterion but do not provide measures designed specifically for assessing fairness within the DP context.



\section{Conclusion and Limitations}\label{sec:conclusion}
We presented a principled framework for quantifying data unfairness under differential privacy, bridging concepts from probabilistic dependence, data repair, and contribution analysis. 
Our measures satisfy desirable consistency properties, exhibit low sensitivity, and remain tractable under standard DP mechanisms. 
Our three measures collectively provide complementary perspectives on data level bias.
Overall, the framework lays a foundation for a systematic evaluation of fairness in privacy protected data.

While our framework provides a principled foundation for measuring database unfairness under differential privacy, several limitations remain. First, $\repairsat$
relies on a MaxSAT-based proxy instantiated via a block-wise heuristic, which improves scalability but weakens estimation accuracy. Moreover, computing $\repairsat$ is slower than the other measures due to weighted MaxSAT over self-joined chunks, making efficient computation an important direction for future work.
Second, a principled approach for choosing k
would make $\contribution$ more user-friendly. Finally, our measures operate purely at the associational level and do not account for causal structure or data collection bias, which may influence fairness assessments.

\begin{acks}
The work of Mariia Vologdin and Amir Gilad was funded by the Israel Science Foundation (ISF) under grant 1702/24, the Scharf-Ullman Endowment, and the Alon Scholarship.
\end{acks}

\clearpage
\bibliographystyle{ACM-Reference-Format}
\bibliography{mybib}

\ifpaper

\else
\clearpage
\appendix

\section{Demonstrating the Faithfulness of \mutualproxytvd}
\Cref{fig:mi-proxies-comparison} demonstrates the faithfulness between $\mutual$ (\Cref{fig:mutual-comparison-1}) and $\mutualproxybayes$ (\Cref{fig:mutual-comparison-2}) for the \texttt{Adult}, \texttt{Stackoverflow} survey, and \texttt{Compas} datasets
(see~\Cref{footnote:datasets}) 
with fairness criteria from \Cref{tab:fairness-criteria}, numbered for each dataset. We also plotted $\mutualproxybayes$ with an offset of $\frac{1}{2}$ (\Cref{fig:mutual-comparison-3}) to make its values positive so the reader could easily see the variation between $\mutual$ and $\mutualproxybayes$ across all datasets and criteria.

\begin{figure}[h]
  \begin{subfigure}[b]{\linewidth}
    \includegraphics[width=\linewidth]{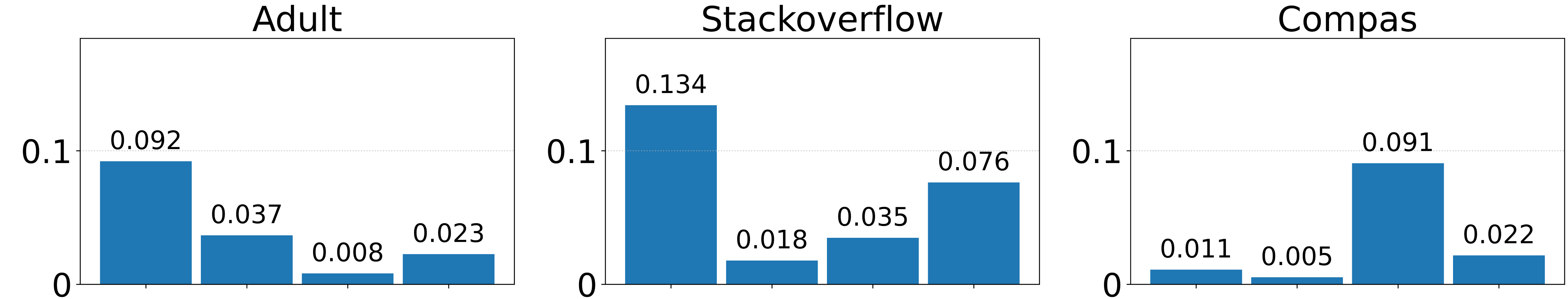}
    \caption{Values of $\mutual$.}
    \label{fig:mutual-comparison-1}
  \end{subfigure}
  \begin{subfigure}[b]{\linewidth}
    \includegraphics[width=\linewidth]{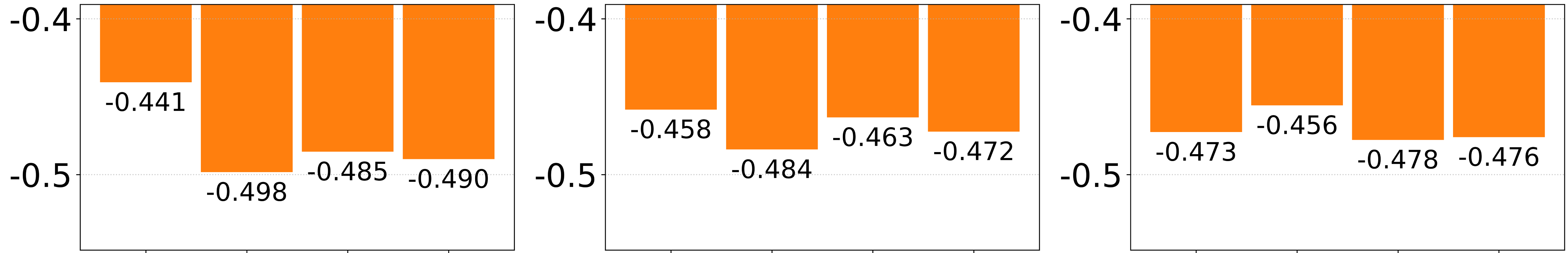}
    \caption{Values of $\mutualproxybayes$.}
    \label{fig:mutual-comparison-2}
  \end{subfigure}
  \begin{subfigure}[b]{\linewidth}
    \includegraphics[width=\linewidth]{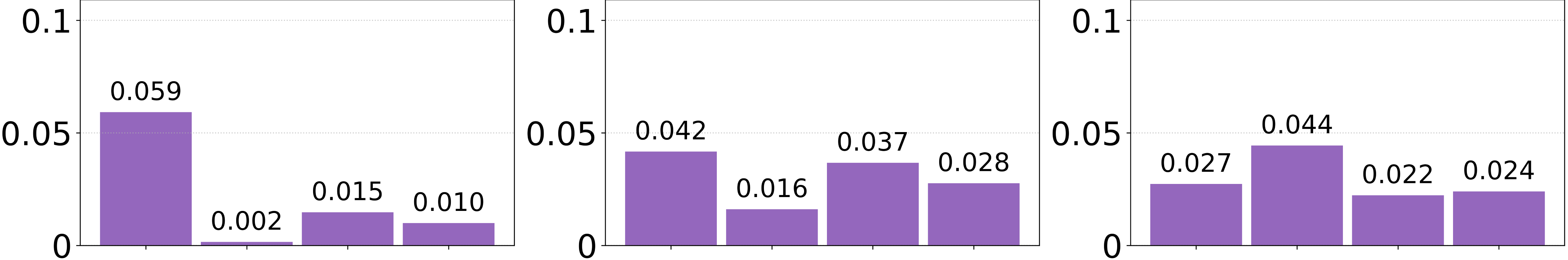}
    \caption{Values of $\mutualproxybayes$ with offset.}
    \label{fig:mutual-comparison-3}
  \end{subfigure}
  \begin{subfigure}[b]{\linewidth}
    \includegraphics[width=\linewidth]{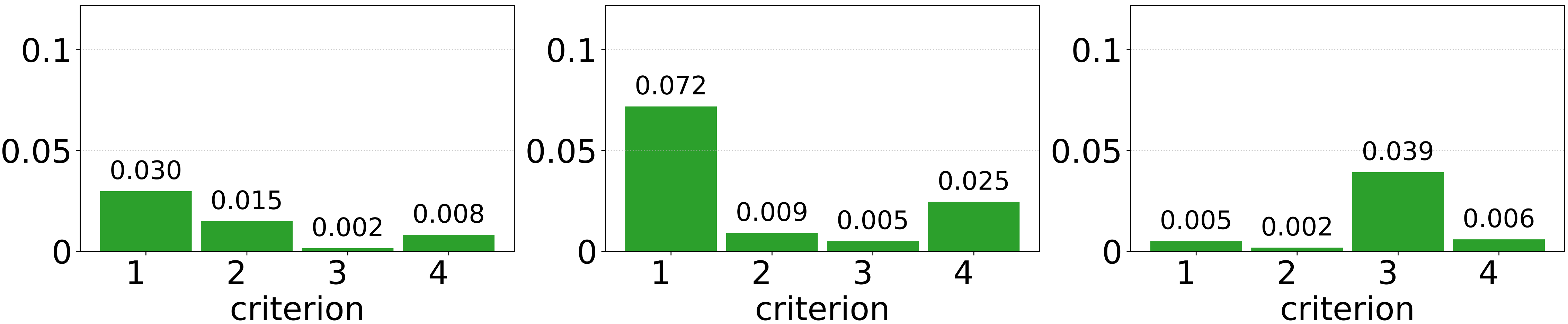}
    \caption{Values of $\mutualproxytvd$.}
    \label{fig:mutual-comparison-4}
  \end{subfigure}
  \caption{Demonstrating the faithfulness of $\mutualproxybayes$ and $\mutualproxytvd$ to $\mutual$ without privacy considerations.}
  \label{fig:mi-proxies-comparison}
\end{figure}

\section{Theorems and Proofs}\label{sec:appendix}

\subsection{Proofs for \Cref{sec:model}}

\begin{proof}[Proof of \Cref{prop:tvd-proxy}]
According to the Pinsker's inequality~\cite{csiszar2011information}:
{\scriptsize
$$\mathrm{TVD}(\prob(\protectedset,\responseset), \prob(\protectedset) \prob(\responseset))^2 \leq \frac{1}{2} D_{\mathrm{KL}}(\prob(\protectedset,\responseset) \,\|\, \prob(\protectedset) \prob(\responseset)) = \frac{1}{2} \mutual(\protectedset,\responseset)$$
}
it implies that
{\scriptsize
$$\mutualproxytvd(\protectedset \independent \responseset, D) = 2 \cdot \mathrm{TVD}(\prob(\protectedset,\responseset), \prob(\protectedset) \prob(\responseset))^2 \leq \mutual(\protectedset,\responseset)$$
}

From the other side, according to the reverse Pinsker's inequality, for any two distributions $P$ and $Q$ such that $Q(z) > 0$ for all $z$ it holds that
{\scriptsize
$$\mathrm{TVD}(P, Q)^2 \geq \frac{1}{2} \cdot \min_z Q(z) \cdot D_{\mathrm{KL}}(P \| Q)$$
}

Setting $P = \prob(\protectedset,\responseset)$ and $Q = \prob(\protectedset) \prob(\responseset)$, it implies that
{\scriptsize
$$\mathrm{TVD}(\prob(\protectedset,\responseset), \prob(\protectedset) \prob(\responseset))^2 \geq \frac{1}{2} \cdot \alpha \cdot \mutual(\protectedset,\responseset),$$
}
where $\alpha := \min_{(\protectedvalue,\responsevalue)} \prob(\protectedset=\protectedvalue) \prob(\responseset=\responsevalue)$ over the support of $\prob(\protectedset,\responseset)$.

Multiplying both sides by 2, we get
{\scriptsize
$$\mutualproxytvd(\protectedset \independent \responseset, D) = 2 \cdot \mathrm{TVD}(\prob(\protectedset,\responseset), \prob(\protectedset) \prob(\responseset))^2 \geq \alpha \cdot \mutual(\protectedset,\responseset)$$
}

Therefore
{\scriptsize
$$\alpha \cdot \mutual(\protectedset,\responseset) \leq \mutualproxytvd(\protectedset \independent \responseset, D) \leq \mutual(\protectedset,\responseset),$$
}
where $\alpha := \min_{(\protectedvalue,\responsevalue)} \prob(\protectedset=\protectedvalue) \prob(\responseset=\responsevalue)$.

Then, if we define $X = \prob(\protectedset,\responseset)$, $Y = \prob(\protectedset) \prob(\responseset)$, we would get that the following holds
{\scriptsize
$$\alpha \cdot \mutual(\protectedset,\responseset) \mutualproxytvd(\protectedset \independent \responseset, D) \leq \mutual(\protectedset,\responseset),$$
}
where $\alpha = min_{z\in Z, Y(z) > 0} Y(z)$.
\end{proof}

\begin{proof}[Proof of \Cref{prop:tvd-properties} Parts 1 and 2]
\textbf{Positivity:}

By the definition of total variation distance, for any two probability distributions $P$ and $Q$ it holds that
{\scriptsize
$$\mathrm{TVD}(P, Q) = \frac{1}{2} \sum_{x} |P(x) - Q(x)| \geq 0$$
}
and it holds that $\mathrm{TVD}(P, Q) = 0$ iff $P = Q$ by the properties of the absolute value.

For any database $D$ and any fairness criterion $\fairnessdef = \protectedset \independent \responseset \mid \admissibleset \in \constraints$ where $\admissibleset$ can be absent, it holds that
{\scriptsize
\begin{align*}
&\mutualproxytvd(\protectedset \independent \responseset \mid \admissibleset, D) \\
&= 2 \cdot \left( \mathrm{TVD}(\prob(\protectedset, \responseset \mid \admissibleset), \prob(\protectedset \mid \admissibleset)\prob(\responseset \mid \admissibleset)) \right)^2 \\
&\geq \mathrm{TVD}(\prob(\protectedset, \responseset \mid \admissibleset), \prob(\protectedset \mid \admissibleset)\prob(\responseset \mid \admissibleset)) \\
&\geq 0    
\end{align*}
}
and
{\scriptsize
\begin{align*}
&\mutualproxytvd(\protectedset \independent \responseset \mid \admissibleset, D) = 0 \\
&\iff 2 \cdot \left( \mathrm{TVD}(\prob(\protectedset, \responseset \mid \admissibleset), \prob(\protectedset \mid \admissibleset)\prob(\responseset \mid \admissibleset)) \right)^2 = 0 \\
&\iff \mathrm{TVD}(\prob(\protectedset, \responseset \mid \admissibleset), \prob(\protectedset \mid \admissibleset)\prob(\responseset \mid \admissibleset)) = 0 \\
&\iff \prob(\protectedset, \responseset \mid \admissibleset) = \prob(\protectedset \mid \admissibleset)\prob(\responseset \mid \admissibleset) \\
&\iff \fairnessdef(D) = 1    
\end{align*}
}

We can notice that the above also holds for any a set of fairness criteria $\constraints$ since $\mutualproxytvd(\constraints, D) := \sum_{\fairnessdef \in \constraints} \mutualproxytvd(\fairnessdef, D)$.

\textbf{Monotonicity:}

For any database $D$ and fairness criteria $\constraints,\constraints'$ such that $\constraints \subseteq \constraints'$, it holds that
{\scriptsize
$$\mutualproxytvd(\constraints, D) = \sum_{\fairnessdef \in \constraints} \mutualproxytvd(\fairnessdef, D) \leq \sum_{\fairnessdef' \in \constraints'} \mutualproxytvd(\fairnessdef', D) = \mutualproxytvd(\constraints', D)$$
}

This follows from the fact that $\mutualproxytvd$ is non-negative (from the Positivity property), and so its sum is monotonic.
\end{proof}

We will first state and prove the range and the sensitivity bounds of $\mutualproxytvd$ for a single fairness criterion, and then we will use them to prove the range and the sensitivity bounds for a set of criteria (Parts 3 and 4 of \Cref{prop:tvd-properties}).

\begin{lemma}[$\mutualproxytvd$ is suitable for DP for a single fairness criterion]\label{prop:tvd-properties-single-criterion}
Given a database $D$ of size $n$, the following holds:
\begin{enumerate}
    \item The range of $\mutualproxytvd$ is $[0,2]$ for a single criterion $\fairnessdef$.
    \item The sensitivity of $\mutualproxytvd$ for an unconditional fairness criterion $\fairnessdef = \protectedset \independent \responseset$ is at most $\frac{12}{n}$ and for a conditional fairness criterion, $\fairnessdef = \protectedset \independent \responseset \mid \admissibleset$, is at most $\frac{16}{n}$.
\end{enumerate}
\end{lemma}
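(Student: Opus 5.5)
The plan is to treat the two items separately, reducing both to elementary facts about total variation distance. The only genuinely delicate step is the per-group change in the conditional case.

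\textbf{Range.} For any two distributions $P,Q$ we have $0\le \mathrm{TVD}(P,Q)\le 1$, since $\frac12\sum_x|P(x)-Q(x)|\le \frac12(\sum_x P(x)+\sum_x Q(x))=1$. In the unconditional case this immediately gives $\mutualproxytvd=2\,\mathrm{TVD}^2\in[0,2]$. In the conditional case I will use the quantity computed in line~\ref{line:1_7} of \Cref{alg:tvd-proxy}, namely $\mathrm{TVD}=\sum_{\admissiblevalue}\prob(\admissibleset=\admissiblevalue)\,\mathrm{TVD}_{\admissiblevalue}$. This is a convex combination of numbers in $[0,1]$, so it lies in $[0,1]$, and again $\mutualproxytvd\in[0,2]$. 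Nonnegativity is already covered by Positivity in \Cref{prop:tvd-properties}.

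\textbf{Sensitivity, reduction to a TVD bound.} Let $T$ and $T'$ denote the (weighted) TVD values on neighboring databases $D\neighbor D'$. Since $T,T'\le 1$,
$$|2T^2-2T'^2| = 2|T-T'|(T+T')\le 4|T-T'|.$$
It therefore suffices to show $|T-T'|\le 3/n$ in the unconditional case and $|T-T'|\le 4/n$ in the conditional case.

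\textbf{Unconditional case.} Write $J(p,y)=\prob(\protectedset=p,\responseset=y)$, $M(p)=\prob(\protectedset=p)$ and $N(y)=\prob(\responseset=y)$. Replacing one tuple changes $J$ in at most two cells, each by $1/n$, so $\sum|J-J'|\le 2/n$; the same bound holds for $\sum|M-M'|$ and $\sum|N-N'|$. By the triangle inequality,
$$\sum_{p,y}|MN-M'N'|\le \sum_{p,y}|M-M'|\,N+\sum_{p,y}M'\,|N-N'|\le \tfrac{4}{n}.$$
Applying the reverse triangle inequality termwise to the absolute values gives
$$|T-T'|\le \tfrac12\Bigl(\tfrac{2}{n}+\tfrac{4}{n}\Bigr)=\tfrac{3}{n},$$
hence sensitivity $12/n$.

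\textbf{Conditional case.} I will rewrite the weighted TVD using counts. Let $c_{py\admissiblevalue}$, $c_{p\admissiblevalue}$, $c_{y\admissiblevalue}$ and $c_{\admissiblevalue}$ be the cell and marginal counts, and set
$$g_{\admissiblevalue}=\sum_{p,y}\Bigl|c_{py\admissiblevalue}-\tfrac{c_{p\admissiblevalue}c_{y\admissiblevalue}}{c_{\admissiblevalue}}\Bigr|, \qquad T=\tfrac{1}{2n}\sum_{\admissiblevalue}g_{\admissiblevalue}.$$
Replacing one tuple touches at most two $\admissibleset$-groups: the group losing the old tuple and the group gaining the new one. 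So it suffices to show $\sum_{\admissiblevalue}|g_{\admissiblevalue}-g'_{\admissiblevalue}|\le 8$. I split into two cases.
\begin{enumerate}
\item \emph{Same group.} If both tuples share the same $\admissibleset$-value, the group size is fixed. The unconditional computation, scaled by $c_{\admissiblevalue}$, then gives $|\Delta g_{\admissiblevalue}|\le 6$.
\item \emph{Different groups.} Each affected group undergoes a single insertion or a single deletion, so I must bound $|\Delta g_{\admissiblevalue}|\le 4$ for one insertion ($c\to c+1$, and by symmetry for one deletion). The cell term changes by exactly $1$. The product term contributes
$$\sum_{p,y}\Bigl|\tfrac{c'_p c'_y}{c+1}-\tfrac{c_p c_y}{c}\Bigr|.$$
I will expand $c'_p=c_p+[p=p_0]$ and $c'_y=c_y+[y=y_0]$, where $(p_0,y_0)$ are the inserted tuple's values, and group the resulting terms. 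The pieces are the indicator terms $c_y/(c+1)$ and $c_p/(c+1)$, each of which sums to at most $1$, and the rescaling term $c_pc_y\bigl(\tfrac1c-\tfrac1{c+1}\bigr)$, which sums to $c/(c+1)<1$. Together these bound the product-term change by $3$, so $|\Delta g_{\admissiblevalue}|\le 4$.
\end{enumerate}

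This insertion/deletion bound on the product term is the step I expect to be the main obstacle. The nonlinearity in $c_{\admissiblevalue}$ is where the extra slack beyond the unconditional constant comes from. The deletion case also needs the standing assumption that every $\admissibleset$-value occurs more than once, so that $c-1>0$. With $\sum_{\admissiblevalue}|\Delta g_{\admissiblevalue}|\le 8$ we get $|T-T'|\le 4/n$, and hence sensitivity $16/n$.
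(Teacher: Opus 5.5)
Your proof is correct. The range argument and the unconditional case are the same as the paper's: $|T-T'|\le 3/n$ via the reverse triangle inequality, then the square bound $4\cdot 3/n$.

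The conditional case takes a genuinely different route. The paper treats the conditional measure as $\sum_{\admissiblevalue}\prob(\admissibleset=\admissiblevalue)\,f_D(\admissiblevalue)$ with $f_D(\admissiblevalue)=2\,\mathrm{TVD}_{\admissiblevalue}^2$, i.e.\ a weighted average of per-group squares. It splits the change into two terms: a weight-change term, bounded by $\frac{2}{n}\cdot 2$ using $f\le 2$, and a per-group term, for which it reuses the unconditional bound $12/n_{\admissiblevalue}$, halved for the two touched groups. You instead apply the square bound to the weighted TVD $T$ that \Cref{alg:tvd-proxy} actually squares. Your count form $T=\frac{1}{2n}\sum_{\admissiblevalue} g_{\admissiblevalue}$ lets the weight $c_{\admissiblevalue}/n$ cancel the $1/c_{\admissiblevalue}$ normalization. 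Only absolute per-group bounds are then needed: $6$ for a replacement inside one group, and $4$ for a single insertion or deletion.

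Your route buys two things.
\begin{enumerate}
\item It bounds the quantity defined in \Cref{def:proxy-tvd} and computed by the algorithm, namely the square of the weighted TVD. By Jensen this is at most, and in general different from, the weighted average of squares that the paper's proof manipulates.
\item It explicitly handles groups whose size changes. The paper's $12/n_{\admissiblevalue}$ bound was derived for same-size neighbors, and halving it for groups that only gain or lose a tuple is asserted without argument.
\end{enumerate}
The paper's decomposition is more modular: it would apply to any per-group score with bounded range once a per-group insertion/deletion sensitivity is available. Your calculation is exactly what supplies that missing insertion/deletion step.

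Two small points.
\begin{enumerate}
\item You do not need the assumption that every $\admissibleset$-value occurs at least twice, which is not a hypothesis of this lemma. A singleton group has $g_{\admissiblevalue}=0$ and an empty group contributes nothing, so creating or emptying a group changes that group's term by $0$.
\item In the insertion computation, the cross term $[p=p_0][y=y_0]/(c+1)$ is absorbed by writing one indicator piece as $c'_y[p=p_0]/(c+1)$, which sums to exactly $1$. The total is then $\frac{3c+1}{c+1}\le 3$, as you claim.
\end{enumerate}
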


\begin{proof}[Proof of \Cref{prop:tvd-properties-single-criterion} Part 1]
For any $P$ and $Q$ two probability distributions, it holds that:
{\scriptsize
$$0 \leq \frac{1}{2} \sum_{x} |P(x) - Q(x)| \leq \frac{1}{2} (|1-0| + |0-1|) = 1$$
}
by the properties of probability. The upper bound is for the case when all probability mass in $P$ is at one outcome, while all probability mass in $Q$ is at the other. In this case, only two terms contribute to the sum, and both contribute $1$.

Therefore by the definition of $\mathrm{TVD}$ it holds that:
{\scriptsize
$$0 = 2 \cdot 0^2 \leq 2 \cdot \left( \mathrm{TVD}(P,Q) \right)^2 \leq 2 \cdot 1^2 = 2$$
}

And so, by the definition, the range of $\mutualproxytvd$ is $[0,2]$.
\end{proof}

Now we will state and prove a few lemmas that we will use in a later proof of \Cref{prop:tvd-properties-single-criterion} Part 2.

\begin{lemma}[Reverse Triangle Inequality]\label{lem:reverse-triangle-ineq}
Given $x, y \in \mathbb{R}$, the following holds
$$\left|\,|x| - |y|\,\right| \leq |x - y|$$
\end{lemma}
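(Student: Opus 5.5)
We prove the reverse triangle inequality by reducing it to the ordinary triangle inequality $|u+v| \leq |u| + |v|$ for real numbers, applied twice.

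First, write $x = (x - y) + y$. The triangle inequality gives $|x| \leq |x-y| + |y|$, hence
\[
|x| - |y| \leq |x-y| .
\]
Symmetrically, write $y = (y-x) + x$. This gives $|y| \leq |y-x| + |x|$. Since $|y-x| = |x-y|$, we obtain
\[
-(|x| - |y|) = |y| - |x| \leq |x-y| .
\]

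Combining the two displays, $-|x-y| \leq |x| - |y| \leq |x-y|$. By the characterization $|z| \leq c \iff -c \leq z \leq c$ for $c \geq 0$, applied with $z = |x| - |y|$ and $c = |x-y|$, this is exactly $\bigl|\,|x| - |y|\,\bigr| \leq |x-y|$.

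There is no genuine obstacle here. The only point requiring care is the symmetric step, which uses $|y-x| = |x-y|$ so that both one-sided bounds involve the same quantity $|x-y|$. This lemma will then be the tool for bounding how much each term $\bigl|\prob(\protectedset=\protectedvalue,\responseset=\responsevalue) - \prob(\protectedset=\protectedvalue)\prob(\responseset=\responsevalue)\bigr|$ can change between neighboring databases in the sensitivity proof of \Cref{prop:tvd-properties-single-criterion}.
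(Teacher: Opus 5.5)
Your proof is correct and matches the paper's argument: both apply the triangle inequality twice, via $x=(x-y)+y$ and $y=(y-x)+x$ together with $|y-x|=|x-y|$. Both then combine the results into $-|x-y|\le |x|-|y|\le |x-y|$ and conclude from the definition of absolute value.
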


\begin{proof}
From the triangle inequality,
{\scriptsize
$$|x| = |(x - y) + y| \leq |x - y| + |y|$$
}
Rearranging, we get
{\scriptsize
$$|x| - |y| \leq |x - y|$$
}

Similarly, by symmetry,
{\scriptsize
$$|y| = |(y - x) + x| \leq |y - x| + |x| = |x - y| + |x|$$
}
Rearranging, we get
{\scriptsize
$$|y| - |x| \leq |x - y|$$
}
and multiplying both sides by $-1$ we get
{\scriptsize
$$-|x - y| \leq -(|y| - |x|) = |x| - |y|$$
}

Combining the two, we get
{\scriptsize
$$-|x - y| \leq |x| - |y| \leq |x - y|$$
}
which by the definition of the absolute value means
{\scriptsize
$$\left|\,|x| - |y|\,\right| \leq |x - y|$$
}
\end{proof}

\begin{lemma}[Sensitivity of the square]\label{lem:sensitivity-of-square}
Let $f: \mathcal{D} \to \mathbb{R}$ be a function with sensitivity $\sens_f$. Define $g(D) := f(D)^2$. Then the sensitivity of $g$ satisfies:
$$\Delta_g := \max_{D \neighbor D'} |g(D) - g(D')| \leq 2 \cdot \sens_f \cdot \max(\dom(f))$$
\end{lemma}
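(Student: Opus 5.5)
The plan is to factor the difference of squares and bound each factor separately. Fix any pair of neighboring databases $D \neighbor D'$. Write
\[
g(D) - g(D') = f(D)^2 - f(D')^2 = \bigl(f(D) - f(D')\bigr)\bigl(f(D) + f(D')\bigr).
\]
This reduces the sensitivity of $g$ to two quantities: a first factor controlled by the sensitivity of $f$, and a second factor controlled by how large $f$ can be.

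For the first factor, \Cref{def:sensitivity} gives $|f(D) - f(D')| \leq \sens_f$ directly.

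For the second factor, use the triangle inequality: $|f(D) + f(D')| \leq |f(D)| + |f(D')|$. Each term is at most $\max(\dom(f))$, where $\max(\dom(f))$ is read as the largest absolute value attained by $f$ over all databases. So this factor is at most $2\max(\dom(f))$.

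Multiplying the two bounds gives $|g(D)-g(D')| \leq 2\,\sens_f \max(\dom(f))$ for every neighboring pair. Taking the maximum over all $D \neighbor D'$ then yields the claimed bound on $\Delta_g$.

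The only delicate point is how to interpret $\max(\dom(f))$. If $f$ can take negative values, the bound must use $\sup|f|$ rather than the largest value of $f$. In the intended application, $f$ is the TVD, which lies in $[0,1]$, so the two readings coincide and the bound becomes $2\sens_f$. I will state the nonnegativity assumption explicitly, or define $\max(\dom(f))$ as $\sup_D |f(D)|$, so that the second factor's bound is justified.
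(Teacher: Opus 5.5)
Your proposal is correct and is essentially the paper's proof: factor $f(D)^2 - f(D')^2 = \bigl(f(D)-f(D')\bigr)\bigl(f(D)+f(D')\bigr)$, bound the first factor by $\sens_f$ and the second by $2\max(\dom(f))$, then take the maximum over neighboring pairs. Your remark that $\max(\dom(f))$ must be read as $\sup_D |f(D)|$, or that $f$ must be nonnegative as the TVD in $[0,1]$ is, makes explicit an assumption the paper leaves implicit; it does not change the argument.
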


\begin{proof}
Let $D \neighbor D'$ be neighboring databases. Define $g(D) = f(D)^2$, so
{\scriptsize
\begin{align*}
&|g(D) - g(D')| = |f(D)^2 - f(D')^2| \\
&= |(f(D) - f(D')) \cdot (f(D) + f(D'))| \\
&\leq |f(D) - f(D')| \cdot |f(D) + f(D')| \\ 
&\leq |f(D) - f(D')| \cdot 2 \cdot \max(\dom(f)) \\ 
&\leq \sens_f \cdot 2 \cdot \max(\dom(f))
\end{align*}
}

Taking the maximum over all neighboring databases, we get
{\scriptsize
\begin{align*}
\sens_g \leq 2 \cdot \sens_f \cdot \max(\dom(f))    
\end{align*}
}
\end{proof}

\begin{proof}[Proof of \Cref{prop:tvd-properties-single-criterion} Part 2]
\textbf{Unconditional case:}

Let there be $D \neighbor D'$ neighboring databases such that they differ in one tuple by replacement. Without loss of generality, assume that $D'$ contains one more occurrence of $t'' = (\protectedvalue_{t''},\responsevalue_{t''})$ and one less occurrence of $t' = (\protectedvalue_{t'},\responsevalue_{t'})$ than $D$.

From \Cref{def:proxy-tvd}, by the triangle inequality and \Cref{lem:reverse-triangle-ineq}, we have the following
{\scriptsize
\setlength{\abovedisplayskip}{4pt}
\setlength{\belowdisplayskip}{4pt}
\begin{align}
&\left| \mathrm{TVD}(\probneighbor(\protectedset,\responseset), \probneighbor(\protectedset)\probneighbor(\responseset)) 
      - \mathrm{TVD}(\prob(\protectedset,\responseset), \prob(\protectedset)\prob(\responseset)) \right| \\
&= \Bigl| \, \frac{1}{2} \sum_{\protectedvalue \in \protectedset, \responsevalue \in \responseset} 
   \left| \probneighbor(\protectedset=\protectedvalue,\responseset=\responsevalue) - \probneighbor(\protectedset=\protectedvalue)\probneighbor(\responseset=\responsevalue) \right| \nonumber\\[-4pt]
&\qquad\quad - \frac{1}{2} \sum_{\protectedvalue \in \protectedset, \responsevalue \in \responseset} 
   \left| \prob(\protectedset=\protectedvalue,\responseset=\responsevalue) - \prob(\protectedset=\protectedvalue)\prob(\responseset=\responsevalue) \right| \, \Bigr| \label{eq:1_1}\\
&\leq \Bigl| \frac{1}{2} \sum_{\protectedvalue \in \protectedset, \responsevalue \in \responseset}  
  \Bigl( \probneighbor(\protectedset=\protectedvalue,\responseset=\responsevalue) - \probneighbor(\protectedset=\protectedvalue)\probneighbor(\responseset=\responsevalue) \nonumber\\[-4pt]
  &\qquad\qquad\qquad - \prob(\protectedset=\protectedvalue,\responseset=\responsevalue) + \prob(\protectedset=\protectedvalue)\prob(\responseset=\responsevalue) \Bigr) 
\Bigr| \label{eq:1_2}\\
&= \frac{1}{2} \Bigl| \sum_{\protectedvalue \in \protectedset, \responsevalue \in \responseset} 
  \Bigl( \probneighbor(\protectedset=\protectedvalue,\responseset=\responsevalue) - \prob(\protectedset=\protectedvalue,\responseset=\responsevalue) \Bigr) \nonumber\\[-4pt]
  &\qquad\qquad\quad + \sum_{\protectedvalue \in \protectedset, \responsevalue \in \responseset} 
  \Bigl( \prob(\protectedset=\protectedvalue)\prob(\responseset=\responsevalue) - \probneighbor(\protectedset=\protectedvalue)\probneighbor(\responseset=\responsevalue) \Bigr) 
\Bigr| \label{eq:1_3}\\
&\leq \frac{1}{2} \left| \sum_{\protectedvalue \in \protectedset, \responsevalue \in \responseset} 
  \probneighbor(\protectedset=\protectedvalue,\responseset=\responsevalue) - \prob(\protectedset=\protectedvalue,\responseset=\responsevalue) \right| \nonumber\\[-4pt]
&\qquad\qquad\quad + \frac{1}{2} \left| \sum_{\protectedvalue \in \protectedset, \responsevalue \in \responseset} 
  \probneighbor(\protectedset=\protectedvalue)\probneighbor(\responseset=\responsevalue) - \prob(\protectedset=\protectedvalue)\prob(\responseset=\responsevalue) \right| \label{eq:1_4}\\
&\leq \frac{1}{2} \sum_{\protectedvalue \in \protectedset, \responsevalue \in \responseset} 
  \left| \probneighbor(\protectedset=\protectedvalue,\responseset=\responsevalue) - \prob(\protectedset=\protectedvalue,\responseset=\responsevalue) \right| \nonumber\\[-4pt]
&\qquad\qquad\quad + \frac{1}{2} \sum_{\protectedvalue \in \protectedset, \responsevalue \in \responseset} 
  \left| \probneighbor(\protectedset=\protectedvalue)\probneighbor(\responseset=\responsevalue) - \prob(\protectedset=\protectedvalue)\prob(\responseset=\responsevalue) \right| \label{eq:1_5}
\end{align}
}
The transition from \eqref{eq:1_1} to \eqref{eq:1_2} is due to \Cref{lem:reverse-triangle-ineq}, the transitions from \eqref{eq:1_3} to \eqref{eq:1_4} and from \eqref{eq:1_4} to \eqref{eq:1_5} are due to triangle inequality.

Looking at the first part of the sum separately, we get
{\scriptsize
\setlength{\abovedisplayskip}{4pt}
\setlength{\belowdisplayskip}{4pt}
\begin{align*}
&\frac{1}{2} \sum_{\protectedvalue \in \protectedset, \responsevalue \in \responseset} \left| \probneighbor(\protectedset=\protectedvalue,\responseset=\responsevalue) - \prob(\protectedset=\protectedvalue,\responseset=\responsevalue) \right| \\
&\leq \frac{1}{2} \Bigl(\left| \probneighbor(\protectedset=\protectedvalue_{t'},\responseset=\responsevalue_{t'}) - \prob(\protectedset=\protectedvalue_{t'},\responseset=\responsevalue_{t'}) \right| \\
&\quad + \left| \probneighbor(\protectedset=\protectedvalue_{t''},\responseset=\responsevalue_{t''}) - \prob(\protectedset=\protectedvalue_{t''},\responseset=\responsevalue_{t''}) \right| \Bigr) \\
&\leq \frac{1}{2} \cdot \left(\frac{1}{n} + \frac{1}{n}\right)  \\
&= \frac{1}{n}    
\end{align*}
}

Looking at the second part of the sum, we get
{\scriptsize
\begin{align}
&\frac{1}{2} \sum_{\protectedvalue \in \protectedset, \responsevalue \in \responseset} 
   \Bigl| \probneighbor(\protectedset=\protectedvalue)\,
          \probneighbor(\responseset=\responsevalue) - \prob(\protectedset=\protectedvalue)\,
          \prob(\responseset=\responsevalue) 
    \Bigr| \\
&= \frac{1}{2} \sum_{\protectedvalue \in \protectedset, \responsevalue \in \responseset} 
   \Bigl| \probneighbor(\protectedset=\protectedvalue)\,
          \probneighbor(\responseset=\responsevalue) + \prob(\protectedset=\protectedvalue)\,
          \probneighbor(\responseset=\responsevalue) \nonumber\\[-4pt]
&\qquad\qquad\qquad - \prob(\protectedset=\protectedvalue)\,
          \probneighbor(\responseset=\responsevalue) - \prob(\protectedset=\protectedvalue)\,
          \prob(\responseset=\responsevalue)\Bigr| \\
&= \tfrac{1}{2} \sum_{\protectedvalue \in \protectedset, \responsevalue \in \responseset} 
   \Bigl| \bigl(\probneighbor(\protectedset=\protectedvalue) 
          - \prob(\protectedset=\protectedvalue)\bigr)\,
          \probneighbor(\responseset=\responsevalue) \nonumber\\[-4pt]
&\qquad\qquad\qquad + \prob(\protectedset=\protectedvalue)\,
          \bigl(\probneighbor(\responseset=\responsevalue) 
          - \prob(\responseset=\responsevalue)\bigr) \Bigr| \label{eq:2_1}\\
&\leq \frac{1}{2} \sum_{\protectedvalue \in \protectedset, \responsevalue \in \responseset} 
   \Bigl(\bigl| \probneighbor(\protectedset=\protectedvalue) 
          - \prob(\protectedset=\protectedvalue) \bigr|\,
          \probneighbor(\responseset=\responsevalue) \nonumber\\
&\qquad\qquad\qquad + \prob(\protectedset=\protectedvalue)\,
          \bigl| \probneighbor(\responseset=\responsevalue) 
          - \prob(\responseset=\responsevalue) \bigr| \Bigr) \label{eq:2_2}\\
&= \frac{1}{2} \Biggl( 
   \sum_{\protectedvalue \in \protectedset} 
      \bigl| \probneighbor(\protectedset=\protectedvalue) 
          - \prob(\protectedset=\protectedvalue) \bigr|
      \sum_{\responsevalue \in \responseset} 
          \probneighbor(\responseset=\responsevalue) \nonumber\\[-4pt]
&\qquad + \sum_{\responsevalue \in \responseset} 
      \bigl| \probneighbor(\responseset=\responsevalue) 
          - \prob(\responseset=\responsevalue) \bigr|
      \sum_{\protectedvalue \in \protectedset} 
          \prob(\protectedset=\protectedvalue) \Biggr) 
   \label{eq:2_3}\\
&= \frac{1}{2} \Biggl( 
   \sum_{\protectedvalue \in \protectedset} 
      \bigl| \probneighbor(\protectedset=\protectedvalue) 
          - \prob(\protectedset=\protectedvalue) \bigr| \cdot 1 + \sum_{\responsevalue \in \responseset} 
      \bigl| \probneighbor(\responseset=\responsevalue) 
          - \prob(\responseset=\responsevalue) \bigr| \cdot 1 
   \Biggr) 
   \label{eq:2_4}\\
&= \frac{1}{2} \sum_{\protectedvalue \in \protectedset} 
       \bigl| \probneighbor(\protectedset=\protectedvalue) 
          - \prob(\protectedset=\protectedvalue) \bigr| + \frac{1}{2} \sum_{\responsevalue \in \responseset} 
       \bigl| \probneighbor(\responseset=\responsevalue) 
          - \prob(\responseset=\responsevalue) \bigr|
\end{align}
}
where the transition from \eqref{eq:2_1} to \eqref{eq:2_2} is due to applying the triangle inequality, the transition from \eqref{eq:2_3} to \eqref{eq:2_4} is due to the fact that the sum of probabilities for all the values is $1$.

In particular,
{\scriptsize
\begin{align*}
&\frac{1}{2} \sum_{\protectedvalue \in \protectedset} |\probneighbor(\protectedset=\protectedvalue) - \prob(\protectedset=\protectedvalue)| \\
&\leq \frac{1}{2} \cdot \left(\frac{1}{n} + \frac{1}{n}\right) \\
&= \frac{1}{n}
\end{align*}
}

A similar expression can be achieved by swapping $\protectedset$ and $\protectedvalue$ for $\responseset$ and $\responsevalue$, respectively.

Putting it all together,
{\scriptsize
\begin{align*}
&\left| \mathrm{TVD}(\probneighbor(\protectedset,\responseset), \probneighbor(\protectedset)\probneighbor(\responseset)) 
     - \mathrm{TVD}(\prob(\protectedset,\responseset), \prob(\protectedset)\prob(\responseset)) \right| \\
&\leq 
\frac{1}{2} \sum_{\protectedvalue \in \protectedset, \responsevalue \in \responseset} |\probneighbor(\protectedvalue,\responsevalue) - \prob(\protectedvalue,\responsevalue)| \\
&\quad + \frac{1}{2} \sum_{\protectedvalue \in \protectedset, \responsevalue \in \responseset} |\probneighbor(\protectedvalue)\probneighbor(\responsevalue) - \prob(\protectedvalue)\prob(\responsevalue)| \\
&\leq \frac{1}{n} + \frac{1}{2} \sum_{\protectedvalue \in \protectedset} |\probneighbor(\protectedvalue) - \prob(\protectedvalue)| 
+ \frac{1}{2} \sum_{\responsevalue \in \responseset} |\probneighbor(\responsevalue) - \prob(\responsevalue)| \\
&\leq \frac{3}{n}
\end{align*}
}

Taking the maximum over all neighboring databases $D \neighbor D'$, we get
{\scriptsize
$$\sens_{\mathrm{TVD}} \leq \frac{3}{n}$$
}

From \Cref{lem:sensitivity-of-square}, it holds that
{\scriptsize
\begin{align*}
&\sens_{\mutualproxytvd} = \sens_{2\mathrm{TVD}^2} \\
&= 2 \cdot \sens_{\mathrm{TVD}^2} \\
&\leq 2 \cdot 2 \cdot \sens_{\mathrm{TVD}} \cdot \max(\dom(\mathrm{TVD})) \\
&\leq^{\mathrm{TVD} \leq 1} 4 \cdot \frac{3}{n} \cdot 1 \\
&= \frac{12}{n}    
\end{align*}
}

Therefore, the sensitivity bound for the unconditional case is $\frac{12}{n}$.

\textbf{Conditional case:}

Similarly, assume that $D'$ contains one more occurrence of $t'' = (\protectedvalue_{t''},\responsevalue_{t''},\admissiblevalue_{t''})$ and one less occurrence of $t' = (\protectedvalue_{t'},\responsevalue_{t'},\admissiblevalue_{t'})$ than $D$.

Denote $f_D(\admissiblevalue) := \mutualproxytvd(\protectedset \independent \responseset \mid \admissibleset=\admissiblevalue, D)$, and similarly $f_{D'}(\admissiblevalue)$. Then from \Cref{def:proxy-tvd} it holds that
{\scriptsize
\begin{align}
&\bigl|\mutualproxytvd(\protectedset \independent \responseset \mid \admissibleset, D') 
     - \mutualproxytvd(\protectedset \independent \responseset \mid \admissibleset, D)\bigr| \\
&= \left|\sum_{\admissiblevalue} \left( \probneighbor(\admissibleset=\admissiblevalue) f_{D'}(\admissiblevalue) - \prob(\admissibleset=\admissiblevalue) f_{D}(\admissiblevalue) \right)\right| \label{eq:3_1}\\
&\leq \sum_{\admissiblevalue} \bigl|\probneighbor(\admissibleset=\admissiblevalue) f_{D'}(\admissiblevalue) - \prob(\admissibleset=\admissiblevalue) f_{D}(\admissiblevalue)\bigr| \\
&= \sum_{\admissiblevalue} \left|\Bigl(\left( \probneighbor(\admissibleset=\admissiblevalue) - \prob(\admissibleset=\admissiblevalue) \right) f_{D'}(\admissiblevalue) \Bigr) + \Bigl( \prob(\admissibleset=\admissiblevalue) \left( f_{D'}(\admissiblevalue) - f_{D}(\admissiblevalue) \right)\Bigr)\right| \label{eq:3_2}\\
&\leq \sum_{\admissiblevalue} \left|\Bigl|\left( \probneighbor(\admissibleset=\admissiblevalue) - \prob(\admissibleset=\admissiblevalue) \right) f_{D'}(\admissiblevalue) \Bigr| + \Bigl| \prob(\admissibleset=\admissiblevalue) \left( f_{D'}(\admissiblevalue) - f_{D}(\admissiblevalue) \right)\Bigr|\right| \label{eq:3_3}\\
&\leq \sum_{\admissiblevalue} \left| \probneighbor(\admissibleset=\admissiblevalue) - \prob(\admissibleset=\admissiblevalue) \right| \left| f_{D'}(\admissiblevalue) \right| + \sum_{\admissiblevalue} \prob(\admissibleset=\admissiblevalue) \left| f_{D'}(\admissiblevalue) - f_{D}(\admissiblevalue) \right|, \label{eq:3_4}
\end{align}
}
where the transitions are due to:
\begin{itemize}
  \item $\eqref{eq:3_1} \rightarrow \eqref{eq:3_2}$: triangle inequality.
  \item $\eqref{eq:3_2} \rightarrow \eqref{eq:3_3}$: triangle inequality.
  \item $\eqref{eq:3_3} \rightarrow \eqref{eq:3_4}$: $xy \leq |x||y|$ for $x,y \in \mathbb{R}$.
\end{itemize}

Looking at the first sum separately, because only $\admissiblevalue \in \{\admissiblevalue_{t'}, \admissiblevalue_{t''}\}$ change, and since $|f_{D'}(\admissiblevalue)| \leq 2$ by the Range property from \Cref{prop:tvd-properties} proved earlier, then
{\scriptsize
$$
\sum_{\admissiblevalue} \left| \probneighbor(\admissibleset=\admissiblevalue) - \prob(\admissibleset=\admissiblevalue) \right| \left| f_{D'}(\admissiblevalue) \right| \leq \left( \frac{1}{n} + \frac{1}{n} \right) \cdot 2 = \frac{4}{n}
$$
}

Looking at the second sum, by the unconditional case and by the \Cref{lem:sensitivity-of-square}, for a fixed $\admissiblevalue$ with $n_{\admissiblevalue} = |\{ t \in D \mid t[\admissibleset] = \admissiblevalue \}|$, it holds that the sensitivity of $2 \cdot \mathrm{TVD}^2$ is at most $\frac{12}{n_{\admissiblevalue}}$ when both removing $t'$ and adding $t''$ are included in the computation. So, $\left| f_{D'}(\admissiblevalue_{t'}) - f_{D}(\admissiblevalue_{t'}) \right| = \frac{1}{2} \cdot \frac{12}{n_{\admissiblevalue_{t'}}}$, and similarly for $t''$. Therefore
{\scriptsize
\begin{align*}
&\sum_{\admissiblevalue} \prob(\admissibleset=\admissiblevalue) \left| f_{D'}(\admissiblevalue) - f_{D}(\admissiblevalue) \right| \\
&= \prob(\admissibleset=\admissiblevalue_{t'}) \left| f_{D'}(\admissiblevalue_{t'}) - f_{D}(\admissiblevalue_{t'}) \right| + \prob(\admissibleset=\admissiblevalue_{t''}) \left| f_{D'}(\admissiblevalue_{t''}) - f_{D}(\admissiblevalue_{t''}) \right| \\
&\leq \frac{n_{\admissiblevalue_{t'}}}{n} \cdot \frac{6}{n_{\admissiblevalue_{t'}}} + \frac{n_{\admissiblevalue_{t''}}}{n} \cdot \frac{6}{n_{\admissiblevalue_{t''}}} \\
&= \frac{12}{n}
\end{align*}
}

Putting it all together,
{\scriptsize
$$
\sens_{\mutualproxytvd} = \max_{D' \neighbor D} \left| \mutualproxytvd(\protectedset \independent \responseset \mid \admissibleset, D') 
     - \mutualproxytvd(\protectedset \independent \responseset \mid \admissibleset, D) \right| \leq \frac{4}{n} + \frac{12}{n} = \frac{16}{n}
$$
}

Therefore, the sensitivity bound for the conditional case is $\frac{16}{n}$.

\end{proof}

\begin{proof}[Proof of \Cref{prop:tvd-properties} Parts 3 and 4]
\textbf{Range:}

By \Cref{def:proxy-tvd} and by \Cref{prop:tvd-properties-single-criterion}, it holds that
{\scriptsize
$$
\mutualproxytvd(\constraints, D) = \sum_{\fairnessdef \in \constraints} \mutualproxytvd(\fairnessdef, D) \geq \sum_{\fairnessdef \in \constraints} 0 = 0
$$
}
and
{\scriptsize
$$
\mutualproxytvd(\constraints, D) = \sum_{\fairnessdef \in \constraints} \mutualproxytvd(\fairnessdef, D) \leq \sum_{\fairnessdef \in \constraints} 2 = 2|\constraints|
$$
}

\textbf{Sensitivity:}

From \Cref{prop:tvd-properties-single-criterion}, the sensitivity for a single fairness criterion $\fairnessdef$ is at most
{\scriptsize
$$\sens_{\mutualproxytvd(\fairnessdef, D)} \leq \frac{16}{n}$$
}

Therefore, for $D' \neighbor D$, using the triangle inequality and linearity of the sum, it holds that
{\scriptsize
\begin{align*}
\bigl| \mutualproxytvd(\constraints, D') - \mutualproxytvd(\constraints, D) \bigr| &= \Bigl| \sum_{\fairnessdef \in \constraints} \bigl(\mutualproxytvd(\fairnessdef, D')\bigr) - \sum_{\fairnessdef \in \constraints} \bigl(\mutualproxytvd(\fairnessdef, D)\bigr) \Bigr| \\
&\leq \sum_{\fairnessdef \in \constraints} \bigl| \mutualproxytvd(\fairnessdef, D') - \mutualproxytvd(\fairnessdef, D) \bigr| \\
&\leq \sum_{\fairnessdef \in \constraints} \sens_{\mutualproxytvd(\fairnessdef, D)} \\
&\leq \frac{16|\constraints|}{n}
\end{align*}
}
\end{proof}

\begin{proposition}[$\repair$ satisfies the desired properties]~\label{prop:repair-properties}
\begin{enumerate}
\item $\repair$ satisfies the Positivity property.
\item $\repair$ satisfies the Monotonicity property. 
\item The range of $\repair$ is $[0,n]$ for a set of criteria $\constraints$.
\item The sensitivity of $\repair$ is $1$ for a set of criteria $\constraints$.
\end{enumerate}
\end{proposition}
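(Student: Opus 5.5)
We read $\repair(\constraints,D)$ for a set of criteria as in \Cref{def:repair}, applied jointly: $\repair(\constraints,D)=\min\{|D\symdif D_R| : D_R \text{ satisfies every } \fairnessdef\in\constraints\}$, a single repair for the whole set rather than a sum of per-criterion repairs. This joint reading is the one under which the stated range $[0,n]$ and sensitivity $1$ can hold independently of $|\constraints|$. The plan is to prove the four items in order.

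\textbf{Items 1 and 2.} Positivity is immediate. The value is the cardinality of a symmetric difference, so it is nonnegative. It equals $0$ iff $D_R=D$ is feasible, i.e.\ iff $D$ itself satisfies every $\fairnessdef\in\constraints$. For monotonicity, take $\constraints\subseteq\constraints'$. Every database satisfying all criteria of $\constraints'$ also satisfies all criteria of $\constraints$. Hence the feasible set for $\constraints'$ is contained in the feasible set for $\constraints$, and a minimum over a smaller set can only be larger.

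\textbf{Item 3 (range).} We exhibit a cheap feasible repair. Pick any tuple $t\in D$ (assuming $n\ge 1$) and let $D_R=\{t\}$. The empirical distribution of $D_R$ is a point mass, and a point mass factorizes: $Pr(\protectedset=p,\responseset=y\mid\admissibleset=a)=Pr(\protectedset=p\mid\admissibleset=a)\,Pr(\responseset=y\mid\admissibleset=a)$, since both sides are $1$ at $t$'s values and $0$ elsewhere. So $D_R$ satisfies every criterion simultaneously. It is reached by $n-1$ deletions, which gives $\repair(\constraints,D)\le n-1\le n$. Together with item 1 this yields the range $[0,n]$.

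\textbf{Item 4 (sensitivity).} This is the main obstacle. Let $D'\neighbor D$ with $D'=D-t+t''$, and let $D_R$ be an optimal joint repair of $D$. The plan is to transport $D_R$ to a repair of $D'$ by case analysis on the fate of $t$.

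\emph{Case 1: $t\notin D_R$}, i.e.\ $t$ was deleted. Use the same target $D_R$ for $D'$. The deletion of $t$ disappears from the cost. The tuple $t''$ now appears in $D'\symdif D_R$ only if $t''\notin D_R$. The net change is therefore $+1-1=0$ or $-1$.

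\emph{Case 2: $t\in D_R$ and $t''\in D_R$.} Here we use bag semantics. In $D_R$, identify the copy of $t''$ with the new tuple of $D'$ rather than an inserted one. The net cost changes by at most one: $t$ becomes an insertion, while $t''$ possibly stops being one.

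\emph{Case 3: $t\in D_R$ and $t''\notin D_R$.} This is the hard case. The naive transport costs $+2$: $t$ must be re-inserted and $t''$ deleted. The first thing I would try is to compare $D_R$ with the alternative fair target obtained by replacing the kept copy of $t$ with $t''$, i.e.\ $D_R-t+t''$, and check whether it is fair or can be made fair with one further change.

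If no such argument goes through, the honest conclusion is different. The bound of $1$ then holds under the add/remove neighbor convention, where the Case~1--2 style accounting gives $|\repair(\constraints,D)-\repair(\constraints,D')|\le 1$ directly by symmetry. Under replacement neighbors the same argument gives $2$, and I would state that explicitly.
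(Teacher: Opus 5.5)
Items 1--3 are fine and match the paper's appendix argument, which also reads $\repair(\constraints,D)$ jointly. For the range, the paper deletes all tuples. Your one-tuple target has the small advantage of avoiding the undefined empirical probabilities of the empty database.

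\textbf{The gap is item 4.} You leave Case 3 unresolved, and it cannot be resolved: under the paper's neighbor notion (same size, one tuple replaced) the claim is false.

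Take binary $\protectedset,\responseset$ and the criterion $\protectedset\independent\responseset$. Let $D$ consist of four copies each of $(0,0),(0,1),(1,0),(1,1)$. Then $D$ is fair, so $\repair(D)=0$.

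Replace one $(1,1)$ by $(0,0)$. The resulting $D'$ has counts $n_{00}=5$, $n_{01}=n_{10}=4$, $n_{11}=3$. Since $n_{00}n_{11}=15\neq 16=n_{01}n_{10}$, $D'$ is unfair.

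\emph{No single change repairs $D'$.}
\begin{enumerate}
\item Every single deletion or insertion over the existing values yields $(n_{00}n_{11},n_{01}n_{10})\in\{(12,16),(15,12),(10,16),(18,16),(15,20),(20,16)\}$, never an equality.
\item Inserting a tuple with a fresh value leaves the $2\times 2$ block with determinant $-1$ intact, so the count table cannot become rank one.
\end{enumerate}
Hence $\repair(D')=2$, achieved by undoing the replacement.

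Combine this with your fixed-target transport, where removing $t$ and adding $t''$ each change $|\cdot\symdif D_R|$ by exactly one, and with symmetry. The sensitivity under replacement neighbors is then exactly $2$; it is $1$ only under add/remove neighbors. So your fallback is the correct conclusion, and you should state it as the result rather than leave it conditional. Your proposed rescue of Case 3 via $D_R-t+t''$ necessarily fails here: with $D_R=D$ it gives $D'$ itself, which no single further change makes fair.

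For comparison, the paper's proof takes exactly your fixed-target route but asserts $|D'\symdif D_R|\le|D\symdif D_R|+1$. That is precisely the step your Case 3 refutes. Its single-criterion lemma similarly counts ``replace $t''$ by $t'$'' as one operation. The paper's own analysis of $\repairsat$ arrives at $2$ for the same reason, since a replacement flips two soft clauses.

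A smaller point: do the bookkeeping in multiplicities. For a fixed target the change is always even.
\begin{enumerate}
\item Case 1 gives $0$ or $-2$, not $0$ or $-1$.
\item In Case 2, when the copy of $t''$ in $D_R$ is a kept original tuple rather than an insertion, there is no inserted copy to re-identify. The change is then $+2$, so that subcase belongs with Case 3.
\end{enumerate}
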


\begin{proof}[Proof of \Cref{prop:repair-properties} Parts 1 and 2]
\textbf{Positivity:}

The positivity property holds, as \Cref{def:repair} defines $\repair(\constraints, D) = |D \symdif D_R|$ with an absolute value. 

\textbf{Monotonicity:}

For monotonicity, assume that $\constraints \subseteq \hat{\constraints}$ and that $D_R$ and $\hat{D_R}$ are the minimum repaired databases w.r.t. $D$ and $\constraints, \hat{\constraints}$ respectively.

By definition $\hat{D_R}$ satisfies $\hat{\constraints}$ and thus in particular satisfies $\constraints$. Hence, $|D \symdif \hat{D_R}|$ can only be equal or larger than $|D \symdif D_R|$, since by our definition $D_R$ is the database that {\bf minimizes} the symmetric difference with $D$ and satisfies $\constraints$. 
Therefore, $\repair(\constraints, D) = |D \symdif D_R| \leq |D \symdif \hat{D_R}| = \repair(\hat{\constraints}, D)$.
\end{proof}

We will first state and prove the range and the sensitivity bounds of $\repair$ for a single fairness criterion, and then we will use them to prove the range and the sensitivity bounds for a set of criteria (Parts 3 and 4 of \Cref{prop:repair-properties}).

\begin{lemma}[$\repair$ is suitable for DP for a single fairness criterion]\label{prop:repair-properties-single-criterion}
Given a database $D$ of size $n$, the following holds:
\begin{enumerate}
    \item The range of $\repair(\fairnessdef, D)$ is $[0, n]$ for a single criterion $\fairnessdef$.
    \item The sensitivity of $\repair$ is at most $1$ for a single criterion $\fairnessdef$.
\end{enumerate}
\end{lemma}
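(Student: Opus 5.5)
The plan is to prove the two parts separately, treating $\fairnessdef = \protectedset \independent \responseset \mid \admissibleset$ and projecting to the criterion's attributes, as in the footnote in \Cref{sec:repair}.

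\textbf{Range.} The lower bound $0$ is immediate, since $\repair(\fairnessdef,D)=|D \symdif D_R|$ is a cardinality. For the upper bound I will exhibit a feasible repair of cost at most $n$; by minimality of $D_R$ this bounds $\repair$. The natural witness is to delete all tuples except one, keeping a single tuple $t$. In the resulting database every marginal and conditional marginal of the values of $t$ equals $1$. So $\prob(\protectedset=t[\protectedset],\responseset=t[\responseset]\mid \admissibleset=t[\admissibleset]) = 1 = \prob(\protectedset=t[\protectedset]\mid\admissibleset=t[\admissibleset])\prob(\responseset=t[\responseset]\mid\admissibleset=t[\admissibleset])$, and all other cells are $0=0$. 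Hence the criterion holds, and the cost is $n-1\le n$. (Deleting everything would also give $n$, but the empty database makes the conditional probabilities undefined, so I prefer the singleton witness.)

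\textbf{Sensitivity.} Let $D' \neighbor D$, so $D'$ is obtained from $D$ by removing one occurrence of $t'$ and adding one occurrence of $t''$. By symmetry it suffices to show $\repair(\fairnessdef,D') \le \repair(\fairnessdef,D)+1$. I take an optimal repair $D_R$ of $D$ and use the same $D_R$ as a candidate repair for $D'$; it still satisfies $\fairnessdef$, because satisfaction depends only on $D_R$. It then remains to compare $|D'\symdif D_R|$ with $|D\symdif D_R|$ by case analysis on whether $D_R$ keeps $t'$ and whether it contains a copy of $t''$, using bag semantics.
\begin{itemize}
\item If $D_R$ deleted $t'$ and contains a copy of $t''$, the cost drops.
\item If $D_R$ deleted $t'$ and does not contain $t''$, one deletion is traded for another, so the cost is unchanged.
\item If $D_R$ keeps $t'$, then $t'$ now becomes an \emph{insertion} relative to $D'$, and $t''$ may additionally need to be deleted.
\end{itemize}

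The main obstacle is this last case, where the naive count gives $+2$, not $+1$. What I would try first is to modify the candidate instead of reusing $D_R$ verbatim. When $t''$ already has a copy in $D_R$, the count gives at most $+1$ directly. Otherwise I would check whether one of the two edits can be absorbed into a single operation. If that fails, I expect the honest bound for replacement neighbors under pure insert/delete counting to be $2$. In that case the claimed constant $1$ must rely on counting the replacement of $t'$ by $t''$ as a single modification, and I would state explicitly which convention is adopted, since this is exactly where the constant is determined.
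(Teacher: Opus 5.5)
Your range argument is fine. The singleton witness of cost $n-1$ even avoids the undefined conditional probabilities of the paper's ``delete everything'' witness.

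The gap is in Part~2: you never establish the constant $1$. That gap cannot be closed, because the claim is false under the paper's own definitions (same-size replacement neighbors, Definition~\ref{def:neighboring-databases}; cost equal to deleted plus inserted tuples, Definition~\ref{def:repair}). Here is a counterexample:
\begin{itemize}
\item Take $\dom(\protectedset)=\dom(\responseset)=\{0,1\}$ with no $\admissibleset$, and describe a database by its cell counts $(n_{00},n_{01},n_{10},n_{11})$. A $2\times 2$ table is independent iff $n_{00}n_{11}=n_{01}n_{10}$.
\item $D=(2,1,2,1)$ is independent ($2\cdot 1=1\cdot 2$), so $\repair(\fairnessdef,D)=0$.
\item Replacing one $(0,0)$ tuple by $(0,1)$ gives the neighbor $D'=(1,2,2,1)$, which is not independent ($1\cdot 1\neq 2\cdot 2$).
\item The eight databases reachable from $D'$ by one insertion or deletion are $(0,2,2,1)$, $(1,1,2,1)$, $(1,2,1,1)$, $(1,2,2,0)$, $(2,2,2,1)$, $(1,3,2,1)$, $(1,2,3,1)$, $(1,2,2,2)$. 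All of them violate the product condition.
\item Hence $\repair(\fairnessdef,D')=2$, since undoing the replacement costs $2$. Two neighbors thus differ by $2$.
\end{itemize}

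The paper's proof gets $1$ only by charging ``replace $t''$ with $t'$'' as a single operation. Under Definition~\ref{def:repair}, that replacement is one deletion plus one insertion. The set-level proof repeats the slip in the step $|D'\symdif D_R|\le|D\symdif D_R|+1$, even though $|D\symdif D'|=2$.

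So your suspicion is correct, and you should commit rather than hedge. Your reuse-of-$D_R$ argument gives $|D'\symdif D_R|\le|D\symdif D_R|+|D\symdif D'|=|D\symdif D_R|+2$. By symmetry, the sensitivity is at most $2$. This bound is tight by the example above and matches the constant the paper proves for $\repairsat$. The constant $1$ returns only if you change a definition: either count a modification as one operation in the repair cost, or use add/remove neighbors (where $|D\symdif D'|=1$).

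A small correction to your third case. For a fixed candidate, $|D\symdif D_R|\equiv|D|+|D_R|\pmod 2$, and $|D|=|D'|$, so the change in cost is always in $\{-2,0,2\}$. When $D_R$ keeps $t'$, the change is $0$ only if $D_R$ has strictly more copies of $t''$ than $D$. Otherwise it is $+2$, never $+1$.
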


\begin{proof}
\textbf{Range:}

It is immediate that $\range(\repair(\fairnessdef, D)) = [0,n]$ because $\repair(\fairnessdef, D) \geq 0$ from the Positivity property, and $\repair(\fairnessdef, D) \leq n$ because in the most extreme case, deleting all tuples in $D$ will create the empty database that satisfies $\fairnessdef$.

\textbf{Sensitivity:}

Given $\fairnessdef$ and $D \neighbor D'$ where without loss of generality $D' = D \setminus \{t'\} \cup \{t''\}$ for some tuples $t'$ and $t''$, we have $\repair(\fairnessdef,D) = |D \symdif D_R|$, where $D_R$ is the database that satisfies $\fairnessdef$ with the minimum number of deletions and additions.

Since $D'$ differs from $D$ by replacing a single tuple $t'$ with $t''$, we can obtain the repair process for $D'$ from the repair process for $D$ as follows:
\begin{itemize}
    \item If $t''$ interferes with $\fairnessdef$, replace $t''$ with $t'$ to obtain $D$, and then do the repair for $D$. In this case $\repair(\fairnessdef, D') = \repair(\fairnessdef, D) + 1$.
    \item Otherwise, $t''$ does not interfere with $\fairnessdef$, and so in this case $\repair(\fairnessdef, D') = \repair(\fairnessdef, D)$.
\end{itemize}

Therefore, $\sens_{\repair(\fairnessdef, D)} = \max_{D' \neighbor D} \left| \repair(\fairnessdef, D') - \repair(\fairnessdef, D) \right| \leq 1$, and the sensitivity of $\repair$ is at most $1$ for a single criterion $\fairnessdef$.
\end{proof}

\begin{proof}[Proof of \Cref{prop:repair-properties} Parts 3 and 4]
\textbf{Range:}

Again, $\range(\repair(\constraints, D)) = [0,n]$ because $\repair(\constraints, D) \geq 0$ from the Positivity property, and $\repair(\constraints, D) \leq n$ because in the most extreme case, deleting all tuples in $D$ will create the empty database that satisfies $\constraints$.

\textbf{Sensitivity:}

Let there be a database $D_R$ with the smallest number of removed and added tuples with respect to $D$ that satisfies all the fairness criteria $\constraints$. Then
{\scriptsize
\begin{align*}
\repair(\constraints, D') \leq |D' \symdif D_R| \leq |D \symdif D_R| + 1 = \repair(\constraints, D) + 1
\end{align*}
}

From another side, let there be a database ${D'}_R$ with the smallest number of removed and added tuples with respect to $D'$ that satisfies all the fairness criteria $\constraints$. Then
{\scriptsize
\begin{align*}
\repair(\constraints, D) \leq |D \symdif {D'}_R| \leq |D' \symdif {D'}_R| + 1 = \repair(\constraints, D') + 1
\end{align*}
}

Therefore
{\scriptsize
$$
\sens_{\repair(\constraints, D)} = \max_{D' \neighbor D} \left| \repair(\constraints, D') - \repair(\constraints, D) \right| \leq 1
$$
}
\end{proof}

\begin{proof}[Proof of \Cref{prop:contribution-properties} Parts 1 and 2]
\textbf{Positivity:}

Recall that by \Cref{def:contribution}, given a single fairness criterion $\fairnessdef$, $\contribution$ is an absolute value. Since its aggregate form is a sum $\contribution$ is always non-negative and its value is $0$ only if the databases satisfies all the given fairness criteria in $\constraints$. 

\textbf{Monotonicity:}

Monotonicity follows directly from positivity.
\end{proof}

We will first state and prove the range and the sensitivity bounds of $\contribution$ for a single fairness criterion, and then we will use them to prove the range and the sensitivity bounds for a set of criteria (Parts 3 and 4 of \Cref{prop:contribution-properties}).

\begin{lemma}[$\contribution$ is suitable for DP for a single fairness criterion]\label{prop:contribution-properties-single-criterion}
Given a database $D$ and a parameter $k \in \mathbb{N}$, the following holds:
\begin{enumerate}
    \item The range of $\contribution(\fairnessdef, D)$ is $[0, \min\{\frac{k}{4},2\}]$ for a single criterion $\fairnessdef$.
    \item The sensitivity of $\contribution(\fairnessdef, D)$ is at most $\frac{3k}{n}$ for an unconditional criterion $\fairnessdef$, and, assuming that
$|\{t \in D \mid t[\admissibleset] = \admissiblevalue\}| \ge 2$ for every
$\admissiblevalue \in \admissibleset$, is at most $\frac{7k}{n}$ for a conditional one.
\end{enumerate}
\end{lemma}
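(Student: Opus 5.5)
The plan is to reduce everything to per-tuple bounds on $\margdiff$ and then aggregate over the top-$k$ tuples. For the range, non-negativity is immediate because $\margdiff$ is an absolute value. For the upper bound $\frac{k}{4}$, I will show $\margdiff(\fairnessdef,D,t)\le \frac14$ for every $t$. In the unconditional case, write $p=\prob(\protectedset=t[\protectedset])$, $q=\prob(\responseset=t[\responseset])$ and $r=\prob(\protectedset=t[\protectedset],\responseset=t[\responseset])$. Then $r\le\min(p,q)$ and $r\ge p+q-1$, so $|r-pq|\le \max\{\min(p,q)-pq,\;pq-\max(0,p+q-1)\}$. Each of these expressions is at most $\frac14$, by the elementary bound $x(1-y)\le \frac14$ when $x\le y$. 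In the conditional case the same bound holds inside the conditional distribution, and the extra factor $\prob(\admissibleset=a)\le 1$ only shrinks it. Summing $k$ such terms gives $\frac k4$.

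For the bound $2$, I will group tuples by their value $(\protectedvalue,\responsevalue,\admissiblevalue)$. The sum over distinct cells of $\prob(\admissibleset=\admissiblevalue)\,|\prob(\protectedvalue,\responsevalue\mid\admissiblevalue)-\prob(\protectedvalue\mid\admissiblevalue)\prob(\responsevalue\mid\admissiblevalue)|$ equals $2\sum_{\admissiblevalue}\prob(\admissibleset=\admissiblevalue)\,\mathrm{TVD}_{\admissiblevalue}\le 2$, since each TVD is at most $1$ as in the proof of \Cref{prop:tvd-properties-single-criterion}. I must be careful here: the top-$k$ set may contain several tuples from the same cell. To justify the bound $2$, I would argue the top-$k$ contribution per cell is bounded by that cell's term, and I would flag this as a point to check.

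For sensitivity, let $D'=D\setminus\{t'\}\cup\{t''\}$. First I bound $|\margdiff(\fairnessdef,D',t)-\margdiff(\fairnessdef,D,t)|$ for every tuple $t$ common to both databases. In the unconditional case, the reverse triangle inequality (\Cref{lem:reverse-triangle-ineq}) reduces this to the change in $r$, which is at most $\frac1n$, plus the change in $pq$. For the latter I use $|p'q'-pq|\le |p'-p|q'+p|q'-q|\le \frac2n$, exactly as in the unconditional sensitivity proof for $\mutualproxytvd$. This gives $\frac3n$ per tuple.

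In the conditional case, I rewrite $\margdiff$ as $\bigl|\frac{n_{\protectedvalue\responsevalue\admissiblevalue}}{n}-\frac{n_{\protectedvalue\admissiblevalue}n_{\responsevalue\admissiblevalue}}{n\,n_{\admissiblevalue}}\bigr|$. The first term moves by at most $\frac1n$. For the second term, the quantity $\frac{n_{\protectedvalue\admissiblevalue}n_{\responsevalue\admissiblevalue}}{n_{\admissiblevalue}}$ changes by at most $3$ under a single removal and by at most $3$ under a single addition. This uses $n_{\protectedvalue\admissiblevalue},n_{\responsevalue\admissiblevalue}\le n_{\admissiblevalue}$, and the assumption $n_{\admissiblevalue}\ge 2$ ensures the denominator stays positive after removing $t'$. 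Together this gives $\frac{1+6}{n}=\frac7n$. Finally, I use that the top-$k$ sum is $k$-Lipschitz in the sup-norm: for two score vectors indexed by the same IDs, $|\mathrm{top}_k(u)-\mathrm{top}_k(v)|\le k\max_t|u_t-v_t|$. This yields $\frac{3k}{n}$ and $\frac{7k}{n}$.

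The main obstacle is the replaced tuple itself: $t'$ exists only in $D$ and $t''$ only in $D'$, so the ID-indexed comparison does not apply directly to that coordinate. I would first try to match $t'$ with $t''$, taking the same ID as in bounded DP. I would then argue that the top-$k$ of $D'$ can be compared to the top-$k$ of $D$ after exchanging one element, and that the common-tuple bounds absorb this exchange. If that fails, the fallback is to accept an additive term for that single element and check whether it fits within the stated constant.
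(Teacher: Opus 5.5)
\textbf{Two genuine gaps, at exactly the points you flagged, and neither can be closed because the claims there are false.}

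Your per-tuple ingredients are sound and match the paper. The Fr\'echet-type bound $|r-pq|\le\frac14$ gives the cap $\frac k4$, and each common tuple's score moves by at most $\frac3n$ in the unconditional case. In the conditional case, your count form $\bigl|\frac{n_{pya}}{n}-\frac{n_{pa}n_{ya}}{n\,n_a}\bigr|$ is a tidier route to $\frac7n$ than the paper's conditional-probability lemma with its factor $\frac{n_a}{n_a-1}\le 2$.

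\textbf{The cap $2$.} A cell of multiplicity $m$ enters the top-$k$ sum up to $\min\{m,k\}$ times but enters $2\,\mathrm{TVD}$ only once. So your argument that each cell's top-$k$ contribution is bounded by its single term fails. Concretely, for $\protectedset\independent\responseset$ take $\frac n2$ copies of $(p_1,y_1)$ and $\frac n2$ copies of $(p_2,y_2)$. Every tuple has $\margdiff=\frac12-\frac14=\frac14$, hence $\contribution(\fairnessdef,D)=\frac k4>2$ for $9\le k\le n$. The paper's proof errs at the same spot: it equates $\sum_{t\in D}|q_t-p_tr_t|$, a sum over tuples, with $2\,\mathrm{TVD}$, a sum over distinct cells. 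Only the cap $\frac k4$ is valid.

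\textbf{Sensitivity.} The exchange of $t'$ for $t''$ cannot be absorbed. Take $\dom(\protectedset)=\{p_1,p_2,p_3\}$ and $\dom(\responseset)=\{y_1,y_2,y_3\}$. Let $D$ contain one tuple $(p_1,y_1)$ and $s$ copies each of $(p_1,y_2)$, $(p_1,y_3)$, $(p_2,y_1)$, $(p_3,y_1)$, so $n=4s+1$. Let $D'$ replace $(p_1,y_1)$ by $(p_2,y_2)$.

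The lone tuple's score is $\bigl|\frac1n-\bigl(\frac{2s+1}{n}\bigr)^2\bigr|\to\frac14$. Every other score in $D$, and every score in $D'$, tends to at most $\frac18$. For $s=100$ and $k=1$ this gives $\contribution(\fairnessdef,D)\approx0.2488$ versus $\contribution(\fairnessdef,D')\approx0.1250$. That is a change of about $0.124$ against the claimed $\frac{3}{401}\approx0.0075$. For any fixed $k$ the change tends to $\frac18$ while $\frac{3k}{n}\to0$. So your fallback additive term is $\Theta(1)$, not $O(\frac kn)$.

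The paper hides this behind its zero-extension of $\margdiff$. Its step bounding the difference by $k\cdot\sens_{\margdiff}$ applies the $\frac3n$ bound to the coordinate $t'$, whose extended score drops from about $\frac14$ to $0$. Reading $\margdiff(\fairnessdef,D',t')$ as the formula instead breaks the optimality inequality for top-$k'$, since that value, $\bigl(\frac{2s}{n}\bigr)^2$, exceeds every score in $D'$.

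\textbf{What your ingredients do prove.} Write $\mathrm{top}_k(S)$ for the sum of the $k$ largest entries of a multiset. Apply $\mathrm{top}_k(S)\le\mathrm{top}_k(S\cup\{u\})\le\mathrm{top}_k(S)+u$ to the common-tuple scores, together with your $k$-Lipschitz bound on those scores. This yields a sensitivity bound of $\frac{3k}{n}+\frac14$ in the unconditional case and $\frac{7k}{n}+\frac14$ in the conditional case.
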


\begin{proof}[Proof of \Cref{prop:contribution-properties-single-criterion} Part 1]
By \Cref{def:contribution}, given a single fairness criterion $\fairnessdef$, $\contribution$ is an absolute value. Therefore, from the properties of the absolute value, $\contribution(\fairnessdef, D) \geq 0$.

For every $t=(\protectedvalue, \responsevalue, \admissiblevalue) \in top-k$, define:
{\scriptsize
\begin{align*}
&q_t =\prob(\protectedset=t[\protectedset],\responseset=t[\responseset]\mid \admissibleset=t[\admissibleset]),\\
&p_t =\prob(\protectedset=t[\protectedset]\mid \admissibleset=t[\admissibleset]),\\
&r_t =\prob(\responseset=t[\responseset]\mid \admissibleset=t[\admissibleset]).
\end{align*}
}
Then by the Fr\'echet inequalities~\cite{frechet1935generalisation}, it holds that $\max\{0, p_t + r_t - 1\} \leq q_t \leq \min\{p_t, r_t\}$, and so for the convex function $q_t \rightarrow |q_t - p_{t}r_{t}|$, we get that:
{\scriptsize
$$
|q_t - p_{t}r_{t}| \leq \max\{|\min\{p_t, r_t\} - p_{t}r_{t}|, |\max\{0, p_t + r_t - 1\} - p_{t}r_{t}|\}
$$
}

Without loss of generality, assume that $p_t \leq r_t$ (the other case is symmetric). Then:
{\scriptsize
$$
|\min\{p_t, r_t\} - p_{t}r_{t}| = |p_t - p_{t}r_{t}| = p_t(1 - r_t) \leq r_t(1 - r_t) \leq^{\text{max is at }r_t=\frac{1}{2}} \frac{1}{4} 
$$
}
For the other side there are two cases. If $p_t + r_t < 1$, then:
{\scriptsize
$$
|\max\{0, p_t + r_t - 1\} - p_{t}r_{t}| = p_{t}r_{t} \leq \left(\frac{p_t + r_t}{2}\right)^2 \leq \left(\frac{1}{2}\right)^2 = \frac{1}{4}
$$
}
Otherwise, $p_t + r_t \geq 1$, and:
{\scriptsize
\begin{align*}
|\max\{0, p_t + r_t - 1\} - p_{t}r_{t}| &= |(p_t + r_t - 1) - p_{t}r_{t}| \\
&= |-(1 - p_t)(1 - r_t)| \\
&= (1 - p_t)(1 - r_t) \\
&\leq^{\text{max is at }p_t=r_t=\frac{1}{2}} \frac{1}{4}    
\end{align*}
}

Therefore, we got that $|q_t - p_{t}r_{t}| \leq \max\{\frac{1}{4}, \frac{1}{4}\} = \frac{1}{4}$, and so
{\scriptsize
\begin{align*}
\contribution(\fairnessdef,D)
&= \sum_{t=(\protectedvalue,\responsevalue,\admissiblevalue) \in top-k} \margdiff(\protectedset \independent \responseset \mid \admissibleset,D,t) \\
&= \sum_{t=(\protectedvalue,\responsevalue,\admissiblevalue) \in top-k}
\prob(\admissibleset=t[\admissibleset])\bigl|\prob(\protectedset=t[\protectedset],\responseset=t[\responseset] \mid \admissibleset=t[\admissibleset]) \\
&\quad - \prob(\protectedset=t[\protectedset] \mid \admissibleset=t[\admissibleset])\,\prob(\responseset=t[\responseset] \mid \admissibleset=t[\admissibleset])\bigr| \\
&\leq \sum_{t=(\protectedvalue,\responsevalue,\admissiblevalue) \in top-k}
1 \cdot \bigl|\prob(\protectedset=t[\protectedset],\responseset=t[\responseset] \mid \admissibleset=t[\admissibleset]) \\
&\quad - \prob(\protectedset=t[\protectedset] \mid \admissibleset=t[\admissibleset])\,\prob(\responseset=t[\responseset] \mid \admissibleset=t[\admissibleset])\bigr| \\
&= \sum_{t\in top-k} |q_t - p_{t}r_{t}| \\
&\leq \sum_{t\in top-k} \frac{1}{4} \\
&\leq \frac{k}{4}
\end{align*}
}

Additionally, from the proof of \Cref{prop:tvd-properties-single-criterion}, it holds that
{\scriptsize
\begin{align*}
\contribution(\fairnessdef,D)
&\leq \sum_{t \in top-k} |q_t - p_{t}r_{t}| \\
&\leq \sum_{t \in D} |q_t - p_{t}r_{t}| \\
&= 2 \cdot \mathrm{TVD}(q_t, p_{t}r_{t}) \\
&\leq 2 \cdot 1 \\
&\leq 2
\end{align*}
}

Therefore:
{\scriptsize
$$
\contribution(\fairnessdef,D) \leq \min\{\frac{k}{4},2\}
$$
}
\end{proof}

Now we will state and prove a few lemmas that we will use in a later proof of \Cref{prop:contribution-properties-single-criterion} Part 2.

\begin{lemma}[Sensitivity of the unconditional empirical probabilities]\label{lem:unconditional-empirical-probabilities-sensitivity}
Given a database $D$ of size $n$ such that the schema of $D$ is a superset of $\protectedset \cup \responseset$, the sensitivity of the {\em unconditional empirical probabilities} $\prob(\protectedset,\responseset),\prob(\protectedset),\prob(\responseset)$ in $D$ is at most $\frac{1}{n}$.
\end{lemma}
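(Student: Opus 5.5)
The plan is to argue directly from the definition of the empirical marginals, using the replacement notion of neighboring databases from \Cref{def:neighboring-databases}. Since $|D| = |D'| = n$, the denominators of all unconditional empirical probabilities are identical for $D$ and $D'$. The question therefore reduces to how much a single count can change.

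Fix neighbors $D \neighbor D'$. As in the proof of \Cref{prop:tvd-properties-single-criterion}, assume without loss of generality that $D'$ is obtained from $D$ by removing one occurrence of a tuple $t'$ and adding one occurrence of a tuple $t''$.

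For any fixed value pair $(\protectedvalue,\responsevalue)$, write $c_D(\protectedvalue,\responsevalue) = |\{t \in D \mid t[\protectedset]=\protectedvalue, t[\responseset]=\responsevalue\}|$. Then
\[
\probneighbor(\protectedset=\protectedvalue,\responseset=\responsevalue) - \prob(\protectedset=\protectedvalue,\responseset=\responsevalue) = \frac{c_{D'}(\protectedvalue,\responsevalue) - c_D(\protectedvalue,\responsevalue)}{n}.
\]
The count difference equals $\mathbb{1}[t''\text{ matches}] - \mathbb{1}[t'\text{ matches}]$, which lies in $\{-1,0,1\}$. If both tuples match the pair, the two contributions cancel and the difference is $0$. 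Hence each entry of $\prob(\protectedset,\responseset)$ changes by at most $\frac{1}{n}$.

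The same argument applies verbatim to $\prob(\protectedset=\protectedvalue)$ and $\prob(\responseset=\responsevalue)$, since each is again a count over $n$, and that count changes by at most one under replacement. I read ``sensitivity'' here in the entrywise (per-value, $\ell_\infty$) sense, which is exactly how the bound was used earlier in the unconditional part of \Cref{prop:tvd-properties-single-criterion}. In the $\ell_1$ sense, at most two entries change, giving a total of $\frac{2}{n}$, consistent with the $\frac{1}{2}(\frac{1}{n}+\frac{1}{n})$ computations there.

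The only step needing care is the bookkeeping when $t'$ and $t''$ agree on some attributes. In that case the removal and the addition offset each other, so the bound still holds; it is never exceeded. I do not expect a real obstacle: the key point is that bounded DP keeps $n$ fixed, so no denominator term appears. This is unlike the conditional case, where the group sizes $n_{\admissiblevalue}$ vary.
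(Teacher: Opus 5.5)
Your proof is correct and takes essentially the same approach as the paper: with $|D|=|D'|=n$ fixed, each unconditional empirical probability is a count over the same $n$, and that count changes by at most one between neighbors, so each entry moves by at most $\frac{1}{n}$. Your handling of the replacement is actually more careful than the paper's, which only tracks the added tuple; in yours, the removal of $t'$ and the addition of $t''$ either cancel or change the count by exactly one.
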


\begin{proof}
We will prove this claim for the \textbf{joint} empirical probabilities. The same calculation can be done for the marginal empirical probabilities.

Let there be $D \neighbor D'$ neighboring databases that differ by one tuple $t'=(\protectedvalue_{t'},\responsevalue_{t'})$. Assume without loss of generality that $t' \in D'$ and $t' \notin D$.

For any $(\protectedvalue,\responsevalue)$, it holds that
{\scriptsize
$$\prob(\protectedset=\protectedvalue,\responseset=\responsevalue) = \frac{|\{t \in D \mid t[\protectedset]=\protectedvalue,t[\responseset]=\responsevalue\}|}{n}$$
}

For $\probneighbor$ we divide into two cases. For any $(\protectedvalue,\responsevalue) \neq (\protectedvalue_{t'},\responsevalue_{t'})$, it holds that
{\scriptsize
\begin{align*}
\probneighbor(\protectedset=\protectedvalue,\responseset=\responsevalue) &= \frac{|\{t \in D' \mid t[\protectedset]=\protectedvalue,t[\responseset]=\responsevalue\}|}{n} \\
&= \frac{|\{t \in D \mid t[\protectedset]=\protectedvalue,t[\responseset]=\responsevalue\}|}{n} \\
&\leq \frac{|\{t \in D \mid t[\protectedset]=\protectedvalue,t[\responseset]=\responsevalue\}| + 1}{n}    
\end{align*}
}

Otherwise, for $(\protectedvalue,\responsevalue) = (\protectedvalue_{t'},\responsevalue_{t'})$, and
{\scriptsize
\begin{align*}
\prob'(\protectedset=\protectedvalue_{t'},\responseset=\responsevalue_{t'}) &= \frac{|\{t \in D' \mid t[\protectedset] = \protectedvalue_{t'},t[\responseset] = \responsevalue_{t'}\}|}{n} \\
&= \frac{|\{t \in D \mid t[\protectedset] = \protectedvalue_{t'},t[\responseset] = \responsevalue_{t'}\}| + 1}{n}    
\end{align*}
}

Therefore, for any $(\protectedvalue,\responsevalue)$, we obtain that
{\scriptsize
\begin{align*}
&\left|\probneighbor(\protectedset=\protectedvalue,\responseset=\responsevalue) - \prob(\protectedset=\protectedvalue,\responseset=\responsevalue)\right| \\
&\leq \left|\frac{|\{t \in D \mid t[\protectedset]=\protectedvalue,t[\responseset]=\responsevalue\}| + 1}{n} - \frac{|\{t \in D \mid t[\protectedset]=\protectedvalue,t[\responseset]=\responsevalue\}|}{n}\right| \\
&= \frac{1}{n}
\end{align*}
}

And so
{\scriptsize
$$\sens_{\prob} \leq \frac{1}{n}$$
}
\end{proof}

Now we will state and prove a few lemmas that we will use in a later proof of \Cref{prop:contribution-properties-single-criterion} Part 2.

\begin{lemma}[Sensitivity of the conditional empirical probabilities]\label{lem:conditional-empirical-probabilities-sensitivity}
Given a database $D$ of size $n$ such that the schema of $D$ is a superset of
$\protectedset \cup \responseset \cup \admissibleset$, and assuming that
$|\{t \in D \mid t[\admissibleset] = \admissiblevalue\}| \ge 2$ for every
$\admissiblevalue \in \admissibleset$, the sensitivity of the
{\em conditional empirical probabilities}
$\prob(\protectedset,\responseset \mid \admissibleset)$,
$\prob(\protectedset \mid \admissibleset)$,
$\prob(\responseset \mid \admissibleset)$ in $D$ is at most
$\max_{\admissiblevalue}\frac{1}{|\{t \in D \mid t[\admissibleset] = \admissiblevalue\}| - 1}$.
\end{lemma}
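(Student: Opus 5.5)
The plan is a direct case analysis on how replacing one tuple changes the numerator and denominator of each conditional empirical probability. I treat the joint conditional $\prob(\protectedset,\responseset \mid \admissibleset)$ explicitly. The marginal conditionals $\prob(\protectedset \mid \admissibleset)$ and $\prob(\responseset \mid \admissibleset)$ follow by the identical argument, since they have the same ``count over group size'' form.

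\emph{Setup.} Fix $D \neighbor D'$ with $D' = D \setminus \{t'\} \cup \{t''\}$, where $t' = (\protectedvalue_{t'},\responsevalue_{t'},\admissiblevalue_{t'})$ and $t''=(\protectedvalue_{t''},\responsevalue_{t''},\admissiblevalue_{t''})$. For each $\admissiblevalue$, write $n_{\admissiblevalue} = |\{t \in D \mid t[\admissibleset]=\admissiblevalue\}|$. Write $c = |\{t \in D \mid t[\protectedset]=\protectedvalue, t[\responseset]=\responsevalue, t[\admissibleset]=\admissiblevalue\}|$, so that $\prob(\protectedset=\protectedvalue,\responseset=\responsevalue \mid \admissibleset=\admissiblevalue) = c/n_{\admissiblevalue}$ with $0 \le c \le n_{\admissiblevalue}$.

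\emph{Case analysis on $\admissiblevalue$.}
\begin{enumerate}
\item If $\admissiblevalue \notin \{\admissiblevalue_{t'},\admissiblevalue_{t''}\}$, neither the numerator nor the denominator changes, so the difference is $0$.
\item If $\admissiblevalue = \admissiblevalue_{t'} = \admissiblevalue_{t''}$, the denominator stays $n_{\admissiblevalue}$ and the numerator changes by at most $1$. Hence the difference is at most $1/n_{\admissiblevalue}$.
\item If $\admissiblevalue = \admissiblevalue_{t'} \neq \admissiblevalue_{t''}$, the denominator becomes $n_{\admissiblevalue}-1$, which is at least $1$ by the assumption $n_{\admissiblevalue}\ge 2$. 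The numerator becomes $c$ or $c-1$. The two resulting differences are
\begin{align*}
\frac{c}{n_{\admissiblevalue}-1}-\frac{c}{n_{\admissiblevalue}} &= \frac{c}{n_{\admissiblevalue}(n_{\admissiblevalue}-1)},\\
\frac{c}{n_{\admissiblevalue}}-\frac{c-1}{n_{\admissiblevalue}-1} &= \frac{n_{\admissiblevalue}-c}{n_{\admissiblevalue}(n_{\admissiblevalue}-1)}.
\end{align*}
Both are bounded in absolute value by $1/(n_{\admissiblevalue}-1)$, because $0\le c\le n_{\admissiblevalue}$.
\item If $\admissiblevalue = \admissiblevalue_{t''} \neq \admissiblevalue_{t'}$, the denominator becomes $n_{\admissiblevalue}+1$ and the numerator becomes $c$ or $c+1$. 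The same computation gives a difference of at most $1/(n_{\admissiblevalue}+1) \le 1/(n_{\admissiblevalue}-1)$.
\end{enumerate}
Taking the maximum over all cases and all $\admissiblevalue$ yields the claimed bound $\max_{\admissiblevalue} \frac{1}{n_{\admissiblevalue}-1}$.

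\emph{Main obstacle.} The delicate point is case 3 and the boundary behaviour, which is exactly where the hypothesis $n_{\admissiblevalue}\ge 2$ is needed so that the group does not vanish in $D'$. A related issue arises in case 4 when $\admissiblevalue_{t''}$ has no tuples in $D$. Then the conditional probability in $D$ is undefined, and a new group of size $1$ appears in $D'$. I would handle this by applying the multiplicity assumption to both databases, as the statement of \Cref{prop:contribution-properties} implicitly does. Applying the argument symmetrically with the roles of $D$ and $D'$ swapped then keeps every denominator at least $1$, and the bound is stated in terms of the group sizes.
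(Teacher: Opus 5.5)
Your proof is correct and takes the same route as the paper: a case split on whether the conditioning value equals $a_{t'}$, $a_{t''}$, or neither, tracking how the group count and the cell count change under the replacement. Your version is more careful than the paper's in four places. You treat $a_{t'}=a_{t''}$ separately. You let only the matching cell's numerator change, whereas the paper writes every cell in the affected group as shifting by one. You check the bounds algebraically. You also flag the undefined case where $a_{t''}$ is absent from $D$, which the paper's argument silently skips.
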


\begin{proof}
We will prove this claim for the \textbf{joint} empirical probabilities.
The same calculation can be done for the marginal empirical probabilities.

Let there be $D \neighbor D'$ neighboring databases such that they differ
in one tuple by replacement. Without loss of generality, assume that $D'$
contains one more occurrence of $t''$ and one less occurrence of $t'$ than $D$.

There are three cases. For any $(\protectedvalue,\responsevalue,\admissiblevalue)$
such that $a \notin \{\admissiblevalue_{t'},\admissiblevalue_{t''}\}$, it holds that
{\scriptsize
\begin{align}
&\probneighbor(\protectedset=\protectedvalue,\responseset=\responsevalue \mid \admissibleset=\admissiblevalue)
= \frac{|\{t \in D' \mid t[\protectedset] = \protectedvalue,
t[\responseset]=\responsevalue,t[\admissibleset]=\admissiblevalue\}|}
{|\{t \in D' \mid t[\admissibleset]=\admissiblevalue\}|} \\
&= \frac{|\{t \in D \mid t[\protectedset]=\protectedvalue,
t[\responseset]=\responsevalue,t[\admissibleset]=\admissiblevalue\}|}
{|\{t \in D \mid t[\admissibleset]=\admissiblevalue\}|}
= \prob(\protectedset=\protectedvalue,\responseset=\responsevalue \mid \admissibleset=\admissiblevalue)
\end{align}
}

For any $(\protectedvalue,\responsevalue,\admissiblevalue)$ such that
$a = \admissiblevalue_{t'}$, this bucket of tuples loses one tuple. Therefore
{\scriptsize
$$
\probneighbor(\protectedset=\protectedvalue,\responseset=\responsevalue \mid \admissibleset=\admissiblevalue)
=
\frac{|\{t \in D \mid t[\protectedset] = \protectedvalue,
t[\responseset]=\responsevalue,t[\admissibleset]=\admissiblevalue_{t'}\}| - 1}
{|\{t \in D \mid t[\admissibleset]=\admissiblevalue_{t'}\}| - 1}
$$
}

Similarly, for any $(\protectedvalue,\responsevalue,\admissiblevalue)$ such that
$a = \admissiblevalue_{t''}$, it holds that
{\scriptsize
$$
\probneighbor(\protectedset=\protectedvalue,\responseset=\responsevalue \mid \admissibleset=\admissiblevalue)
=
\frac{|\{t \in D \mid t[\protectedset] = \protectedvalue,
t[\responseset]=\responsevalue,t[\admissibleset]=\admissiblevalue_{t''}\}| + 1}
{|\{t \in D \mid t[\admissibleset]=\admissiblevalue_{t''}\}| + 1}
$$
}

And so, for any $(\protectedvalue,\responsevalue,\admissiblevalue)$, we obtain that
{\scriptsize
\begin{align}
&\left|\probneighbor(\protectedset=\protectedvalue,\responseset=\responsevalue \mid \admissibleset=\admissiblevalue)
- \prob(\protectedset=\protectedvalue,\responseset=\responsevalue \mid \admissibleset=\admissiblevalue)\right| \\
&\leq 
\begin{cases}
\dfrac{1}{|\{t \in D \mid t[\admissibleset]=\admissiblevalue_{t'}\}| - 1},
& \text{if } \admissiblevalue = \admissiblevalue_{t'}, \\[10pt]
\dfrac{1}{|\{t \in D \mid t[\admissibleset]=\admissiblevalue_{t''}\}| + 1},
& \text{if } \admissiblevalue = \admissiblevalue_{t''}, \\[10pt]
0, & \text{otherwise.}
\end{cases}
\end{align}
}

Therefore
{\scriptsize
$$
\sens_{\prob} \leq \max_{\admissiblevalue}\frac{1}{|\{t \in D \mid t[\admissibleset] = \admissiblevalue\}| - 1}
$$
}
\end{proof}

\begin{lemma}\label{lem:marginal-difference-sensitivity}
Given a database $D$ of size $n$ such that the schema of $D$ is a superset of $\protectedset \cup \responseset \cup \admissibleset$, the sensitivity of the {\em marginal difference} is at most $\frac{3}{n}$ in the unconditional case and, assuming that
$|\{t \in D \mid t[\admissibleset] = \admissiblevalue\}| \ge 2$ for every
$\admissiblevalue \in \admissibleset$, is at most $\frac{7}{n}$ in the conditional case.
\end{lemma}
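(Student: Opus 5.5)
The plan is to fix a tuple value $t=(\protectedvalue,\responsevalue,\admissiblevalue)$ and bound $|\margdiff(\fairnessdef,D',t)-\margdiff(\fairnessdef,D,t)|$ directly for neighbors $D\neighbor D'$. Throughout, $D'$ has one more occurrence of $t''$ and one less occurrence of $t'$ than $D$. The tools are the reverse triangle inequality (\Cref{lem:reverse-triangle-ineq}) and the sensitivity bounds for empirical probabilities (\Cref{lem:unconditional-empirical-probabilities-sensitivity,lem:conditional-empirical-probabilities-sensitivity}).

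Write $q,p,r$ for the (conditional) joint and marginal probabilities of $t$'s values in $D$, and $q',p',r'$ for their counterparts in $D'$.

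\emph{Unconditional case.} Here $\margdiff=|q-pr|$. By \Cref{lem:reverse-triangle-ineq},
$$\bigl||q'-p'r'|-|q-pr|\bigr|\le |q'-q|+|p'r'-pr|.$$
The first term is at most $\frac1n$ by \Cref{lem:unconditional-empirical-probabilities-sensitivity}. For the second, add and subtract $pr'$ to get $|p'r'-pr|\le |p'-p|\,r'+p\,|r'-r|$. Using $r',p\le 1$ and the same lemma, this gives $\frac2n$. The total is therefore $\frac3n$.

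\emph{Conditional case.} Let $w=\prob(\admissibleset=\admissiblevalue)$ and $g=|q-pr|$, so $\margdiff=w\,g$. Splitting as before,
$$|w'g'-wg|\le |w'-w|\,g'+w\,|g'-g|.$$
If $\admissiblevalue\notin\{\admissiblevalue_{t'},\admissiblevalue_{t''}\}$, then $q,p,r$ are unchanged and $w'=w$, so only the affected buckets matter.

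\emph{First term.} We have $|w'-w|\le\frac1n$ by \Cref{lem:unconditional-empirical-probabilities-sensitivity}. Also $g'\le\frac14\le1$ by the Fr\'echet argument in the proof of \Cref{prop:contribution-properties-single-criterion} Part 1. So this term contributes at most $\frac1n$.

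\emph{Second term.} Repeat the unconditional computation inside the bucket $\admissiblevalue$, but use \Cref{lem:conditional-empirical-probabilities-sensitivity} in place of the unconditional lemma. Let $n_{\admissiblevalue}$ be the bucket size. Then each of $q,p,r$ moves by at most $\frac{1}{n_{\admissiblevalue}-1}$, so $|g'-g|\le\frac{3}{n_{\admissiblevalue}-1}$. Multiplying by $w=\frac{n_{\admissiblevalue}}{n}$ gives $\frac{3}{n}\cdot\frac{n_{\admissiblevalue}}{n_{\admissiblevalue}-1}$. The assumption $n_{\admissiblevalue}\ge2$ gives $\frac{n_{\admissiblevalue}}{n_{\admissiblevalue}-1}\le2$, hence a bound of $\frac6n$.

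Adding the two terms yields $\frac7n$.

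The step needing the most care is the second term. The weight $w$ must be paired with the bucket-local sensitivity so that the factor $n_{\admissiblevalue}$ cancels, and the multiplicity assumption is exactly what keeps the ratio $\frac{n_{\admissiblevalue}}{n_{\admissiblevalue}-1}$ bounded. One must also check both affected buckets: the shrinking bucket gives the worse denominator $n_{\admissiblevalue}-1$, while the growing bucket gives $n_{\admissiblevalue}+1$ and hence a smaller bound. A single tuple $t$ lies in only one bucket, so the bound is the maximum over the two buckets, not their sum.
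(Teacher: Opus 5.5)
Your proposal is correct and follows essentially the same route as the paper. In the unconditional case you use the reverse triangle inequality plus the product decomposition to get $\frac{3}{n}$. In the conditional case you split into $|w'-w|\,g' + w\,|g'-g|$ and pair the bucket-local bound $\frac{3}{n_{\admissiblevalue}-1}$ with the weight $\frac{n_{\admissiblevalue}}{n}$ to get $\frac{6}{n}$. The only differences are cosmetic: the paper removes the outer absolute value before splitting, and it keeps the sharper bound $g'\le\frac14$, giving $\frac{1}{4n}+\frac{6}{n}\le\frac{7}{n}$.
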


\begin{proof}

Let there be $D \neighbor D'$ neighboring databases such that they differ in one tuple by replacement. Without loss of generality, assume that $D'$ contains one more occurrence of $t''$ and one less occurrence of $t'$ than $D$.

\textbf{Unconditional case:}

For any $(\protectedvalue,\responsevalue)$, it holds that
{\scriptsize
\begin{align}
&\bigl|\margdiff(\protectedset \independent \responseset, D', (\protectedvalue, \responsevalue)) 
   - \margdiff(\protectedset \independent \responseset, D, (\protectedvalue, \responsevalue))\bigr| \\ 
&= \Bigl|\,
    \bigl|\probneighbor(\protectedset=\protectedvalue,\responseset=\responsevalue) 
       - \probneighbor(\protectedset=\protectedvalue)\,
         \probneighbor(\responseset=\responsevalue)\bigr| \nonumber\\
&\quad - \bigl|\prob(\protectedset=\protectedvalue,\responseset=\responsevalue) 
       - \prob(\protectedset=\protectedvalue)\,
         \prob(\responseset=\responsevalue)\bigr|
   \,\Bigr| \label{eq:4_1}\\
&\leq \bigl|\,
    \bigl(\probneighbor(\protectedset=\protectedvalue,\responseset=\responsevalue) 
       - \probneighbor(\protectedset=\protectedvalue)\,
         \probneighbor(\responseset=\responsevalue)\bigr) \nonumber\\
&\quad - \bigl(\prob(\protectedset=\protectedvalue,\responseset=\responsevalue) 
       - \prob(\protectedset=\protectedvalue)\,
         \prob(\responseset=\responsevalue)\bigr)
   \,\bigr| \label{eq:4_2}\\
&= \bigl|
   \bigl(\probneighbor(\protectedset=\protectedvalue,\responseset=\responsevalue) 
       - \prob(\protectedset=\protectedvalue,\responseset=\responsevalue)\bigr) \nonumber\\
&\quad - \bigl(\probneighbor(\protectedset=\protectedvalue)\,
                 \probneighbor(\responseset=\responsevalue) 
       - \prob(\protectedset=\protectedvalue)\,
         \prob(\responseset=\responsevalue)\bigr)
   \bigr| \label{eq:4_3} \\
&= \bigl|
   \bigl[\probneighbor(\protectedset=\protectedvalue,\responseset=\responsevalue) 
        - \prob(\protectedset=\protectedvalue,\responseset=\responsevalue)\bigr] \nonumber\\
&\quad - \bigl[\probneighbor(\protectedset=\protectedvalue)\,
                 (\probneighbor(\responseset=\responsevalue) 
                  - \prob(\responseset=\responsevalue)) \nonumber\\
&\quad\quad 
         + \prob(\responseset=\responsevalue)\,
           (\probneighbor(\protectedset=\protectedvalue) 
            - \prob(\protectedset=\protectedvalue))\bigr]
   \bigr| \label{eq:4_4} \\
&\leq \bigl|\probneighbor(\protectedset=\protectedvalue,\responseset=\responsevalue) 
           - \prob(\protectedset=\protectedvalue,\responseset=\responsevalue)\bigr| \nonumber\\
&\quad + \bigl|\probneighbor(\protectedset=\protectedvalue)\,
                 (\probneighbor(\responseset=\responsevalue) 
                  - \prob(\responseset=\responsevalue))\bigr| \nonumber\\
&\quad + \bigl|\prob(\responseset=\responsevalue)\,
                 (\probneighbor(\protectedset=\protectedvalue) 
                  - \prob(\protectedset=\protectedvalue))\bigr|
   \label{eq:4_5} \\
&\leq \tfrac{1}{n} 
   + \bigl|\probneighbor(\protectedset=\protectedvalue)\bigr| \cdot \tfrac{1}{n} 
   + \bigl|\prob(\responseset=\responsevalue)\bigr| \cdot \tfrac{1}{n} 
   \label{eq:4_6} \\
&\leq \tfrac{3}{n}, \label{eq:4_7}
\end{align}
}
where the transitions are due to:
\begin{itemize}
  \item $\eqref{eq:4_1} \rightarrow \eqref{eq:4_2}$: \Cref{lem:reverse-triangle-ineq}.
  \item $\eqref{eq:4_3} \rightarrow \eqref{eq:4_4}$: $ab - cd = a(b-d) + d(a-c)$.
  \item $\eqref{eq:4_4} \rightarrow \eqref{eq:4_5}$: triangle inequality.
  \item $\eqref{eq:4_5} \rightarrow \eqref{eq:4_6}$: \Cref{lem:unconditional-empirical-probabilities-sensitivity}.
  \item $\eqref{eq:4_6} \rightarrow \eqref{eq:4_7}$:
        $\probneighbor(\protectedset=\protectedvalue)\le 1$, 
        $\prob(\responseset=\responsevalue)\le 1$.
\end{itemize}

And so
{\scriptsize
\setlength{\abovedisplayskip}{4pt}
\setlength{\belowdisplayskip}{4pt}
$$\sens_{\margdiff(\fairnessdef,D,t)} \leq \frac{3}{n}$$
}

\textbf{Conditional case:}

By the definition of conditional $\margdiff$ it holds that
{\scriptsize
\setlength{\abovedisplayskip}{4pt}
\setlength{\belowdisplayskip}{4pt}
\begin{align}
&\left|\margdiff(\protectedset \independent \responseset \mid \admissibleset,D',t)
      - \margdiff(\protectedset \independent \responseset \mid \admissibleset,D,t)\right| \\
&= \Bigl|\,
\probneighbor(\admissibleset=\admissiblevalue)\,
\bigl|\probneighbor(\protectedset=\protectedvalue,\responseset=\responsevalue \mid \admissibleset=\admissiblevalue)
     - \probneighbor(\protectedset=\protectedvalue \mid \admissibleset=\admissiblevalue)\,
       \probneighbor(\responseset=\responsevalue \mid \admissibleset=\admissiblevalue)\bigr| \nonumber\\
&\quad - \prob(\admissibleset=\admissiblevalue)\,
\bigl|\prob(\protectedset=\protectedvalue,\responseset=\responsevalue \mid \admissibleset=\admissiblevalue)
     - \prob(\protectedset=\protectedvalue \mid \admissibleset=\admissiblevalue)\,
       \prob(\responseset=\responsevalue \mid \admissibleset=\admissiblevalue)\bigr|
\,\Bigr| \label{eq:5_1}\\
&\leq \bigl|\,
\probneighbor(\admissibleset=\admissiblevalue)\,
\bigl(\probneighbor(\protectedset=\protectedvalue,\responseset=\responsevalue \mid \admissibleset=\admissiblevalue)
     - \probneighbor(\protectedset=\protectedvalue \mid \admissibleset=\admissiblevalue)\,
       \probneighbor(\responseset=\responsevalue \mid \admissibleset=\admissiblevalue)\bigr) \nonumber\\
&\quad - \prob(\admissibleset=\admissiblevalue)\,
\bigl(\prob(\protectedset=\protectedvalue,\responseset=\responsevalue \mid \admissibleset=\admissiblevalue)
     - \prob(\protectedset=\protectedvalue \mid \admissibleset=\admissiblevalue)\,
       \prob(\responseset=\responsevalue \mid \admissibleset=\admissiblevalue)\bigr)
\,\bigr| \label{eq:5_2}\\
&\leq \bigl|\probneighbor(\admissibleset=\admissiblevalue) - \prob(\admissibleset=\admissiblevalue)\bigr|
\cdot
\bigl|
\probneighbor(\protectedset=\protectedvalue,\responseset=\responsevalue \mid \admissibleset=\admissiblevalue) \nonumber\\
&\quad - \probneighbor(\protectedset=\protectedvalue \mid \admissibleset=\admissiblevalue)\,
        \probneighbor(\responseset=\responsevalue \mid \admissibleset=\admissiblevalue)
\bigr| \nonumber\\
&\quad + \prob(\admissibleset=\admissiblevalue)\cdot
\bigl|
\bigl(\probneighbor(\protectedset=\protectedvalue,\responseset=\responsevalue \mid \admissibleset=\admissiblevalue)
     - \probneighbor(\protectedset=\protectedvalue \mid \admissibleset=\admissiblevalue)\,
       \probneighbor(\responseset=\responsevalue \mid \admissibleset=\admissiblevalue)\bigr) \nonumber\\
&\quad\quad -
\bigl(\prob(\protectedset=\protectedvalue,\responseset=\responsevalue \mid \admissibleset=\admissiblevalue)
     - \prob(\protectedset=\protectedvalue \mid \admissibleset=\admissiblevalue)\,
       \prob(\responseset=\responsevalue \mid \admissibleset=\admissiblevalue)\bigr)
\bigr| \label{eq:5_3},
\end{align}
}
where the transition from \eqref{eq:5_1} to \eqref{eq:5_2} is due to \Cref{lem:reverse-triangle-ineq}, and the transition from \eqref{eq:5_2} to \eqref{eq:5_3} is due to the fact that $|a'b' - ab| \leq |a' - a|\,|b'| + |a|\,|b' - b|$.

We will start by bounding the first summand. From \Cref{lem:unconditional-empirical-probabilities-sensitivity}, $\sens_{\prob(\admissibleset)} = \frac{1}{n}$. In addition, from the proof of \Cref{prop:contribution-properties-single-criterion} Part 1,
for every $(\protectedvalue,\responsevalue,\admissiblevalue)$,
{\scriptsize
$$
\bigl|\prob(\protectedset=\protectedvalue,\responseset=\responsevalue \mid \admissibleset=\admissiblevalue)
- \prob(\protectedset=\protectedvalue \mid \admissibleset=\admissiblevalue)\,
  \prob(\responseset=\responsevalue \mid \admissibleset=\admissiblevalue)\bigr|
\leq \frac{1}{4}
$$
}
Therefore
{\scriptsize
\begin{align*}
&\bigl|\probneighbor(\admissibleset=\admissiblevalue) - \prob(\admissibleset=\admissiblevalue)\bigr|
\cdot
\bigl|
\probneighbor(\protectedset=\protectedvalue,\responseset=\responsevalue \mid \admissibleset=\admissiblevalue) \\
&\quad - \probneighbor(\protectedset=\protectedvalue \mid \admissibleset=\admissiblevalue)\,
        \probneighbor(\responseset=\responsevalue \mid \admissibleset=\admissiblevalue)
\bigr| \\
&\leq \frac{1}{n} \cdot \frac{1}{4} \\
&= \frac{1}{4n}
\end{align*}
}

Now we will bound the second summand. Denote:
{\scriptsize
\begin{align*}
p' &:= \probneighbor(\protectedset=\protectedvalue,\responseset=\responsevalue \mid \admissibleset=\admissiblevalue), \\
q' &:= \probneighbor(\protectedset=\protectedvalue \mid \admissibleset=\admissiblevalue), \\
r' &:= \probneighbor(\responseset=\responsevalue \mid \admissibleset=\admissiblevalue), \\
p  &:= \prob(\protectedset=\protectedvalue,\responseset=\responsevalue \mid \admissibleset=\admissiblevalue), \\
q  &:= \prob(\protectedset=\protectedvalue \mid \admissibleset=\admissiblevalue), \\
r  &:= \prob(\responseset=\responsevalue \mid \admissibleset=\admissiblevalue).
\end{align*}
}
Then it holds that
{\scriptsize
\begin{align}
&\prob(\admissibleset=\admissiblevalue)\cdot
\Bigl|
\bigl(\probneighbor(\protectedset=\protectedvalue,\responseset=\responsevalue \mid \admissibleset=\admissiblevalue)
     - \probneighbor(\protectedset=\protectedvalue \mid \admissibleset=\admissiblevalue)\,
       \probneighbor(\responseset=\responsevalue \mid \admissibleset=\admissiblevalue)\bigr) \\
&\quad\quad -
\bigl(\prob(\protectedset=\protectedvalue,\responseset=\responsevalue \mid \admissibleset=\admissiblevalue)
     - \prob(\protectedset=\protectedvalue \mid \admissibleset=\admissiblevalue)\,
       \prob(\responseset=\responsevalue \mid \admissibleset=\admissiblevalue)\bigr)
\Bigr| \\
&= \prob(\admissibleset=\admissiblevalue) \cdot \Bigl| (p' - q'r') - (p - qr) \Bigr| \\
&=
\prob(\admissibleset=\admissiblevalue) \cdot \Bigl| (p' - p) - (q'r' - qr) \Bigr| \label{eq:6_1}\\
&\leq \prob(\admissibleset=\admissiblevalue) \cdot \left(|p' - p| + |q'r' - qr|\right) \label{eq:6_2}\\
&\leq \prob(\admissibleset=\admissiblevalue) \cdot \left(|p' - p| + |q'(r' - r) + r(q' - q)|\right) \label{eq:6_3}\\
&\leq \prob(\admissibleset=\admissiblevalue) \cdot \left(|p' - p| + |q'(r' - r)| + |r(q' - q)|\right) \label{eq:6_4}\\
&\leq \prob(\admissibleset=\admissiblevalue) \cdot \left(|p' - p| + |r' - r| + |q' - q|\right) \label{eq:6_5}\\
&\leq \prob(\admissibleset=\admissiblevalue) \cdot 3 \cdot \frac{1}{|\{t \in D \mid t[\admissibleset] = \admissiblevalue\}| - 1} \label{eq:6_6}\\
&= \frac{|\{t \in D \mid t[\admissibleset] = \admissiblevalue\}|}{n} \cdot \frac{3}{|\{t \in D \mid t[\admissibleset] = \admissiblevalue\}| - 1} \\
&= \frac{3}{n}\cdot \frac{|\{t \in D \mid t[\admissibleset] = \admissiblevalue\}|}{|\{t \in D \mid t[\admissibleset] = \admissiblevalue\}| - 1} \label{eq:6_7}\\
&\leq \frac{3}{n}\cdot 2 \label{eq:6_8}\\
&= \frac{6}{n},
\end{align}
}
where the transitions are due to:
\begin{itemize}
  \item $\eqref{eq:6_1} \rightarrow \eqref{eq:6_2}$: triangle inequality.
  \item $\eqref{eq:6_2} \rightarrow \eqref{eq:6_3}$: $ab-cd = a(b-d) + d(a-c)$.
  \item $\eqref{eq:6_3} \rightarrow \eqref{eq:6_4}$: triangle inequality.
  \item $\eqref{eq:6_4} \rightarrow \eqref{eq:6_5}$: $q' \leq 1,\, r \leq 1$.
  \item $\eqref{eq:6_5} \rightarrow \eqref{eq:6_6}$: \Cref{lem:conditional-empirical-probabilities-sensitivity}.
  \item $\eqref{eq:6_7} \rightarrow \eqref{eq:6_8}$: by the assumption, $|\{t \in D \mid t[\admissibleset] = \admissiblevalue\}|\ge 2$.
\end{itemize}

Combining the bounds for the two terms yields
{\scriptsize
$$
\left|\margdiff(\protectedset \independent \responseset \mid \admissibleset,D',t)
      - \margdiff(\protectedset \independent \responseset \mid \admissibleset,D,t)\right|
\leq \frac{1}{4n} + \frac{6}{n}
\leq \frac{7}{n}.
$$
}
\end{proof}

\begin{proof}[Proof of \Cref{prop:contribution-properties-single-criterion} Part 2]

\textbf{Unconditional case:}

Let there be $D \neighbor D'$ neighboring databases such that they differ in one tuple by replacement. Without loss of generality, assume that $D'$ contains one more occurrence of $t'' = (\protectedvalue_{t''},\responsevalue_{t''})$ and one less occurrence of $t' = (\protectedvalue_{t'},\responsevalue_{t'})$ than $D$.

We extend $\margdiff$ to all tuples (even those not observed in $D$) as follows: if $t \notin D$ (i.e., the projection of $t$ onto the attributes in $\fairnessdef$ is not observed in $D$), then $\margdiff(\fairnessdef, D, t) = 0$.

By the definition, $top-k$ is a set of size $k$ that maximizes
$\sum_{t\in S}\margdiff(\fairnessdef,D,t)$ over all sets $S$ such that $|S|=k$. In particular,
{\scriptsize
$$
\sum_{t\in top-k}\margdiff(\fairnessdef,D,t) \geq \sum_{t\in top-k'}\margdiff(\fairnessdef,D,t)
$$
}
, and similarly,
{\scriptsize
$$
\sum_{t\in top-k'}\margdiff(\fairnessdef,D',t) \geq \sum_{t\in top-k}\margdiff(\fairnessdef,D',t).
$$
}
This is also true for the extended definition of $\margdiff$, since an unobserved tuple in any set can be swapped by an observed one (nonnegative by the definition), thereby increasing the total $\margdiff$ of the set.

Assume without loss of generality that
$\sum_{t \in top-k'} \margdiff(\fairnessdef,D',t) \geq \sum_{t \in top-k} \margdiff(\fairnessdef,D,t)$
(the other case is symmetric). We now distinguish two cases. If {\scriptsize $\sum_{t\in top-k}\margdiff(\fairnessdef,D',t) \geq \sum_{t\in top-k}\margdiff(\fairnessdef,D,t)$}, then
{\scriptsize
\begin{align*}
&\bigl|\sum_{t \in top-k'} \margdiff(\fairnessdef,D',t) - \sum_{t \in top-k} \margdiff(\fairnessdef,D,t)\bigr| \\
&= \sum_{t \in top-k'} \margdiff(\fairnessdef,D',t) - \sum_{t \in top-k} \margdiff(\fairnessdef,D,t) \\
&\leq \sum_{t \in top-k'} \margdiff(\fairnessdef,D',t) - \sum_{t \in top-k} \margdiff(\fairnessdef,D',t) \\
&\quad + \sum_{t \in top-k} \margdiff(\fairnessdef,D',t) - \sum_{t \in top-k} \margdiff(\fairnessdef,D,t) \\
&\le 0 + \sum_{t \in top-k}\bigl(\margdiff(\fairnessdef,D',t) - \margdiff(\fairnessdef,D,t)\bigr) \\
&\le \sum_{t \in top-k}\bigl|\margdiff(\fairnessdef,D',t) - \margdiff(\fairnessdef,D,t)\bigr|,
\end{align*}
}

Otherwise, {\scriptsize $\sum_{t\in top-k}\margdiff(\fairnessdef,D',t) < \sum_{t\in top-k}\margdiff(\fairnessdef,D,t)$}, and so
{\scriptsize
\begin{align*}
&\bigl|\sum_{t \in top-k'} \margdiff(\fairnessdef,D',t) - \sum_{t \in top-k} \margdiff(\fairnessdef,D,t)\bigr| \\
&= \sum_{t \in top-k'} \margdiff(\fairnessdef,D',t) - \sum_{t \in top-k} \margdiff(\fairnessdef,D,t) \\
&\leq \sum_{t \in top-k'}\bigl(\margdiff(\fairnessdef,D',t) - \margdiff(\fairnessdef,D,t)\bigr) \\
&\leq \sum_{t \in top-k'}\bigl|\margdiff(\fairnessdef,D',t) - \margdiff(\fairnessdef,D,t)\bigr|
\end{align*}
}

Combining the two cases yields
{\scriptsize
\begin{align*}
&\bigl|\sum_{t \in top-k'} \margdiff(\fairnessdef,D',t) - \sum_{t \in top-k} \margdiff(\fairnessdef,D,t)\bigr| \\
&\leq
\max\Bigl\{
\sum_{t\in top-k}\left|\margdiff(\fairnessdef,D',t)-\margdiff(\fairnessdef,D,t)\right|,\\
&\quad\quad\quad \sum_{t\in top-k'}\left|\margdiff(\fairnessdef,D',t)-\margdiff(\fairnessdef,D,t)\right|
\Bigr\}    
\end{align*}
}

{\scriptsize
\setlength{\abovedisplayskip}{4pt}
\setlength{\belowdisplayskip}{4pt}
\begin{align*}
&\bigl|\contribution(\fairnessdef, D') - \contribution(\fairnessdef, D)\bigr| \\ 
&= \bigl|\sum_{t \in top-k'} \margdiff(\fairnessdef,D',t) - \sum_{t \in top-k} \margdiff(\fairnessdef,D,t)\bigr| \\
&\leq \max\Bigl\{\sum_{t \in top-k'} \bigl|\margdiff(\fairnessdef,D',t) - \margdiff(\fairnessdef,D,t)\bigr|,\\
&\quad\quad\quad \sum_{t \in top-k} \bigl|\margdiff(\fairnessdef,D',t) - \margdiff(\fairnessdef,D,t)\bigr|\Bigr\} \\
&\leq k \cdot \sens_{\margdiff(\fairnessdef,D,t)} \\
&\leq k \cdot \frac{3}{n} \\
&= \frac{3k}{n},
\end{align*}
}
where $\sens_{\margdiff(\fairnessdef,D,t)} \leq \frac{3}{n}$ by \Cref{lem:marginal-difference-sensitivity}.

\textbf{Conditional case:}

Similarly, assume that $D'$ contains one more occurrence of $t'' = (\protectedvalue_{t''},\responsevalue_{t''},\admissiblevalue_{t''})$ and one less occurrence of $t' = (\protectedvalue_{t'},\responsevalue_{t'},\admissiblevalue_{t'})$ than $D$.

Then, similarly to the unconditional case,
{\scriptsize
\setlength{\abovedisplayskip}{4pt}
\setlength{\belowdisplayskip}{4pt}
\begin{align*}
&\bigl|\contribution(\fairnessdef, D') - \contribution(\fairnessdef, D)\bigr| \\ 
&= \bigl|\sum_{t \in top-k'} \margdiff(\fairnessdef,D',t) - \sum_{t \in top-k} \margdiff(\fairnessdef,D,t)\bigr| \\
&\leq \max\Bigl\{\sum_{t \in top-k'} \bigl|\margdiff(\fairnessdef,D',t) - \margdiff(\fairnessdef,D,t)\bigr|,\\
&\quad\quad\quad \sum_{t \in top-k} \bigl|\margdiff(\fairnessdef,D',t) - \margdiff(\fairnessdef,D,t)\bigr|\Bigr\} \\
&\leq k \cdot \sens_{\margdiff(\fairnessdef,D,t)} \\
&\leq k \cdot \frac{7}{n} \\
&= \frac{7k}{n},
\end{align*}
}
where $\sens_{\margdiff(\fairnessdef,D,t)} \leq \frac{7}{n}$ by \Cref{lem:marginal-difference-sensitivity}.
\end{proof}

The proof of \Cref{prop:contribution-properties} Parts 3 and 4 is similar to the respective parts of the proof of \Cref{prop:tvd-properties}.

\subsection{Proofs for \Cref{sec:algorithms}}

\begin{proof}[Proof of \Cref{lem:repair-cost-dominance}]
Let $D_{\alpha_i}$ be the set of tuples $t$ such that $x_t = \texttt{True}$ under assignment $\alpha_i$.

Recall that the repair cost is defined as the symmetric difference between $D$ and $D_{\alpha_i}$
{\scriptsize
$$
\repair(\varphi(D,\jdb), \alpha_i) = |D \setminus D_{\alpha_i}| + |D_{\alpha_i} \setminus D|
$$
}
This expression counts deletions (tuples in $D$ but not in $D_{\alpha_i}$) and insertions (tuples in $D_{\alpha_i}$ but not in $D$). Now, observe that:
\begin{itemize}
    \item Each satisfied soft clause of the form $x_t$ (with $t \in D$) means that $t$ is retained in $D_{\alpha_i}$ (i.e., not deleted). 
    \item Each satisfied soft clause of the form $\neg x_t$ (with $t \in \jdb \setminus D$) means that $t$ is not inserted into $D_{\alpha_i}$.
\end{itemize}

Thus, satisfying more soft clauses corresponds to performing fewer changes to the original database. Specifically, more original tuples are kept, and fewer new tuples are added. Hence, the symmetric difference $|D \setminus D_{\alpha_i}| + |D_{\alpha_i} \setminus D|$ is smaller.

Therefore, if $\alpha_1$ satisfies more soft clauses than $\alpha_2$, it must result in a smaller repair cost by the definition of $\repair$. That is

{\scriptsize
$$
\Delta(\varphi(D,\jdb), \alpha_1) < \Delta(\varphi(D,\jdb), \alpha_2)
$$
}
\end{proof}

\begin{proposition}[Expressing $\repairsat$ in terms of \Cref{def:cnf}]\label{prop:repair-sat-cnf}
Let $D$ be a database, let $\fairnessdef$ be a fairness criterion, and let $\varphi(D,\jdb)$ be the CNF from \Cref{def:cnf}. Denote by $\mathcal{H}(\jdb)$ and $\mathcal{S}(\jdb)$ the sets of hard and soft clauses in $\varphi(D,\jdb)$, respectively. The following holds.
$$
\repairsat(\fairnessdef,D) = \left|\mathcal{S}(\jdb)\right| - \min_{\alpha \models \mathcal{H}(\jdb)}
|\{\text{soft clauses not satisfied by }\alpha\}|
$$
\end{proposition}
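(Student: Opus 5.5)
The plan is to express the repair cost of every feasible assignment as a count of soft clauses, and then take the minimum over feasible assignments as in \Cref{def:repair-sat-proxy}.

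\textbf{Step 1 (per-assignment identity).} Fix a feasible assignment $\alpha \models \mathcal{H}(\jdb)$ and let $D_R = \{t \in \jdb \mid \alpha(x_t)=\texttt{True}\}$, as in \Cref{def:repair-cost-for-assignment}. I will split $D \symdif D_R$ into $D \setminus D_R$ and $D_R \setminus D$. Since $D \subseteq \jdb$, every tuple of $D$ owns exactly one positive soft clause $x_t$, counted with bag multiplicity as stipulated after \Cref{def:cnf}. Every tuple of $\jdb \setminus D$ owns exactly one negative soft clause $\neg x_t$. A tuple $t \in D$ lies in $D \setminus D_R$ iff its clause $x_t$ is falsified. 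A tuple $t \in \jdb\setminus D$ lies in $D_R \setminus D$ iff $\neg x_t$ is falsified. Because these correspondences are bijections onto $\mathcal{S}(\jdb)$, I get
\[
\repaircost(\varphi(D,\jdb),\alpha) \;=\; |\{\text{soft clauses not satisfied by }\alpha\}| \;=\; |\mathcal{S}(\jdb)| - |\{\text{soft clauses satisfied by }\alpha\}| .
\]
This refines the monotone relationship already recorded in \Cref{lem:repair-cost-dominance} into an exact equality.

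\textbf{Step 2 (optimize).} I will take $\min_{\alpha\models\mathcal{H}(\jdb)}$ on both sides and use the definition of $\repairsat$. The complement form turns the minimum of unsatisfied soft clauses into $|\mathcal{S}(\jdb)|$ minus the maximum number of satisfied soft clauses. The final task is then to match this to the right-hand side exactly as worded in \Cref{prop:repair-sat-cnf}, namely $|\mathcal{S}(\jdb)|$ minus the optimized count of soft clauses not satisfied by $\alpha$.

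\textbf{Main obstacle.} The main obstacle is precisely this last matching, and I have not established it. Step 1 shows that $\repairsat$ equals the minimum number of violated soft clauses. The displayed right-hand side, read literally, equals $|\mathcal{S}(\jdb)|$ minus that minimum. The two coincide only if the minimum number of violated soft clauses is exactly $|\mathcal{S}(\jdb)|/2$.

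What I would try first is to check whether the statement intends the bracketed quantity to be the satisfied clauses. Under that reading the right-hand side is $|\mathcal{S}(\jdb)| - \max_\alpha(\text{satisfied}) = \min_\alpha(\text{violated})$, which equals $\repairsat$ by Step 2.

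If no such reading is intended, I would test the literal identity on the smallest nontrivial instance. For an already-fair $D$ with $\jdb = D$, Positivity from \Cref{prop:repairsat-properties} gives $\repairsat = 0$. The literal right-hand side is then $|\mathcal{S}(\jdb)| = |D|$, so the literal form cannot hold in general.

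Consequently the argument I can actually carry through proves $\repairsat = \min_\alpha(\text{violated}) = |\mathcal{S}(\jdb)| - \max_\alpha(\text{satisfied})$. I would present the proposition in that corrected form rather than as literally stated.
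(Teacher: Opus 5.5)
Your analysis is correct, and your Step 1 is exactly the paper's argument. The paper also proves only the per-assignment identity $|\{\text{soft clauses satisfied by }\alpha\}| = |\mathcal{S}(\jdb)| - \repaircost(\varphi(D,\jdb),\alpha)$. It then asserts that ``the two definitions are also equivalent,'' silently passing over the very mismatch you found.

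Minimizing over feasible $\alpha$ gives $\repairsat(\fairnessdef,D) = \min_{\alpha \models \mathcal{H}(\jdb)} |\{\text{soft clauses not satisfied by }\alpha\}| = |\mathcal{S}(\jdb)| - \max_{\alpha \models \mathcal{H}(\jdb)} |\{\text{soft clauses satisfied by }\alpha\}|$. The displayed right-hand side instead equals the maximum number of satisfied soft clauses. So your corrected form is what the argument actually establishes. Note that the fix needs both the bracketed set and the $\min$ changed, or else just the leading $|\mathcal{S}(\jdb)| -$ dropped.

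Your counterexample is valid, and the hypothesis $\jdb = D$ is automatic. If $D$ satisfies the criterion, every $(p,y,a)$ with $(p,a) \in \Pi_{\protectedset,\admissibleset}(D)$ and $(y,a) \in \Pi_{\responseset,\admissibleset}(D)$ has positive empirical probability and hence occurs in $D$. The all-\texttt{True} assignment is then feasible and violates no soft clause, giving $\repairsat = 0$ against a literal right-hand side of $|\mathcal{S}(\jdb)| = n$.

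As a small bonus, your Step 1 correctly counts the violated clauses $\neg x_t$ with $t \in \jdb \setminus D$ (insertions). The paper's wording of this step omits that case.
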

\begin{proof}
By \Cref{def:cnf}, each tuple $t \in D$ contributes a soft clause $x_t$, 
and each tuple $t \in \jdb \setminus D$ contributes a soft clause $\neg x_t$.

For any feasible assignment $\alpha$, let $D_R = \{t \in \jdb \mid \alpha(x_t) = \texttt{True}\}$. Then, since soft clauses are a multiset, a soft clause is not satisfied iff the respective tuple appears in $D$ and doesn't appear in $D_R$. That is, the number of satisfied soft clauses is
{\footnotesize
$$\#\{\text{soft clauses satisfied by }\alpha\} = \left|\mathcal{S}(\jdb)\right| - \Delta(\varphi(D,\jdb),\alpha)$$
}
Therefore, minimizing $\Delta(\varphi(D,\jdb),\alpha)$ is equivalent to maximizing the number of satisfied soft clauses. Thus, the two definitions are also equivalent.
\end{proof}

Following are definitions and propositions with proofs for the full $\repairsat$ sensitivity analysis.

\begin{definition}[Assignment extension]\label{def:extension}
Given a database $D$ and a CNF formula $\varphi(D,\jdb) = \mathcal{H}(\jdb) \land \bigwedge_{t \in D} x_t \land \bigwedge_{t \in \jdb - D} (\neg x_t)$, an assignment $\alpha$ for $\varphi(D,\jdb)$, and a CNF formula $\varphi' = \mathcal{H}(\crossdb) \land \bigwedge_{t \in D} x_t \land \bigwedge_{t \in \crossdb - D} (\neg x_t)$ such that $\jdb \subseteq \crossdb$, an extension of the assignment $\alpha$ for $\varphi(D,\crossdb)$ is defined as follows: 
$$
\alpha'(x_t) =
\begin{cases}
\alpha(x_t) & \text{if } t \in \jdb \\
\texttt{False} & \text{if } t \in \crossdb \setminus \jdb
\end{cases}
$$
\end{definition}

With this definition, we show that a minimum repair of the join and cross-product databases by their CNF formulae have identical size. We then prove that extending a CNF formula for the join database to a CNF formula for the cross-product database preserves hard clauses satisfaction.

\begin{lemma}\label{lem:regular-and-simplified-repair-cost-equivalence}
Given a database $D$ and a fairness criterion, let $\jdb$ and $\crossdb$ be the self-join and cross-product databases. Let $\varphi(D,\jdb)$ and $\varphi(D,\crossdb)$ be two CNF formulae defined according to \Cref{def:cnf}. Let $\alpha$ be an optimal assignment for $\varphi(D,\jdb)$, and let $\hat{\alpha}$ be its extension to the assignment for $\varphi(D,\crossdb)$ according to \Cref{def:extension}. 
It holds that
$
\repairsat(\varphi(D,\jdb)) = \repairsat(\varphi(D,\crossdb))
$.
\end{lemma}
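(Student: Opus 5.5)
We will show the two minima are equal by proving two inequalities. Throughout, $\repairsat(\varphi(D,\jdb))$ denotes $\min_{\alpha \models \mathcal{H}(\jdb)} \repaircost(\varphi(D,\jdb),\alpha)$, and similarly for $\crossdb$. By \Cref{prop:repair-sat-cnf} we may equivalently compare the minimum numbers of unsatisfied soft clauses.

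\textbf{Step 1 ($\crossdb$-side $\le$ $\jdb$-side).} Take the optimal $\alpha$ for $\varphi(D,\jdb)$ and its extension $\hat{\alpha}$ from \Cref{def:extension}. We first check that $\hat{\alpha}$ satisfies every hard clause of $\mathcal{H}(\crossdb)$. Consider a clause $\neg x_{t_1}\lor\neg x_{t_2}\lor x_{t_3}$ with $t_1=(p_1,y_1,a)$, $t_2=(p_2,y_2,a)$ and $t_3=(p_1,y_2,a)$.
\begin{itemize}
\item If $t_1\notin\jdb$ or $t_2\notin\jdb$, then $\hat{\alpha}$ sets that variable to \texttt{False}, so the clause holds.
\item Otherwise $t_1,t_2\in\jdb$. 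Then $(p_1,a)\in\Pi_{\protectedset,\admissibleset}(D)$ and $(y_2,a)\in\Pi_{\responseset,\admissibleset}(D)$, so $t_3\in\jdb$. The clause is therefore already a clause of $\mathcal{H}(\jdb)$, and $\alpha$ satisfies it.
\end{itemize}
Next we compare costs. Every $t\in\crossdb\setminus\jdb$ lies outside $D$, because $D\subseteq\jdb$. Such a $t$ contributes a soft clause $\neg x_t$, which $\hat{\alpha}$ satisfies. Hence $D_R$ is unchanged and $\repaircost(\varphi(D,\crossdb),\hat{\alpha})=\repaircost(\varphi(D,\jdb),\alpha)$.

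\textbf{Step 2 ($\jdb$-side $\le$ $\crossdb$-side).} This is the direction I expect to be the main obstacle. Take an optimal feasible $\beta$ for $\varphi(D,\crossdb)$ and let $D_\beta$ be its true set. My plan is to restrict $\beta$ to $\jdb$, i.e., set $\beta'(x_t)=\beta(x_t)$ for $t\in\jdb$, and show two things.
\begin{itemize}
\item \emph{Feasibility.} Every hard clause of $\mathcal{H}(\jdb)$ is also a clause of $\mathcal{H}(\crossdb)$ over the same variables, so $\beta'$ satisfies it.
\item \emph{Cost.} Restricting can only remove true tuples lying in $\crossdb\setminus\jdb$. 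Each such tuple is not in $D$, so dropping it lowers the symmetric difference by one. Hence $\repaircost(\varphi(D,\jdb),\beta')\le\repaircost(\varphi(D,\crossdb),\beta)$.
\end{itemize}
The subtle point is confirming that $\jdb$ is closed under the hard-clause constraints, so that restriction loses no constraint and $\beta'$ is genuinely feasible for $\mathcal{H}(\jdb)$. This is the same closure fact used in Step 1, namely $t_1,t_2\in\jdb\Rightarrow t_3\in\jdb$, which follows from the product structure of $\jdb$ per value of $\admissibleset$.

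\textbf{Conclusion.} Combining the two steps gives
\[
\repairsat(\varphi(D,\jdb))=\repairsat(\varphi(D,\crossdb)).
\]
If we instead read the statement as equality of the optimal soft-clause counts, one caveat is that $\crossdb$ has more soft clauses than $\jdb$. The count of unsatisfied soft clauses, equivalently the cost, is what matches, and we will state the result in those terms.
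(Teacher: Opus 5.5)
Your proof is correct and follows the same route as the paper. The extension $\hat{\alpha}$ of an optimal $\alpha$ is feasible for $\mathcal{H}(\crossdb)$ by the closure fact $t_1,t_2\in\jdb\Rightarrow t_3\in\jdb$ (this is the paper's \Cref{lem:hard-clauses-preservation}), and restricting an optimal assignment for $\varphi(D,\crossdb)$ to $\jdb$ gives the reverse inequality. Your cost bookkeeping is cleaner than the paper's: you count the symmetric difference directly, using $D\subseteq\jdb$ so that $D_{\beta'}=D_\beta\cap\jdb$ can only drop non-$D$ tuples. The paper instead applies \Cref{lem:repair-cost-dominance}, which is stated for a single formula, across two different formulas, and your closing caveat that only the costs (unsatisfied counts) coincide, not the satisfied counts, is exactly the point that makes this matter.
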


\begin{proof}
Let $\alpha$ be an optimal assignment for $\varphi(D,\jdb)$, and let $\hat{\alpha}$ be its extension to $\varphi(D,\hat{D})$, as defined in \Cref{def:extension}.

Since $\alpha$ is an optimal assignment, it satisfies all hard clauses in $\varphi(D,\jdb)$. Therefore, by \Cref{lem:hard-clauses-preservation}, $\hat{\alpha}$ satisfies all hard clauses in $\varphi(D,\hat{D})$. Additionally, by \Cref{lem:soft-clauses-dominance}, the number of satisfied soft clauses in $\varphi(D,\hat{D})$ under $\hat{\alpha}$ is at least that in $\varphi(D,\jdb)$ under $\alpha$. Hence, $\hat{\alpha}$ is a feasible and possibly optimal assignment for $\varphi(D,\hat{D})$.

Therefore, by \Cref{lem:repair-cost-dominance} it holds that
{\scriptsize
$$
\Delta(\varphi(D,\hat{D}), \hat{\alpha}) \leq \Delta(\varphi(D,\jdb), \alpha)
$$
}

Now let $\hat{\alpha}^*$ be an optimal assignment for $\varphi(D,\hat{D})$, and let $\alpha^*$ be its restriction to $\jdb$ defined as follows
{\scriptsize
$$
\alpha^*(x_t) = \hat{\alpha}^*(x_t) \text{ for } t \in \jdb
$$
}

Since $\jdb \subseteq \hat{D}$, then $\mathcal{H}(\jdb) \subseteq \mathcal{H}(\hat{D})$. And because hard clause satisfaction is preserved under restriction (\Cref{lem:hard-clauses-preservation}), it follows that $\alpha^*$ is a feasible assignment for $\varphi(D,\jdb)$.

Moreover, the reverse direction of \Cref{lem:soft-clauses-dominance} implies that the number of soft clauses satisfied under $\alpha^*$ is at least that in $\varphi(D,\jdb)$ under any other assignment. This means that $\alpha^*$ is an optimal assignment $\alpha$ for $\varphi(D,\jdb)$, and $\hat{\alpha}$ is an optimal assignment $\hat{\alpha}^*$ for $\varphi(\hat{D})$ by \Cref{def:extension}.

Therefore, by \Cref{lem:repair-cost-dominance} it holds that
{\scriptsize
$$
\Delta(\varphi(D,\jdb), \alpha) = \Delta(\varphi(D,\jdb), \alpha^*) \leq \Delta(\varphi(D,\hat{D}), \hat{\alpha}) = \Delta(\varphi(D,\hat{D}), \hat{\alpha}^*)
$$
}

So from \Cref{def:repair-sat-proxy} and from both directions we proved, it holds that
{\scriptsize
\begin{align*}
&\repairsat(\fairnessdef,D) = \Delta(\varphi(D,\jdb), \alpha) \\
&= \Delta(\varphi(D,\hat{D}), \hat{\alpha}) = \min_{\alpha \models \mathcal{H}(\hat{D})} \Delta(\varphi(D,\hat{D}), \alpha)
\end{align*}
}
\end{proof}

We will now prove that extending a CNF formula for the join database to a CNF formula for the cross-product database preserves hard clauses satisfaction.

\begin{lemma}\label{lem:hard-clauses-preservation}
Given a database $D$ such that the schema of $D$ is a superset of $\protectedset \cup \responseset \cup \admissibleset$, and a fairness criterion of the form $\fairnessdef = \protectedset \independent \responseset \mid \admissibleset$, let $\jdb$ be the join database and $\crossdb$ be cross-product database, and let $\varphi(D,\jdb)$ and $\varphi(D,\crossdb)$ be the corresponding two CNF formulae from \Cref{def:cnf}. Let $\alpha$ be a feasible assignment for $\varphi(D,\jdb)$ and let $\hat{\alpha}$ be its extension to the assignment for $\varphi(D,\crossdb)$ according to \Cref{def:extension}. Then, $\mathcal{H}(\jdb)$ in $\varphi(D,\jdb)$ are satisfied by $\alpha$ iff $\mathcal{H}(\hat{\jdb})$ in $\varphi(D,\crossdb)$ are satisfied by $\hat{\alpha}$.   
\end{lemma}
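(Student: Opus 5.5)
The argument is a direct comparison of the two families of hard clauses, carried out in each direction separately. I first record the structural relation between the two formulae: since $\jdb \subseteq \crossdb$, every triple $(t_1,t_2,t_3)$ with $t_1,t_2,t_3 \in \jdb$ that generates a clause of $\mathcal{H}(\jdb)$ also generates a clause of $\mathcal{H}(\crossdb)$. Moreover, on variables $x_t$ with $t \in \jdb$ the extension $\hat{\alpha}$ of \Cref{def:extension} agrees with $\alpha$. So the two clause sets share the same shape, $(\neg x_{t_1} \lor \neg x_{t_2} \lor x_{t_3})$ with $t_1=(p_1,y_1,a)$, $t_2=(p_2,y_2,a)$, $t_3=(p_1,y_2,a)$; they differ only in whether the tuples range over $\jdb$ or over $\crossdb$.

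\emph{($\Leftarrow$)} If $\hat{\alpha}$ satisfies $\mathcal{H}(\crossdb)$, take any clause of $\mathcal{H}(\jdb)$. It is also a clause of $\mathcal{H}(\crossdb)$, and all its variables have identical values under $\alpha$ and $\hat{\alpha}$, so $\alpha$ satisfies it.

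\emph{($\Rightarrow$)} Assume $\alpha \models \mathcal{H}(\jdb)$ and fix a clause of $\mathcal{H}(\crossdb)$ with $t_1,t_2 \in \crossdb$. I split into cases on membership in $\jdb$.
\begin{enumerate}
\item If $t_1 \notin \jdb$ or $t_2 \notin \jdb$, then $\hat{\alpha}$ sets that variable to \texttt{False}, so the literal $\neg x_{t_i}$ holds and the clause is satisfied.
\item If $t_1,t_2 \in \jdb$, I show that $t_3 \in \jdb$, so the clause lies in $\mathcal{H}(\jdb)$ and is satisfied by $\alpha = \hat{\alpha}$ there. By definition of $\jdb = \Pi_{\protectedset,\admissibleset}(D) \bowtie \Pi_{\responseset,\admissibleset}(D)$:
\begin{itemize}
\item $t_1 \in \jdb$ gives $(p_1,a) \in \Pi_{\protectedset,\admissibleset}(D)$;
\item $t_2 \in \jdb$ gives $(y_2,a) \in \Pi_{\responseset,\admissibleset}(D)$;
\item joining on $a$ then yields $t_3=(p_1,y_2,a) \in \jdb$.
\end{itemize}
\end{enumerate}

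The main point needing care is this closure property of the self-join, namely that $\jdb$ is closed under the ``swap'' that builds $t_3$. I expect it to go through directly from the projection-join definition. The remaining subtlety is bag semantics: hard clauses refer to tuple values, not identifiers, so I treat $\jdb$ and $\crossdb$ as sets of value-tuples for the hard clauses. The case where $\admissibleset$ is absent is handled identically, with $a$ omitted.
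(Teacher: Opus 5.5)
Your proof is correct and follows the paper's route. The ($\Leftarrow$) direction uses $\mathcal{H}(\jdb) \subseteq \mathcal{H}(\crossdb)$ together with the agreement of $\hat{\alpha}$ and $\alpha$ on $\jdb$; the ($\Rightarrow$) direction notes that any clause of $\mathcal{H}(\crossdb)$ either contains a variable the extension sets to \texttt{False}, or has $t_1,t_2 \in \jdb$, in which case closure of the self-join puts $t_3$ in $\jdb$. Your derivation of $t_3 \in \jdb$ from $(p_1,a) \in \Pi_{\protectedset,\admissibleset}(D)$ and $(y_2,a) \in \Pi_{\responseset,\admissibleset}(D)$ is more precise than the paper's appeal to the join ``preserving'' $\admissibleset$.
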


\begin{proof}
\textbf{($\Rightarrow$)} Suppose that $\mathcal{H}(\jdb)$ in $\varphi(D,\jdb)$ are satisfied under $\alpha$. Since $\hat{D}$ is a superset of $\jdb$, then each clause in $\mathcal{H}(\jdb)$ appears in $\mathcal{H}(\hat{\jdb})$, and they differ only in additional hard clauses added because of the tuples in $\hat{D} \setminus \jdb$. Since $\hat{\alpha}$ agrees with $\alpha$ on all $t \in \jdb$, the same clauses also evaluate to \texttt{True} under $\hat{\alpha}$.

Now consider the additional clauses introduced due to tuples in $\hat{D} \setminus \jdb$. These clauses are of the form $(\neg x_{t_1} \lor \neg x_{t_2} \lor x_{t_3})$, where at least one of the tuples $t_1, t_2, t_3$ is in $\hat{D} \setminus \jdb$. Since $\hat{\alpha}$ assigns \texttt{False} to all $x_t$ so that $t \in \hat{D} \setminus \jdb$, the clause \( (\neg x_{t_1} \lor \neg x_{t_2} \lor x_{t_3}) \) will evaluate to \texttt{True} as follows
\begin{itemize}
    \item If either $x_{t_1}$ or $x_{t_2}$ is \texttt{False} under $\hat{\alpha}$, their negation is \texttt{True}, so the whole clause is satisfied by $\hat{\alpha}$.
    \item If both $x_{t_1}$ and $x_{t_2}$ are \texttt{True} under $\hat{\alpha}$, this is possible only if both $t_1$ and $t_2$ are in $\jdb$.
    
    According to the construction of the CNF formula (see \Cref{def:cnf}), a clause of the form
    $$
    (\neg x_{t_1} \lor \neg x_{t_2} \lor x_{t_3})
    $$
    is included in the hard clauses, where $t_1 = (\protectedset_1, \responseset_1, \admissibleset) \in \jdb$, $t_2 = (\protectedset_2, \responseset_2, \admissibleset) \in \jdb$, and $t_3 = (\protectedset_1, \responseset_2, \admissibleset)$ is the tuple resulting from the join of $t_1$ and $t_2$ on the shared attribute $\admissibleset$.

    Since both $t_1$ and $t_2$ belong to $\jdb$, and the join $\jdb(\protectedset_1, \responseset_2, \admissibleset) := D(\protectedset_1, \responseset_1, \admissibleset) \bowtie D(\protectedset_2, \responseset_2, \admissibleset)$ preserves $\admissibleset$, the resulting tuple $t_3 = (\protectedset_1, \responseset_2, \admissibleset)$ must also belong to $\jdb$. Therefore, the clause $(\neg x_{t_1} \lor \neg x_{t_2} \lor x_{t_3})$ belongs to $\mathcal{H}(\jdb)$, and since $\hat{\alpha}$ agrees with $\alpha$ on all tuples in $\jdb$, it satisfies this clause as well. Then the full clause is in $\mathcal{H}(\jdb)$ and should be satisfied by $\hat{\alpha}$ because it agrees with $\alpha$ on all $t \in \jdb$.
\end{itemize}

Therefore, in any case $\mathcal{H}(\hat{D})$ in $\varphi(D,\hat{D})$ are satisfied under $\hat{\alpha}$.

\textbf{($\Leftarrow$)} Suppose $\mathcal{H}(\hat{D})$ in $\varphi(D,\hat{D})$ are satisfied under $\hat{\alpha}$. In particular, all clauses in $\mathcal{H}(\jdb)$, which are a subset of $\mathcal{H}(\hat{D})$, are satisfied under $\hat{\alpha}$. Since $\hat{\alpha}$ agrees with $\alpha$ on all $x_t$ for $t \in \jdb$, then $\mathcal{H}(\jdb)$ in $\varphi(D,\jdb)$ are satisfied also under $\alpha$.

From the two directions, it holds that $H(\jdb)$ in $\varphi(D,\jdb)$ are satisfied under $\alpha$ iff $H(\hat{D})$ in $\varphi(D,\hat{D})$ are satisfied under $\hat{\alpha}$.
\end{proof}

Finally, we show that extending a CNF formula for the self-join database to a CNF formula for the cross-product database cannot decrease the number of satisfied soft clauses.


\begin{lemma}\label{lem:soft-clauses-dominance}
Given a database $D$ and a fairness criterion, let $\jdb$ and $\crossdb$ be the join and cross-product databases. Let $\varphi(D,\jdb)$ and $\varphi(D,\crossdb)$ be the two corresponding CNF formulae according to \Cref{def:cnf}. Let $\alpha$ be a feasible assignment for $\varphi(D,\jdb)$ and let $\hat{\alpha}$ be its extension to the assignment for $\varphi(D,\crossdb)$. Then the number of satisfied soft clauses in $\varphi(D,\jdb)$ by $\alpha$ is lower than or equal to the number of satisfied soft clauses in $\varphi(D,\crossdb)$ under $\hat{\alpha}$.
\end{lemma}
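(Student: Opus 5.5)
The plan is to compare the soft clauses of the two formulae term by term, splitting them according to \Cref{def:cnf} into positive clauses $x_t$ for $t\in D$ and negative clauses $\neg x_t$ for tuples outside $D$.

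We first record that $D\subseteq\jdb\subseteq\crossdb$. Every $t\in D$ joins with itself on $\admissibleset$, so it lies in $\jdb$. Since the soft clauses are treated as a bag, the positive soft clauses $\{x_t : t\in D\}$ are \emph{identical} in $\varphi(D,\jdb)$ and $\varphi(D,\crossdb)$. Each such clause is evaluated on a variable $x_t$ with $t\in\jdb$, and by \Cref{def:extension} we have $\hat{\alpha}(x_t)=\alpha(x_t)$ there. Hence the number of satisfied positive soft clauses is exactly the same under $\alpha$ and under $\hat{\alpha}$.

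Next we split the negative soft clauses of $\varphi(D,\crossdb)$ into two groups, those for $t\in\jdb\setminus D$ and those for $t\in\crossdb\setminus\jdb$.
\begin{itemize}
\item The first group coincides with the negative soft clauses of $\varphi(D,\jdb)$. Since $\hat{\alpha}$ agrees with $\alpha$ on these variables, each such clause is satisfied by $\hat{\alpha}$ iff it is satisfied by $\alpha$.
\item The second group consists of new clauses $\neg x_t$. By \Cref{def:extension}, $\hat{\alpha}(x_t)=\texttt{False}$ for every such $t$, so every one of these clauses is satisfied.
\end{itemize}

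Summing the three contributions, we get
\[
\#\mathrm{sat}_{\hat{\alpha}}(\varphi(D,\crossdb)) = \#\mathrm{sat}_{\alpha}(\varphi(D,\jdb)) + |\crossdb\setminus\jdb| \ \ge\ \#\mathrm{sat}_{\alpha}(\varphi(D,\jdb)),
\]
which is the claim. Feasibility of $\alpha$ is not actually needed for this count; it matters only when the lemma is combined with \Cref{lem:hard-clauses-preservation}.

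The only delicate point is the bag semantics. Duplicate tuples in $D$ yield repeated clauses $x_t$ on the same variable, and we must check that the multiplicities coincide in both formulae. They do, because the positive soft clauses are indexed by $D$ itself, not by $\jdb$ or $\crossdb$. For the same reason, no tuple of $D$ can reappear among the negative clauses, since those are indexed by sets disjoint from $D$.
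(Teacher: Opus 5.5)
Your proposal is correct and follows essentially the paper's argument: the soft clauses of $\varphi(D,\jdb)$ carry over unchanged and $\hat{\alpha}$ agrees with $\alpha$ on them, while each extra clause $\neg x_t$ for $t \in \crossdb \setminus \jdb$ is satisfied because $\hat{\alpha}(x_t)=\texttt{False}$. Your explicit check that $D \subseteq \jdb \subseteq \crossdb$, the multiplicity bookkeeping, and the exact count $+|\crossdb\setminus\jdb|$ are small refinements that the paper leaves implicit.
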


\begin{proof}
Let $D$ be a dataset. According to \Cref{def:cnf}:
{\scriptsize
\begin{align*}
&\varphi(D,\jdb) = \mathcal{H}(\jdb) \land \bigwedge_{t \in D} x_t \land \bigwedge_{t \in \jdb \setminus D} \neg x_t \\
&\varphi(D,\hat{D}) = \mathcal{H}(\hat{D}) \land \bigwedge_{t \in D} x_t \land \bigwedge_{t \in \hat{D} \setminus D} \neg x_t
\end{align*}
}

Since $\jdb \subseteq \hat{D}$, then $\varphi(D,\jdb)$ and $\varphi(D,\hat{D})$ differ only in soft clauses for the tuples in $\hat{D} \setminus \jdb$. In particular, the only difference is that $\varphi(D,\hat{D})$ contains additional soft clauses of the form $\neg x_t$ for $t \in \hat{D} \setminus \jdb$.

Let $\alpha$ be an assignment for $\varphi(D,\jdb)$. According to \Cref{def:extension}, it holds that
\begin{itemize}
    \item All soft clauses in $\varphi(D,\jdb)$ are also present in $\varphi(D,\hat{D})$ and are satisfied under $\hat{\alpha}$ exactly as in $\alpha$.
    \item The additional soft clauses in $\varphi(D,\hat{D})$ are of the form $\neg x_t$ for $t \in \hat{D} \setminus \jdb$, and each such clause is satisfied in $\hat{\alpha}$ because $\hat{\alpha}(x_t) = \texttt{False}$.
\end{itemize}

Hence, $\hat{\alpha}$ satisfies all the soft clauses that $\alpha$ satisfies, plus potentially some new ones. Therefore, the number of satisfied soft clauses in $\varphi(D,\jdb)$ under $\alpha$ is less or equal to the number of satisfied soft clauses in $\varphi(D,\hat{D})$ under $\hat{\alpha}$.
\end{proof}

As a precursor to the sensitivity analysis, we show that the difference in repair between neighboring databases is bounded by $2$. 

\begin{lemma}\label{lem:simplified-repair-cost-sensitivity}
Let $D \neighbor D'$, let $\crossdb$ and $\crossdb'$ be their corresponding cross-product databases, and let $\fairnessdef = \protectedset \independent \responseset \mid \admissibleset$ be a fairness criterion. Let $\varphi(D,\crossdb)$ and $\varphi(D',\crossdb')$ be CNF formulae defined according to \Cref{def:cnf}, and let $\hat{\alpha}$ be a feasible assignment for $\varphi(D,\crossdb)$. Finally, let $\Delta(\varphi(D,\crossdb), \hat{\alpha})$ and $\Delta(\varphi(D',\crossdb'), \hat{\alpha})$ be defined according to \Cref{def:repair-cost-for-assignment}. 
It holds that:
$$
|\Delta(\varphi(D,\crossdb), \hat{\alpha}) - \Delta(\varphi(D',\crossdb'), \hat{\alpha})| \leq 2
$$
\end{lemma}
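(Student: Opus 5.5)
The plan is to work directly from \Cref{def:repair-cost-for-assignment}. For a fixed assignment $\hat{\alpha}$, let $D_R = \{t \mid \hat{\alpha}(x_t) = \texttt{True}\}$. The cost is then $|D \symdif D_R| = |D \setminus D_R| + |D_R \setminus D|$, and the same expression with $D'$ in place of $D$ gives the cost on the other side.

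First, I will fix the common domain of the two formulae. The cross-product database $\crossdb = \dom(\protectedset)\times\dom(\responseset)\times\dom(\admissibleset)$ depends only on the schema domains, not on the data. Hence $\crossdb = \crossdb'$, both formulae range over the same variables $\{x_t : t \in \crossdb\}$, and $\mathcal{H}(\crossdb) = \mathcal{H}(\crossdb')$. This step is what makes the statement meaningful: a feasible $\hat{\alpha}$ for $\varphi(D,\crossdb)$ is automatically feasible for $\varphi(D',\crossdb')$, and $D_R$ is literally the same set in both costs. Only the soft clauses differ. Here I would rely on the choice of $\crossdb$ over $\jdb$, which \Cref{lem:regular-and-simplified-repair-cost-equivalence} justifies. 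With $\jdb$ instead, the set of variables itself could change between $D$ and $D'$.

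Second, I will write $D' = (D \setminus \{t'\}) \cup \{t''\}$ as bags, following \Cref{def:neighboring-databases}. I will then show that $D \mapsto |D \symdif D_R|$ changes by at most $1$ under a single bag deletion and by at most $1$ under a single bag insertion. Removing one occurrence of $t'$ from $D$ changes exactly one multiplicity, and in counting terms the symmetric difference counts $\sum_t |m_D(t) - m_{D_R}(t)|$. A change of one unit in a single multiplicity moves that sum by at most $1$. The same holds when one occurrence of $t''$ is added. By the triangle inequality through the intermediate bag $D \setminus \{t'\}$, the total change is at most $2$, which gives the claim. Equivalently, I can phrase this through the soft clauses: $D$ and $D'$ differ in at most one soft clause removed and one added, or turned from $x_t$ to $\neg x_t$, and each such change alters the count of unsatisfied soft clauses by at most $1$.

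The main obstacle is bag semantics. $D_R$ is a set of variables, one per tuple of $\crossdb$, while $D$ may contain duplicates. I therefore need to interpret $|D \symdif D_R|$ consistently with the paper's convention that soft clauses form a bag, as in \Cref{prop:repair-sat-cnf}. I will adopt the multiplicity form $\sum_{t}|m_D(t) - m_{D_R}(t)|$ with $m_{D_R}(t) \in \{0,1\}$ and verify that this agrees with counting unsatisfied soft clauses. Once that identification is fixed, the single-multiplicity argument above goes through, and the rest is routine.
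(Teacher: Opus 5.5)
Your proof is correct and follows essentially the paper's argument: the cross-product database depends only on the attribute domains, so the hard clauses coincide and $\hat{\alpha}$ remains feasible for $\varphi(D',\crossdb')$; replacing $t'$ by $t''$ then changes at most two soft clauses, each moving the cost by at most $1$. Your remark that $D_R$ is literally the same set on both sides is cleaner than the paper's claim that the sets $D_{\hat{\alpha}}$ may differ by one tuple. One correction to your bag-semantics step: $\sum_t |m_D(t)-m_{D_R}(t)|$ does \emph{not} agree with the number of unsatisfied soft clauses when $\hat{\alpha}$ keeps a tuple with $m_D(t)\geq 2$ (it counts $m_D(t)-1$ instead of $0$), so that verification would fail. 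This does not hurt the proof, because both quantities change by at most $1$ under a unit change of a single multiplicity, so the bound of $2$ holds under either reading of the repair cost.
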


\begin{proof}
Let there be $D \neighbor D'$ neighboring databases such that they differ in one tuple by replacement. Without loss of generality, assume that $D'$ contains one more occurrence of $t''$ and one less occurrence of $t'$ than $D$.

It follows that the only difference between $\varphi(D,\hat{D})$ and $\varphi(D',\hat{D'})$ lies at most in two soft clauses: 

\begin{itemize}
    \item If $t' \notin D'$, then $\varphi(D',\hat{D'})$ contains the soft clause $\neg x_{t'}$, while $\varphi(D,\hat{D})$ contains the soft clause $x_{t'}$.
    \item If $t'' \notin D$, then $\varphi(D',\hat{D'})$ contains the soft clause $x_{t''}$, while $\varphi(D,\hat{D})$ contains the soft clause $\neg x_{t''}$.
\end{itemize}

The set of hard clauses remains the same, as they are defined over $\hat{D}$, which depends only on the attribute domains. And since there is no difference in the hard clauses between $D$ and $D'$, then $\hat{\alpha}$ is also a feasible assignment for $\varphi(D',\hat{D'})$.

Now, observe that under $\hat{\alpha}$

\begin{table*}[t]
\centering
\caption{Results of exploratory queries $q_1$--$q_6$ on the \texttt{Adult} and \texttt{Compas} datasets with different privacy budgets.}
\label{tab:query-results}
\small
\setlength{\tabcolsep}{3pt}
\renewcommand{\arraystretch}{0.95}

\begin{minipage}[t]{0.66\linewidth}
\vspace{0pt}
\centering
\textbf{(a) \texttt{Compas} dataset}

\vspace{0.4em}

\begin{minipage}[t]{0.48\linewidth}
\vspace{0pt}
\centering
\resizebox{\linewidth}{!}{%
\begin{tabular}{l|ccc}
\toprule
\multicolumn{4}{c}{$q_1$: average \texttt{decile\_score} by \texttt{age\_cat}} \\
\multicolumn{4}{c}{(grouped by \texttt{race})} \\
\midrule
& \multicolumn{3}{c}{\textbf{avg-d}} \\
\cmidrule(lr){2-4}
\textbf{age\_cat} & $\boldsymbol{\varepsilon=\infty}$ & $\boldsymbol{\varepsilon=10}$ & $\boldsymbol{\varepsilon=1}$ \\
25--45           & 3.788 & 3.766 & 3.765 \\
Greater than 45  & 2.662 & 2.638 & 2.637 \\
Less than 25     & 5.279 & 5.202 & 5.202 \\
\bottomrule
\end{tabular}}
\end{minipage}
\hfill
\begin{minipage}[t]{0.48\linewidth}
\vspace{0pt}
\centering
\resizebox{\linewidth}{!}{%
\begin{tabular}{l|ccc}
\toprule
\multicolumn{4}{c}{$q_2$: average \texttt{decile\_score} by \texttt{race}} \\
\multicolumn{4}{c}{(grouped by \texttt{age\_cat})} \\
\midrule
& \multicolumn{3}{c}{\textbf{avg-d}} \\
\cmidrule(lr){2-4}
\textbf{race} & $\boldsymbol{\varepsilon=\infty}$ & $\boldsymbol{\varepsilon=10}$ & $\boldsymbol{\varepsilon=1}$ \\
African-American & 5.205 & 5.183 & 5.189 \\
Asian            & 3.058 & 2.989 & 2.965 \\
Caucasian        & 3.882 & 3.837 & 3.837 \\
Hispanic         & 3.753 & 3.467 & 3.476 \\
Native American  & 4.721 & 4.704 & 5.002 \\
Other            & 2.937 & 2.970 & 2.969 \\
\bottomrule
\end{tabular}}
\end{minipage}

\vspace{0.8em}

\begin{minipage}[t]{0.48\linewidth}
\vspace{0pt}
\centering
\resizebox{\linewidth}{!}{%
\begin{tabular}{l|ccc}
\toprule
\multicolumn{4}{c}{$q_3$: median \texttt{decile\_score} by \texttt{race}} \\
\multicolumn{4}{c}{(grouped by \texttt{c\_charge\_degree})} \\
\midrule
& \multicolumn{3}{c}{\textbf{med-d}} \\
\cmidrule(lr){2-4}
\textbf{race} & $\boldsymbol{\varepsilon=\infty}$ & $\boldsymbol{\varepsilon=10}$ & $\boldsymbol{\varepsilon=1}$ \\
African-American & 2.832 & 4.962 & 8.212 \\
Asian            & 2.202 & 1.887 & -3.577 \\
Caucasian        & 1.253 & 1.549 & 29.933 \\
Hispanic         & 1.626 & 2.042 & -7.642 \\
Native American  & 4.321 & 5.611 & 9.217 \\
Other            & 0.752 & 1.779 & 14.010 \\
\bottomrule
\end{tabular}}
\end{minipage}
\hfill
\begin{minipage}[t]{0.48\linewidth}
\vspace{0pt}
\centering
\resizebox{\linewidth}{!}{%
\begin{tabular}{l|ccc}
\toprule
\multicolumn{4}{c}{$q_4$: median \texttt{decile\_score} by \texttt{c\_charge\_degree}} \\
\multicolumn{4}{c}{(grouped by \texttt{race})} \\
\midrule
& \multicolumn{3}{c}{\textbf{med-d}} \\
\cmidrule(lr){2-4}
\textbf{c\_charge\_degree} & $\boldsymbol{\varepsilon=\infty}$ & $\boldsymbol{\varepsilon=10}$ & $\boldsymbol{\varepsilon=1}$ \\
F & 3.440 & 3.192 & 25.401 \\
M & 1.775 & 2.346 & 14.010 \\
O & 2.034 & 2.743 & -6.324 \\
\bottomrule
\end{tabular}}
\end{minipage}
\end{minipage}
\hfill
\begin{minipage}[t]{0.31\linewidth}
\vspace{0pt}
\centering
\textbf{(b) \texttt{Adult} dataset}

\vspace{0.4em}

\resizebox{\linewidth}{!}{%
\begin{tabular}{l|ccc}
\toprule
\multicolumn{4}{c}{$q_5$: average \texttt{income>50K} by \texttt{sex}} \\
\midrule
& \multicolumn{3}{c}{\textbf{avg-i}} \\
\cmidrule(lr){2-4}
\textbf{sex} & $\boldsymbol{\varepsilon=\infty}$ & $\boldsymbol{\varepsilon=10}$ & $\boldsymbol{\varepsilon=1}$ \\
Female & 0.109 & 0.109 & 0.110 \\
Male   & 0.304 & 0.304 & 0.304 \\
\midrule
\multicolumn{4}{c}{$q_6$: average \texttt{income>50K} by \texttt{race}} \\
\midrule
& \multicolumn{3}{c}{\textbf{avg-i}} \\
\cmidrule(lr){2-4}
\textbf{race} & $\boldsymbol{\varepsilon=\infty}$ & $\boldsymbol{\varepsilon=10}$ & $\boldsymbol{\varepsilon=1}$ \\
Amer-Indian-Eskimo & 0.117 & 0.120 & 0.118 \\
Asian-Pac-Islander & 0.269 & 0.267 & 0.261 \\
Black              & 0.121 & 0.121 & 0.123 \\
Other              & 0.123 & 0.122 & 0.124 \\
White              & 0.254 & 0.254 & 0.254 \\
\bottomrule
\end{tabular}}
\end{minipage}

\end{table*}

\begin{itemize}
    \item If $\hat{\alpha}(x_t) = \texttt{True}$, then $\varphi(D',\hat{D'})$ gains one satisfied soft clause ($x_t$), and $\varphi(D,\hat{D})$ loses one satisfied soft clause ($\neg x_t$ violated).
    \item If $\hat{\alpha}(x_t) = \texttt{False}$, then $\varphi(D',\hat{D'})$ loses one soft clause and $\varphi(D,\hat{D})$ gains one.
\end{itemize}

Thus, the number of satisfied soft clauses under $\hat{\alpha}$ differs by at most $2$ between the two formulas. As a result, the sets of tuples $D_{\hat{\alpha}}$ (tuples with $x_t = \texttt{True}$) used to compute the repair values may differ by at most one tuple, and the database $D$ differs from $D'$ by one tuple as well. Therefore, the symmetric difference
{\scriptsize
$$
|D \setminus D_{\hat{\alpha}}| + |D_{\hat{\alpha}} \setminus D|
\quad \text{vs} \quad
|D' \setminus D_{\hat{\alpha}}| + |D_{\hat{\alpha}} \setminus D'|
$$
}
can differ by at most $2$.

Hence
{\scriptsize
$$
|\Delta(\varphi(D,\hat{D}), \hat{\alpha}) - \Delta(\varphi(D',\hat{D'}), \hat{\alpha})| \leq 2
$$
}
\end{proof}

We are now ready to combine the lemmas into our result that bounds the sensitivity of $\repairsat$.

\begin{proposition}[Sensitivity of $\repairsat$]\label{prop:repairsat-sensitivity}
For a database $D$ and a fairness criterion $\fairnessdef$, the sensitivity of $\repairsat(\fairnessdef,D)$ is at most $2$.
\end{proposition}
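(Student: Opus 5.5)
The plan is to move from the self-join formula to the cross-product formula, where the hard clauses do not depend on the data, and then use a standard ``transfer the optimal assignment'' argument. By \Cref{lem:regular-and-simplified-repair-cost-equivalence},
\[
\repairsat(\fairnessdef,D) = \min_{\alpha \models \mathcal{H}(\crossdb)} \repaircost(\varphi(D,\crossdb),\alpha),
\]
and the same identity holds for $D'$ with $\crossdb'$. This reduction is what makes the comparison possible: the hard clauses $\mathcal{H}(\crossdb)$ depend only on the attribute domains. Hence $\mathcal{H}(\crossdb)=\mathcal{H}(\crossdb')$, and the set of feasible assignments is identical for $D$ and $D'$.

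Next we bound $\repairsat(\fairnessdef,D')$ in terms of $\repairsat(\fairnessdef,D)$. Let $\hat{\alpha}$ be an optimal feasible assignment for $\varphi(D,\crossdb)$, so that $\repaircost(\varphi(D,\crossdb),\hat{\alpha}) = \repairsat(\fairnessdef,D)$. Since $\hat{\alpha}$ is also feasible for $\varphi(D',\crossdb')$, \Cref{lem:simplified-repair-cost-sensitivity} gives
\[
\repairsat(\fairnessdef,D') \le \repaircost(\varphi(D',\crossdb'),\hat{\alpha}) \le \repaircost(\varphi(D,\crossdb),\hat{\alpha}) + 2 = \repairsat(\fairnessdef,D)+2 .
\]
Swapping the roles of $D$ and $D'$, and using an optimal assignment for $\varphi(D',\crossdb')$, gives the symmetric inequality. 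Together these yield $|\repairsat(\fairnessdef,D)-\repairsat(\fairnessdef,D')|\le 2$. Taking the maximum over neighboring pairs, the sensitivity is at most $2$.

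The main obstacle is to check that \Cref{lem:simplified-repair-cost-sensitivity} actually delivers the per-assignment bound of $2$. I would verify it directly from \Cref{def:repair-cost-for-assignment}. Fix $D_R=\{t : \hat{\alpha}(x_t)=\texttt{True}\}$; this set is the same for both databases. Replacing $t'$ by $t''$ changes $D$ by one removal and one insertion, and each can change $|D\symdif D_R|$ by at most $1$, since $||A\symdif C|-|B\symdif C||\le|A\symdif B|=2$. One subtlety is bag semantics. Identical tuples give parallel soft clauses, so the formula should be read with multiplicities, and each soft clause counts as one unit of cost. A second subtlety is that the minimum must be taken over the common feasible set; the cross-product reduction is exactly what guarantees this, and it is the point where the self-join formulation alone would fail, because $\jdb$ and $\jdb'$ may differ.
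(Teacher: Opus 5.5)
Your proof is correct and follows the paper's route. You use the self-join/cross-product equivalence lemma so that the hard clauses, and hence the feasible assignments, are common to $D$ and $D'$; you then apply the per-assignment bound of $2$ and transfer an optimal assignment in each direction. Your chain evaluates $D$'s optimal assignment on $D'$ to bound $\repairsat(\fairnessdef,D')$, which uses optimality correctly; the paper's written chain instead compares $\alpha_D$ with $\alpha_{D'}$ on $D'$ in the wrong direction, and should read $\repairsat(\fairnessdef,D)\le\repaircost(\varphi(D,\crossdb),\alpha_{D'})\le\repairsat(\fairnessdef,D')+2$.
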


\begin{proof}[Proof of \Cref{prop:repairsat-sensitivity}]
Let there be $D \neighbor D'$ neighboring databases such that they differ in one tuple by replacement. Without loss of generality, assume that $D'$ contains one more occurrence of $t''$ and one less occurrence of $t'$ than $D$.

Given a fairness criterion of the form $\fairnessdef = \protectedset \independent \responseset \mid \admissibleset$, let $\jdb$ and $\hat{D}$ be the join and cross-product databases, and similarly for $D'$.

Since $\hat{D}=\hat{D'}$, (the cross-product database depends only on domains), the set of hard clauses is identical in both formulas $\varphi(D,\hat{D})$ and $\varphi(D',\hat{D'})$. We can also notice that the only difference between these formulas lies in at most two soft clauses, as formulated in the proof of \Cref{lem:simplified-repair-cost-sensitivity}.

By \Cref{lem:regular-and-simplified-repair-cost-equivalence},
{\scriptsize
$$
\repairsat(\fairnessdef,D)
= \min_{\alpha \models \mathcal{H}(\hat D)} \Delta(\varphi(D,\hat D),\alpha),
\quad
\repairsat(\fairnessdef,D')
= \min_{\alpha \models \mathcal{H}(\hat D)} \Delta(\varphi(D',\hat D),\alpha)
$$
}
Let
{\scriptsize
$$\alpha_D \in \arg\min_{\alpha \models \mathcal{H}(\hat D)} \Delta(\varphi(D,\hat D),\alpha)$$}

and
{\scriptsize
$$\alpha_{D'} \in \arg\min_{\alpha \models \mathcal{H}(\hat D)} \Delta(\varphi(D',\hat D),\alpha)$$}
be optimal assignments.

By \Cref{lem:simplified-repair-cost-sensitivity}, for every feasible $\alpha$, it holds that
{\scriptsize
$$
\left|\Delta(\varphi(D,\hat D),\alpha)-\Delta(\varphi(D',\hat D),\alpha)\right| \leq 2
$$
}

Applying this bound with $\alpha=\alpha_D$ and using optimality of the assignments, we get
{\scriptsize
\begin{align*}
\repairsat(\fairnessdef,D) &= \Delta(\varphi(D,\hat D),\alpha_D) \\
&\leq \Delta(\varphi(D',\hat D),\alpha_D) + 2 \\
&\leq \Delta(\varphi(D',\hat D),\alpha_{D'}) + 2 \\
&= \repairsat(\fairnessdef,D') + 2    
\end{align*}
}

By symmetry (if we swap $D$ and $D'$), we also have
$\repairsat(\fairnessdef,D') \leq \repairsat(\fairnessdef,D) + 2$.
Therefore, we get
{\scriptsize
$$
\sens_{\repairsat(\fairnessdef,D)} = \max_{D' \neighbor D} \left|\repairsat(\fairnessdef,D)-\repairsat(\fairnessdef,D')\right| \leq 2
$$
}
\end{proof}

\begin{proof}[Proof of \Cref{prop:repairsat-properties} Part 4]
According to \Cref{prop:repairsat-sensitivity}, for a single fairness criterion it holds that the difference between the soft clauses in the CNFs for two neighboring databases is at most one, and also that the difference in the cost of repair is at most $2$. So for a set of fairness criteria $\constraints$ it holds that the difference between the soft clauses in the CNFs is at most $|\constraints|$.

By repeating the proof of \Cref{prop:repairsat-sensitivity}, we can obtain that the difference in the cost of repair between two neighboring databases for a set of fairness criteria $\constraints$ is at most $2|\constraints|$.
\end{proof}

\section{Complexity Analysis of the Algorithms}\label{sec:complexity}

\paratitle{Complexity of \Cref{alg:tvd-proxy}}
The complexity of \Cref{alg:tvd-proxy} is $\mathcal{O}\left(|\constraints|n\right)$. 
For every $\fairnessdef$, the algorithm first computes the joint and the marginal empirical probabilities. Assuming the counts are stored in hash tables, so every lookup and update takes $\mathcal{O}(1)$ time, this whole part takes $\mathcal{O}(n)$ time. 
Then, the algorithm computes the TVD and divides into two cases. In the unconditional case it goes over all the tuples in $D$ once when calculating the sum, so the whole computation takes $\mathcal{O}(n)$ time. And in the conditional case it goes over all the values $\admissiblevalue$ in $\admissibleset$ and calculates the sum from the unconditional case for it. This computation takes $\mathcal{O}(n + n) = \mathcal{O}(n)$ time. Updating the cumulating sum takes $\mathcal{O}(1)$ time.
Finally, after the algorithm finishes going over all $\fairnessdef$ in $\constraints$, it adds Laplace noise to the resulting cumulative sum and returns it, which takes $\mathcal{O}(1)$.

\paratitle{Complexity of \Cref{alg:repairsat}}
The complexity of \Cref{alg:repairsat} is $\mathcal{O}\left(|\constraints|\left(n^4 + SAT\right)\right)$, where $\mathcal{O}(SAT)$ is the complexity of the SAT solver, due to the following analysis.
For every $\fairnessdef$, the algorithm first computes the self-join $\jdb$ of $D$ on $\admissibleset$ and goes over it to add the soft clauses to $\varphi$. 
It takes $\mathcal{O}(n)$ time to calculate the per-$\admissiblevalue$ projection sets, so it takes $\mathcal{O}(n^2)$ overall to perform the join and iterate over it to compute the soft clauses. 
Then the algorithm computes the self-join $C$ of $\jdb$ on $\admissibleset$. Since each copy of $\jdb$ contains $\mathcal{O}\left(\left|\protectedset_\admissiblevalue\right| \times \left|\responseset_\admissiblevalue\right|\right)$ tuples for any $\admissiblevalue$, this takes $\mathcal{O}\left(\sum_{\admissiblevalue \in \admissibleset}\left|\protectedset_{\admissiblevalue}\right|^2 \left|\responseset_{\admissiblevalue}\right|^2\right) = \mathcal{O}(n^4)$ time.
Finally, the algorithm runs a solver to get an assignment for the constructed $\varphi$, computes the symmetric difference and updates the cumulative sum. This whole part takes $\mathcal{O}(SAT + n)$.
After the algorithm finishes iterating over all $\fairnessdef \in \constraints$, it adds Laplace noise to the resulting sum and returns it, which takes $\mathcal{O}(1)$ time.

\paratitle{Complexity of \Cref{alg:topk-contribution}}\label{contribution-complexity}
The complexity of \Cref{alg:topk-contribution} is $\mathcal{O}\left(|\constraints|n \log n\right)$. 
For every $\fairnessdef$, the algorithm first computes the joint and the marginal empirical probabilities, which takes $\mathcal{O}(n)$ time. 
Then, the algorithm computes the marginal differences, taking $\mathcal{O}(n)$ time. 
Sorting of the marginal differences for all the tuples and finding the $k$ largest ones takes $\mathcal{O}(n \log k)$ time if we are using a size-$k$ heap. Summing the $k$ largest marginal differences takes $\mathcal{O}(k)$ time and updating the cumulative sum takes $\mathcal{O}(1)$ time. Therefore, this whole part takes $\mathcal{O}(n \log k + k) = \mathcal{O}(n \log n)$. 
After going over all $\fairnessdef \in \constraints$, it adds Laplace noise to the cumulative sum and returns it, which takes $\mathcal{O}(1)$ time.

\begin{table}[t]
\centering
\caption{Exploratory queries on the \texttt{Adult} and \texttt{Compas} datasets.}
\label{tab:eda-queries}
\small
\setlength{\tabcolsep}{4pt}
\renewcommand{\arraystretch}{1.03}

\begin{tabular}{@{}p{0.13\linewidth}p{0.82\linewidth}@{}}
\toprule
\textbf{Dataset} & \textbf{Query} \\
\midrule

\multirow{24}{*}{\texttt{Compas}}
& $q_1$: SELECT \texttt{race}, PERCENTILE\_CONT(0.5) \\
& \hspace{2.0em} WITHIN GROUP (ORDER BY \texttt{med-d}) AS \texttt{med-d} \\
& \hspace{1.3em} FROM (SELECT \texttt{c\_charge\_degree}, \texttt{race}, \\
& \hspace{4.6em} PERCENTILE\_CONT(0.5) \\
& \hspace{2.0em} WITHIN GROUP (ORDER BY \texttt{decile\_score}) AS \texttt{med-d} \\
& \hspace{2.7em} FROM Compas GROUP BY \texttt{c\_charge\_degree}, \texttt{race}) t \\
& \hspace{1.3em} GROUP BY \texttt{race}; \\
\cmidrule(l){2-2}

& $q_2$: SELECT \texttt{c\_charge\_degree}, PERCENTILE\_CONT(0.5) \\
& \hspace{2.0em} WITHIN GROUP (ORDER BY \texttt{med-d}) AS \texttt{med-d} \\
& \hspace{1.3em} FROM (SELECT \texttt{race}, \texttt{c\_charge\_degree}, \\
& \hspace{4.6em} PERCENTILE\_CONT(0.5) \\
& \hspace{2.0em} WITHIN GROUP (ORDER BY \texttt{decile\_score}) AS \texttt{med-d} \\
& \hspace{2.7em} FROM Compas GROUP BY \texttt{race}, \texttt{c\_charge\_degree}) t \\
& \hspace{1.3em} GROUP BY \texttt{c\_charge\_degree}; \\
\cmidrule(l){2-2}

& $q_3$: SELECT \texttt{age\_cat}, AVG(\texttt{avg-d}) AS \texttt{avg-d} \\
& \hspace{1.3em} FROM (SELECT \texttt{race}, \texttt{age\_cat}, AVG(\texttt{decile\_score}) AS \texttt{avg-d} \\
& \hspace{2.7em} FROM Compas GROUP BY \texttt{race}, \texttt{age\_cat}) t \\
& \hspace{1.3em} GROUP BY \texttt{age\_cat}; \\
\cmidrule(l){2-2}

& $q_4$: SELECT \texttt{race}, AVG(\texttt{avg-d}) AS \texttt{avg-d} \\
& \hspace{1.3em} FROM (SELECT \texttt{age\_cat}, \texttt{race}, AVG(\texttt{decile\_score}) AS \texttt{avg-d} \\
& \hspace{2.7em} FROM Compas GROUP BY \texttt{age\_cat}, \texttt{race}) t \\
& \hspace{1.3em} GROUP BY \texttt{race}; \\
\midrule

\multirow{5}{*}{\texttt{Adult}}
& $q_5$: SELECT \texttt{sex}, AVG(\texttt{income>50K}) AS \texttt{avg-i} \\
& \hspace{1.3em} FROM Adult GROUP BY \texttt{sex}; \\
\cmidrule(l){2-2}

& $q_6$: SELECT \texttt{race}, AVG(\texttt{income>50K}) AS \texttt{avg-i} \\
& \hspace{1.3em} FROM Adult GROUP BY \texttt{race}; \\
\bottomrule
\end{tabular}
\end{table}

\subsection{Case Study: Query Results}

We ran the exploratory queries from \Cref{tab:eda-queries} both without privacy considerations and with privacy budgets $\varepsilon = 1$ and $\varepsilon = 10$. For the \texttt{Adult} dataset, the sensitivity of each query is given by 1 divided by the group size, since these queries compute the average of the binary attribute \texttt{income>50K}. Similarly, for the \texttt{Compas} dataset, the sensitivity is 9 divided by the group size for queries that compute averages (since the range of \texttt{decile\_score} is from 1 to 10), and 10 for queries computing medians. The results of the queries are shown in \Cref{tab:query-results}, and they reveal clear disparities across protected groups in both \texttt{Adult} and \texttt{Compas} datasets.

\subsection{Measure Drill-Down}\label{sec:drill-down}

\paratitle{Proxy faithfulness of $\mutualproxytvd$}
Following \Cref{fig:mi-proxies-comparison}, \Cref{fig:experiment-4} shows a comparison of $\mutual$ and $\mutualproxytvd$ values for every criterion from \Cref{tab:fairness-criteria}, with privacy budget $\varepsilon=1$. For every dataset, $\mutualproxytvd$ closely tracks $\mutual$, with an average Kendall's tau correlation~\cite{kendall1938new} being $0.92$. Specifically, for each individual dataset, an increase in $\mutual$ corresponds to an increase in $\mutualproxytvd$, and vice versa, as we would expect from a good proxy. However, this proportionality does not extend across datasets, since for the \texttt{Healthcare} and \texttt{Compas} datasets the same value of $\mutual$ ($0.09$) corresponds to different values of $\mutualproxytvd$. 
Within each dataset, we can further glean the level of unfairness for a criterion by comparing the values of the same measure for it and for another criterion. For example, for \texttt{IPUMS-CPS}, both measures assign higher values to criterion $4$ ($\texttt{HEALTH} \independent \texttt{INCTOT} \mid \texttt{AGE}$) than to criterion $3$ ($\texttt{HEALTH} \independent \texttt{MARST} \mid \texttt{AGE}$), meaning that the predicted unfairness for criterion $4$ is larger. Intuitively, given a person’s age, their income depends more on their health than on marital status.

\begin{figure*}[ht]
  \centering
  \begin{subfigure}[b]{0.8\textwidth}
    \centering
    \includegraphics[width=\linewidth]{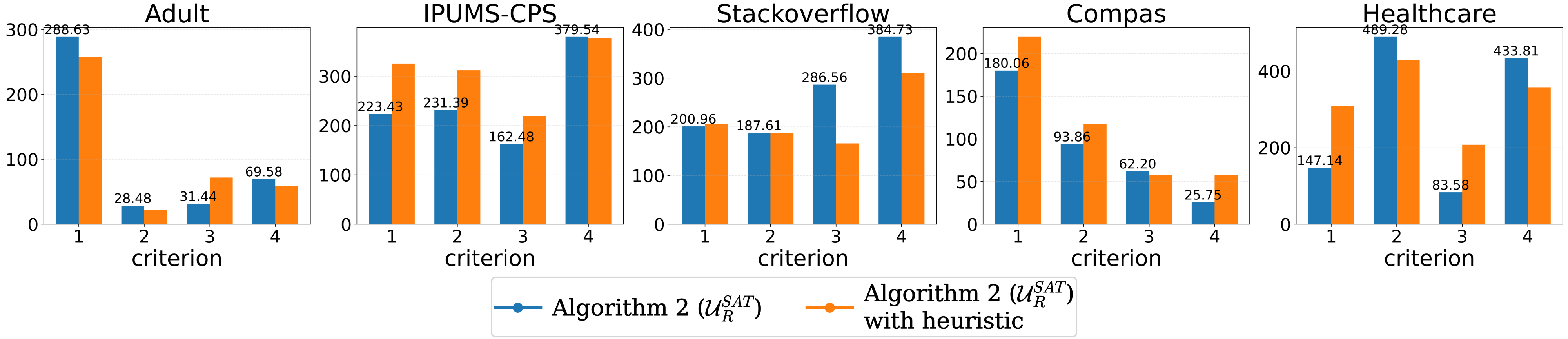}
    \caption{Values of $\repairsat$ with and without the heuristic.}
    \label{fig:experiment-7-1}
  \end{subfigure}
  \hfill
  \begin{subfigure}[b]{0.8\textwidth}
    \centering
    \includegraphics[width=\linewidth]{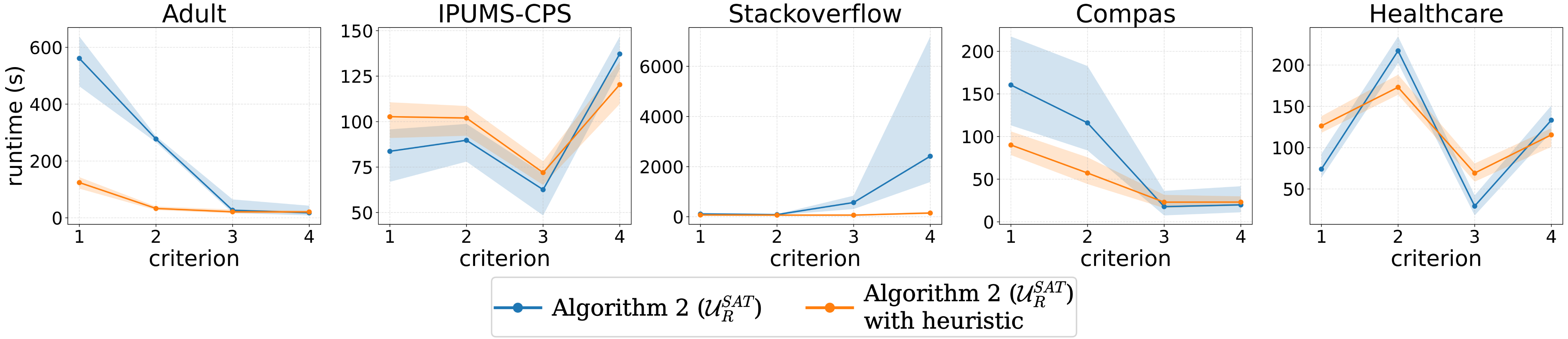}
    \caption{Runtimes of $\repairsat$ with and without the heuristic.}
    \label{fig:experiment-7-2}
  \end{subfigure}
  \caption{Comparison of $\repairsat$ (computed by \Cref{alg:repairsat}) with and without the heuristic.}
  \label{fig:repairsat-repairsatchunked-comparison}
\end{figure*}

\paratitle{Effect of $k$ on $\contribution$}
\Cref{fig:experiments-5-6} depicts the effect of the parameter $k$ on the value and relative $L1$ error of \Cref{alg:topk-contribution} (computing $\contribution$).
In \Cref{fig:experiment-5}, we plot the value of \Cref{alg:topk-contribution} as $k$ increases, with infinite privacy budget, $\varepsilon=\infty$. 
We do so since this experiment examines the effect of $k$ on the measure values, which would be highly distorted by noise, due to the dependency of the noise scale on $k$ (see \Cref{prop:contribution-properties}). 
Overall, as expected, the value is monotone non-decreasing in $k$ because, as $k$ grows, we add an increasing number of non-negative $\margdiff$ values to the resulting sum.
Almost all datasets reach a `plateau' starting from some value of $k$, meaning that a subset of tuples carries most of the $\margdiff$ of the dataset, while the remaining tuples contribute only marginally as $k$ increases. For example, in \texttt{Healthcare}, most tuples have large $\margdiff$ values, and thus $\contribution$ continues to grow almost linearly in $k$. In contrast, for \texttt{IPUMS-CPS}, many tuples have small $\margdiff$ values, so 
the trend does not plateau and remains noisy.

\Cref{fig:experiment-6} depicts the relative $L1$ error as a function of $k$, due to noise.
The relative error is quite large due the sensitivity of $\contribution$ being directly proportional to $k$ and $|\constraints|$, and thus also the magnitude of the added DP noise (\Cref{prop:contribution-properties}). Furthermore, the error is larger and exhibits greater variability for datasets with many distinct values, such as \texttt{IPUMS-CPS}. 
This suggests that a larger budget may be required for $\contribution$ and that it may be used as a secondary measure 
for identifying the cumulative effect of outlier tuples. 
We recognize that selecting an optimal value of $k$ for $\contribution$ is a challenging problem and leave this for future work.

\begin{figure}[h]
  \begin{subfigure}[b]{\linewidth}
    \includegraphics[width=\linewidth]{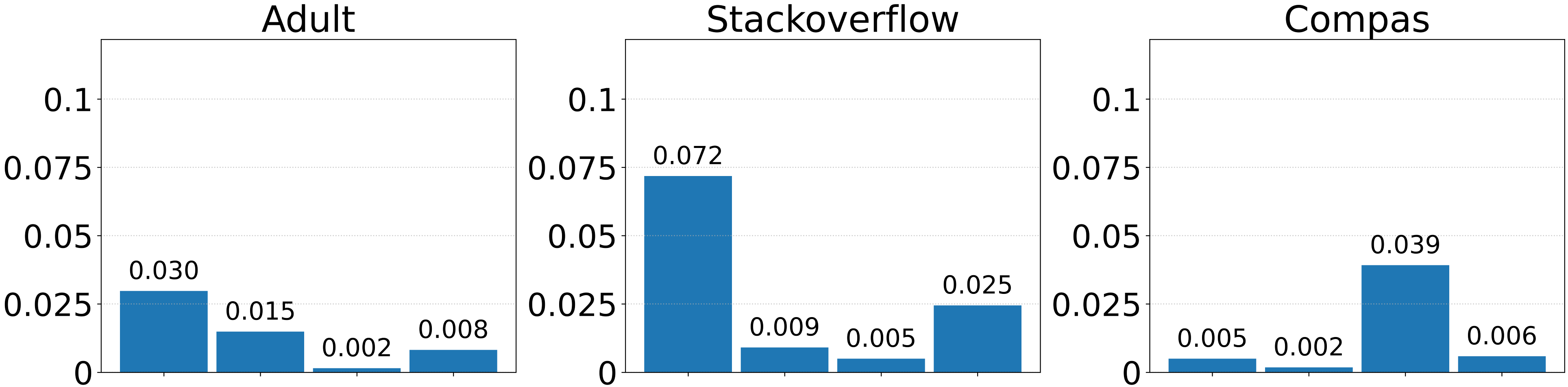}
    \caption{Values of $\mutualproxytvd$.}
    \label{fig:tvd-contribution-comparison-1}
  \end{subfigure}
  \begin{subfigure}[b]{\linewidth}
    \includegraphics[width=\linewidth]{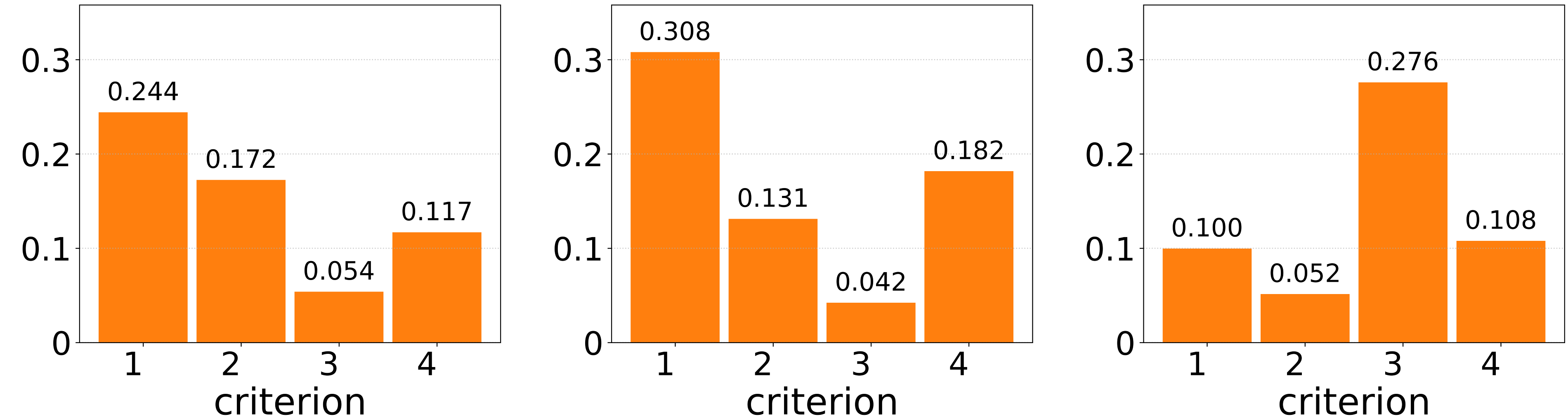}
    \caption{Values of $\contribution$.}
    \label{fig:tvd-contribution-comparison-2}
  \end{subfigure}
  \caption{Comparison of $\mutualproxytvd$ (computed by \Cref{alg:tvd-proxy}) and $\contribution$ (computed by \Cref{alg:topk-contribution}) without privacy considerations.}
    \label{fig:tvd-contribution-comparison}
\end{figure}

\paratitle{Comparing $\mutualproxytvd$ and $\contribution$}\label{sec:tvd-contribution-comparison}
In \Cref{fig:tvd-contribution-comparison}, we compared $\mutualproxytvd$ and $\contribution$ without privacy considerations. While the values of $\mutualproxytvd$ are smaller than the values of $\contribution$ for all criteria, it can be noticed that $\mutualproxytvd$ and $\contribution$ exhibited exactly the same trends in values. That is, higher values of $\mutualproxytvd$ correspond to higher values of $\contribution$, and vice versa. Moreover, relative differences in the values of $\mutualproxytvd$ are reflected by comparable relative differences in the values of $\contribution$.

\begin{figure*}
    \centering
    \includegraphics[width=0.8\textwidth]{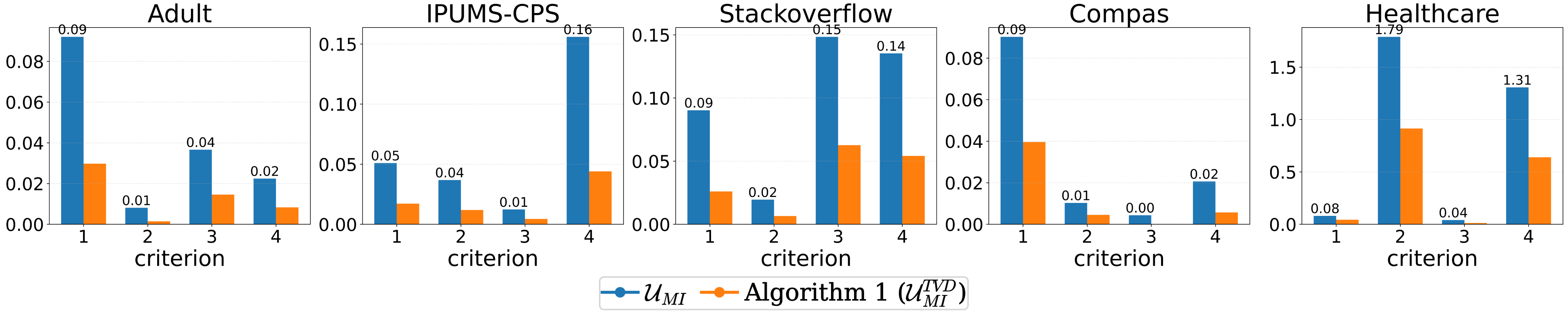}
    \caption{Faithfulness of noisy $\mutualproxytvd$ (computed by \Cref{alg:tvd-proxy}) to noisy $\mutual$ over different datasets and fairness criteria from~\Cref{tab:fairness-criteria}.}
    \label{fig:experiment-4}
    \vspace{-2mm}
\end{figure*}

\begin{figure*}
  \centering
  \begin{subfigure}[b]{0.8\textwidth}
    \centering
    \includegraphics[width=\linewidth]{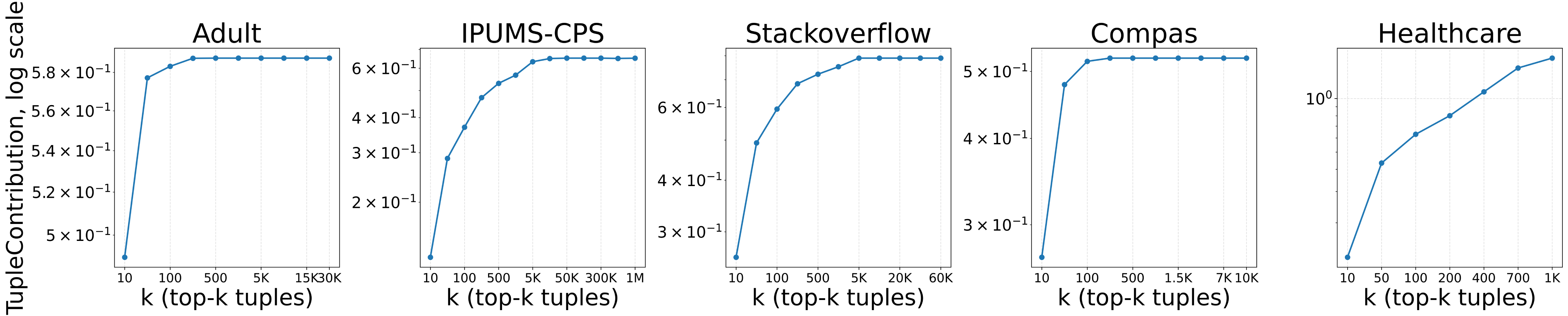}
    \caption{Non-noisy value of $\contribution$ as function of $k$.}
    \label{fig:experiment-5}
  \end{subfigure}
  \hfill
  \begin{subfigure}[b]{0.8\textwidth}
    \centering
    \includegraphics[width=\linewidth]{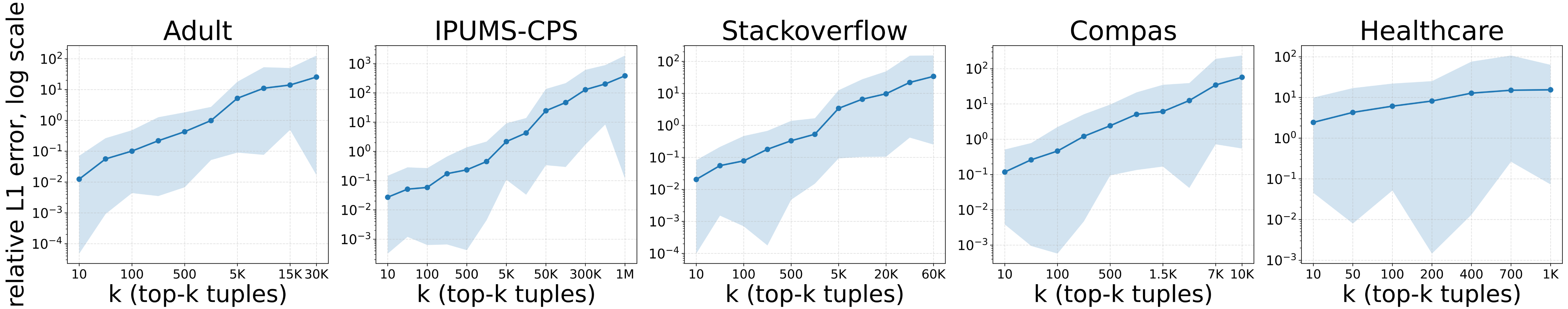}
    \caption{Relative $L1$ error of $\contribution$ as function $k$.}
    \label{fig:experiment-6}
  \end{subfigure}
  \caption{Effect of $k$ on $\contribution$ (computed by \Cref{alg:topk-contribution}) in terms of true value and relative $L1$ error for each dataset with its four criteria.}
  \label{fig:experiments-5-6}
\end{figure*}

\paratitle{Chunk heuristic effect on $\repairsat$}
In \Cref{fig:repairsat-repairsatchunked-comparison}, we compared \Cref{alg:repairsat} with and without the heuristic (i.e., computing $\repairsat$ with chunks of size $100$, and without them), with a budget of $\varepsilon=1$. \Cref{alg:repairsat} with the heuristic is noticeably faster than without it, reducing the total runtime by $56.26\%$ across all datasets. The values for the heuristic almost exactly mirror the trend of the values for the true $\repairsat$.

For the rest of this section we will denote $\repairsat$ with the chunking heuristic as $\repairsatchunked$. We will prove that the sensitivity bound of $\repairsatchunked$ is the same as the sensitivity bound of $\repairsat$.

\begin{proposition}[Sensitivity of $\repairsatchunked$]\label{prop:repairsatchunked-sensitivity}
For a database $D$ and a fairness criterion $\fairnessdef$, the sensitivity of $\repairsatchunked(\fairnessdef,D)$ is at most $2$.
\end{proposition}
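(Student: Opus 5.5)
Write $D = D^{(1)} \uplus \dots \uplus D^{(m)}$ for the partition of $D$ into consecutive chunks of $100$ tuples. By the description of the heuristic, $\repairsatchunked(\fairnessdef,D) = \sum_{j=1}^{m} \repairsat(\fairnessdef, D^{(j)})$. The plan is to reduce the sensitivity of the chunked measure to the single-criterion bound of \Cref{prop:repairsat-sensitivity}, applied to exactly one chunk.

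\textbf{Step 1: locate the changed tuple.} Take $D \neighbor D'$, so $D' = D \setminus \{t'\} \cup \{t''\}$. By the standing assumption that tuple order is preserved in neighboring databases (bounded DP), $t''$ occupies the position previously held by $t'$. Both databases have size $n$, so they yield the same number $m$ of chunks with identical boundaries. Hence there is a unique index $j^\ast$ with $D'^{(j)} = D^{(j)}$ for every $j \neq j^\ast$. Moreover, $D'^{(j^\ast)}$ and $D^{(j^\ast)}$ have equal size and differ in exactly one tuple, so $D'^{(j^\ast)} \neighbor D^{(j^\ast)}$ in the sense of \Cref{def:neighboring-databases}.

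\textbf{Step 2: telescope.} For $j \neq j^\ast$ the chunks are identical, and $\repairsat$ is a deterministic function of its input (a minimum over feasible assignments, \Cref{def:repair-sat-proxy}). So those summands cancel exactly, giving
\[
\bigl|\repairsatchunked(\fairnessdef,D') - \repairsatchunked(\fairnessdef,D)\bigr| = \bigl|\repairsat(\fairnessdef,D'^{(j^\ast)}) - \repairsat(\fairnessdef,D^{(j^\ast)})\bigr|.
\]

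\textbf{Step 3: apply the single-chunk bound.} \Cref{prop:repairsat-sensitivity} holds for an arbitrary database, including one of size $100$ (or a smaller final chunk). Applying it to the neighboring pair from Step 1 bounds the right-hand side by $2$. Taking the maximum over all $D' \neighbor D$ gives $\sens_{\repairsatchunked} \le 2$.

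\textbf{Main obstacle.} The delicate point is Step 1: verifying that a single replacement perturbs only one chunk. This relies entirely on the order-preservation assumption. Under an add/remove notion of neighborhood, all subsequent chunk boundaries would shift and the argument would break, so I would state this dependence explicitly.

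A second point to check is that \Cref{prop:repairsat-sensitivity} uses the cross-product database $\crossdb$. That object is defined from attribute domains, and the domains must be the same for $D^{(j^\ast)}$ and $D'^{(j^\ast)}$. I would fix the global domains of $\protectedset, \responseset, \admissibleset$ when forming $\crossdb$ for each chunk, so that the hard clauses coincide and the proof of \Cref{prop:repairsat-sensitivity} goes through verbatim.
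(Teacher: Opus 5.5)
Your proposal is correct and follows the paper's proof essentially step for step. Order preservation under replacement confines the change to a single chunk, the remaining summands cancel, and \Cref{prop:repairsat-sensitivity} bounds the surviving difference by $2$. Your added remark about fixing the global attribute domains when forming $\crossdb$ for each chunk is a sensible precision that the paper leaves implicit.
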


\begin{proof}
Let there be $D \neighbor D'$ neighboring databases such that they differ in one tuple by replacement. Let $\mathsf{Chunk}(\cdot)$ be the chunking procedure used by $\repairsatchunked$, which consists of sorting and then partitioning into consecutive blocks of fixed size, such that we assume that all duplicates of any tuple are placed into the same block, the blocks are disjoint and their union equals the full database. We write
{\scriptsize
$$
\mathsf{Chunk}(D)=\{D^{(1)},\dots,D^{(m)}\}, \qquad 
\mathsf{Chunk}(D')=\{{D'}^{(1)},\dots,{D'}^{(m)}\}
$$
}

By the definition of $\repairsatchunked$, we assume that any neighbor $D'$ of $D$ has the same tuple ordering as $D$. Therefore, the replacement of a single tuple can affect the contents of at most one chunk, which is the corresponding chunk in $D'$. Hence, there exists an index $j\in\{1,\dots,m\}$ such that
{\scriptsize
$$
D^{(i)} = {D'}^{(i)} \ \text{for all } i\neq j,
\qquad
D^{(j)} \neighbor {D'}^{(j)}
$$
}

By the definition of $\repairsatchunked$, it holds that
{\scriptsize
$$
\repairsatchunked(\fairnessdef,D) = \sum_{i=1}^{m} \repairsat(\fairnessdef,D^{(i)}),
\qquad
\repairsatchunked(\fairnessdef,D') = \sum_{i=1}^{m} \repairsat(\fairnessdef,{D'}^{(i)})
$$
}

Therefore,
{\scriptsize
\begin{align*}
\Big|\repairsatchunked(\fairnessdef,D)-\repairsatchunked(\fairnessdef,D')\Big|
&= \left|\sum_{i=1}^{m}\repairsat(\fairnessdef,D^{(i)})-\sum_{i=1}^{m}\repairsat(\fairnessdef,{D'}^{(i)})\right|\\
&= \left|\repairsat(\fairnessdef,D^{(j)})-\repairsat(\fairnessdef,{D'}^{(j)})\right|
\end{align*}
}
Since $D^{(j)} \neighbor {D'}^{(j)}$, by \Cref{prop:repairsat-sensitivity}) we get
{\scriptsize
$$
\left|\repairsat(\fairnessdef,D^{(j)})-\repairsat(\fairnessdef,{D'}^{(j)})\right| \le 2.
$$
}
Combining the above yields
{\scriptsize
$$
\left|\repairsatchunked(\fairnessdef,D)-\repairsatchunked(\fairnessdef,D')\right| \le 2
$$
}

Therefore, we get
{\scriptsize
$$
\sens_{\repairsatchunked(\fairnessdef,D)} = \max_{D' \neighbor D} \left|\repairsatchunked(\fairnessdef,D)-\repairsatchunked(\fairnessdef,D')\right| \leq 2
$$
}
\end{proof}

\fi

\end{document}
\endinput